\newtheorem{lemma}{Lemma}
\newtheorem{theorem}{Theorem}
\newtheorem{corollary}{Corollary}
\theoremstyle{definition}
\newtheorem{definition}{Definition}
\newcommand{\ASY}{{\sc Asynch}}
\newcommand{\FSY}{{\sc Fsynch}}
\newcommand{\SSY}{{\sc Ssynch}}
\newcommand{\RSY}{{\sc Rsynch}}
\newcommand{\RR}{{\sc RoundRobin}}
\newcommand{\LU}{{\mathcal{LUMI}}} 
\newcommand{\LUM}{{\mathcal{LUMI}}} 
\newcommand{\FS}{{\mathcal{FSTA}}} 
\newcommand{\FC}{{\mathcal{FCOM}}} 
\newcommand{\OB}{{\mathcal{OBLOT}}} 
\newcommand{\N}{{\rm I\kern-.22em N}} 
\newcommand{\Z}{{\sf Z\kern-.42em Z}} 
\newcommand{\R}{R}
\newcommand{\CoP}{{\tt pred}}
\newcommand{\CoS}{{\tt suc}}
\newcommand{\LK}{{\mathit{Look}}}
\newcommand{\CP}{{\mathit{Comp}}}
\newcommand{\M}{{\mathit{Move}}}
\newcommand{\LC}{{\mathit{LC}}}
\newcommand{\CM}{{\mathit{CM}}}
\newcommand{\LCM}{{\mathit{LCM}}}
\newcommand{\T}{T}
\newcommand{\CLCv}{{\tt CLCv}}
\newcommand{\MLCv}{{\tt MLCv}}
\newcounter{Codeline}
\newcommand{\Newcodeline}{\setcounter{Codeline}{1}}
\newcommand{\Cl}{{\theCodeline}: \addtocounter{Codeline}{1}}
\newcommand{\crm}{\\}
\author[1]{Paola Flocchini}
\author[2]{Nicola Santoro}
\author[3]{Yuichi Sudo}
\author[3]{Koichi Wada\thanks{Corresponding Author: wada@hosei.ac.jp}}
\affil[1]{University of Ottawa, Canada}
\affil[2]{Carleton University, Canada}
\affil[3]{Hosei University, Japan}
\title{On Asynchrony, Memory, and Communication: Separations and Landscapes \thanks{{This research was partly supported by NSERC through the Discovery Grant program, by JSPS KAKENHI No. 20H04140, 20KK0232, 20K11685, 21K11748, and by JST FOREST Program JPMJFR226U.}}} 
\date{}
\begin{document}
\maketitle
\begin{abstract}
Research on distributed computing by a team of identical mobile computational entities, called robots, operating in a Euclidean space in $\mathit{Look}$-$\mathit{Compute}$-$\mathit{Move}$ 
($\mathit{LCM}$) cycles, has recently focused on better understanding how the computational power of robots depends on the interplay between their internal capabilities (i.e., persistent memory, communication), captured by the four standard
computational models ($\OB$, $\LU$, $\FS$, and $\FC$)
and the conditions imposed by the external environment, controlling the activation of the robots and their synchronization of their activities, perceived and modeled as an
adversarial scheduler.

We consider a set of adversarial asynchronous schedulers ranging from the classical {\em semi-synchronous}    (\SSY) and   {\em fully asynchronous}  (\ASY) settings, including  schedulers 
(emerging when studying the atomicity of the combination of operations in the $\mathit{LCM}$ cycles) whose adversarial power is in between those two.
We ask the question: what is the computational relationship between a model $M_1$ under adversarial scheduler $K_1$ ($M_1(K_1)$) and a model $M_2$ under scheduler $K_2$ ($M_2(K_2)$) ? For example, are the robots in $M_1(K_1)$ more powerful (i.e., they can solve more problems) than those in $M_2(K_2)$? 

We answer all these questions by providing, through 
cross-model analysis,
 a complete characterization of the computational relationship between the power of
the four models of robots under the considered asynchronous schedulers.
In this process, we also provide qualified answers to several open questions, including the outstanding one on the proper dominance of SSYNCH over ASYNCH in the case of unrestricted visibility.
\end{abstract}

\section{Introduction}\label{sec:Intro}

\subsection{Background}


\noindent {\bf Robot Models. }\ \
Since the seminal work of Suzuki and Yamashita~\cite{SY}, the studies of the computational issues arising in distributed systems of mobile computational entities, called {\em robots}, operating in a Euclidean space have focused on  
 identifying the minimal assumptions 
on {\em internal} capabilities of the robots (e.g., persistent memory, communication) and {\em external}
 conditions of the system  (e.g., synchrony, activation scheduler) 
that allow the entities to perform basic tasks and collectively solve given problems.

Endowed with computational, visibility and motorial capabilities,
the robots are  anonymous (\emph{i.e.}, indistinguishable from each other), uniform (\emph{i.e.}, run the same algorithm), and disoriented (\emph{i.e.}, they might not agree on a common coordinate system). Modeled as mathematical points in the 2D Euclidean plane in which they can freely move,
they operate in
$\mathit{Look}$-$\mathit{Compute}$-$\mathit{Move}$ ($\mathit{LCM}$) cycles. In each cycle, a robot ``$\mathit{Look}$s '' at its surroundings obtaining (in its current local coordinate system) a snapshot indicating the locations of the other robots. Based on this information, the robot executes its algorithm to ``$\mathit{Compute}$'' a destination, and then ``$\mathit{Move}$s'' towards the computed location.

In the (weakest and de facto) standard model, $\OB$, the robots are also
oblivious (\emph{i.e.}, they have no persistent memory of the past) and  silent (\emph{i.e.}, they have no explicit means of communication).  
Extensive investigations have been carried out to understand the computational 
 limitations and powers of  $\OB$  robots for basic coordination tasks such as Gathering (e.g., \cite{AP,AOSY,BDT,CDN,CFPS,CP,FPSW05,ISKIDWY,SY}), Pattern Formation (e.g., \cite{FPSW08,FYOKY,SY,YS,YUKY}), Flocking (e.g., \cite{CG,GP,SIW}); see 
 also the monograph \cite{FPS19} for a general account.

 The absence of persistent memory and the lack of explicit communication  critically restrict the computational capabilities 
of the $\OB$ robots, and limit the solvability of problems. 
These limitations are removed, to some extent, in the $\LU$ model of {\em luminous}
robots.
In this model, each robot is equipped with a constant-bounded amount of persistent\footnote{i.e.,  it is not automatically reset at the end of a cycle.}  memory, called {\em light},
whose value, called {\em color}, 
is visible to all robots. In other words, luminous robots can both
remember and communicate, albeit at a very limited level.
Since its introduction in  \cite{DFPSY}, the model has been the subject of several investigations focusing on the
 design of algorithms and the feasibility of problems for $\LU$ robots (e.g. \cite{sssbhagat17,DFPSY,LFCPSV,FSVY,HDT,OWD,OWK,SABM18,TWK,V}; see Chapter 11 of \cite{FPS19} for a recent survey).
 An important result is that, even if so limited, the simultaneous presence  of
 both persistent memory and communication renders
luminous robots strictly more powerful than oblivious robots \cite{DFPSY}.
This has in turns opened the question on the individual computational power of the two internal capabilities, memory and communication,
and motivated the investigations on 
two sub-models of $\LU$: the \textit{finite-state} robots denoted as $\FS$, where 
the robots have a constant-size persistent memory but are silent, 
and the \textit{finite-communication} robots denoted as $\FC$,  where robots can communicate a constant number 
of bits but are oblivious (e.g., see \cite{apdcm,BFKPSW22,FSVY,FSW19,OWD,OWK}).  

 

\noindent {\bf A/Synchrony.}\ \
All these studies in all those models have brought to light the crucial role played by two interrelated {\em external} factors:
 the level of synchronization and the activation schedule provided by the system.
 Like in other types of distributed computing systems, there are two different settings, the 
synchronous and the asynchronous ones.

In the {\em synchronous} (also called  {\em semi-synchronous})   (\SSY) setting, introduced in \cite{SY}, time   is divided into discrete 
intervals, called {\em rounds}. In each round, an arbitrary but nonempty subset 
of the robots is activated, and they simultaneously perform
exactly one  $\LK$-$\CP$-$\M$ cycle. The selection of which robots are 
activated at a given round is made by an adversarial scheduler, constrained only 
to be fair, i.e., every robot is activated infinitely often. 
Weaker synchronous adversaries have also been introduced and investigated. The most important and 
extensively studied is the {\em fully-synchronous} (\FSY) scheduler, which activates all the robots  in every round.
Other  interesting synchronous schedulers are
\RSY, 
where  the sets of robots activated in any two 
consecutive rounds are restricted to be disjoint, and it studied for its use to model energy-restricted robots \cite{BFKPSW22},  as well as
 the family of {\em sequential} schedulers  (e.g., 
 \RR),
where in each round only one robot is activated.

In the {\em asynchronous} setting (\ASY), introduced in  \cite{FPSW99}, 
 there is no common notion of time,  each robot is activated 
 independently of the others; it allows for finite but arbitrary delays between the $\LK$, $\CP$ and $\M$ phases, and each movement may take a finite but arbitrary amount of time. The duration of each cycle of a robot, as well as the decision of when a robot is activated,  are
controlled by an adversarial scheduler, constrained only to be fair,
i.e., every robot must be activated infinitely often.

Weaker adversaries are easily identified 
considering the atomicity of the combination of the $\LK$, $\CP$ and $\M$ stages.
In particular,  
if in every cycle the three operations are executed as a single atomic  instantaneous operation, this scheduler we shall call $LCM$-{\bf atomic}-\ASY\ coincides with \SSY. 
On the other hand, by combining fewer operations, two
asynchronous schedulers are identified  \cite{OWD}: $LC$-{\bf atomic}-\ASY, where the $\LK$ and $\CP$ operations are  a single atomic operation;  and  $CM$-{\bf atomic}-\ASY, where the $\CP$ and $\M$ operations are a single atomic operation.

Of independent interest is the restricted asynchronous adversary unable to
schedule the $\LK$  operation of a robot during the $\M$ operation of another.
The particular theoretical relevance of this scheduler, called 
$M$-{\bf atomic}-\ASY\ \cite{OWD} derives from the fact that one of the strongest debilitating effects of unrestricted asynchrony is precisely the fact that a robot, when looking,  cannot detect if another robot is still or moving.\\


\noindent {\bf Separators.}\ \
Like in other types of distributed systems, understanding the computational difference between (levels of) synchrony and asynchrony has been a primary research focus, first  in the $\OB$ model, and subsequently in the others. 

Indeed, one of the first results in the field has been the proof that in $\OB$ 
the simple problem  of two robots meeting at the same location, called {\em Rendezvous}(RDV), is unsolvable under \SSY\ \cite{SY} while easily solvable under \FSY, implying that fully synchronous $\OB$ robots  are strictly more powerful  than semi-synchronous ones.

Any problem that, like {\em Rendezvous}, proves the separation between the computational power of robots in two different settings is said to be a {\em separator}.
The quest has immediately been to determine if there are other problems in $\OB$ 
separating  \SSY\ from \FSY\ (i.e., the extent of their computational difference);
no other has been found so far.
Clearly more important and pressing has been the  question of whether there is any computational difference between synchrony and asynchrony. 
The quest for a problem separating \ASY\ from \SSY\ has been ongoing for more than two decades. Recently a separator has been found in the special case when the  visibility  range of the robots is limited \cite{KKNPS21}, leaving the existence of a separator  open
for the unrestricted case.

The quest for a separator in $\OB$   has been made more pressing since the result that no separation exists between \ASY\ and \SSY\ in the $\LU$ model \cite{DFPSY}; that is, the presence of a limited form of communication and memory is sufficient to completely overcome
the limitations imposed by asynchrony.
This result has motivated the investigation of the two submodels of $\LU$
where the robots are endowed with only  the limited form of persistent memory, $\FS$, 
or of communication, $\FC$.  While separation between fully synchrony and semi-synchrony has been shown to exist for both submodels \cite{apdcm,FSW19}, the more important question of whether  one of them is capable of overcoming asynchrony has not yet been answered; 
indeed, no separator between \SSY\ and \ASY\  has been found so far for either submodel.


\noindent {\bf Landscapes.} 
To understand the  impact that the factors of persistent memory and communication have  on the feasibility of problems, 
the main investigation tool has been  the
comparative analysis of the (new and/or existing) results obtained
for the same problems under the different four models $\OB, \FS, \FC, \LU$.
The same methodological tool can obviously be used also to 
establish the computational relationships between those models 
within a spectrum of schedulers, so to identify the relative powers of
those schedulers within each model. 

Through  this type of cross-model analysis, researchers have recently produced a comprehensive characterization of the computational relationship 
between the four models  with respect to the range of synchronous schedulers
 $<$\FSY, \RSY, \SSY$>$.
 creating a  comprehensive map  of the {\em synchronous landscape} 
 for  distributed systems of autonomous mobile robots
 in the four models \cite{apdcm,FSW19}. 

With respect to the (more powerful) {\em asynchronous} adversarial schedulers, 
ranging from $LCM$-{\bf atomic}-\ASY\ (i.e., \SSY) to \ASY,
very little is known to date on the computational 
power of persistent memory and of explicit communication in general,
and on the computational relationship between the four models in particular.
As mentioned, it is known that in $\LU$, robots have in \ASY\ the same computational  power as in \SSY\, and that  asynchronous luminous robots are strictly more powerful than oblivious synchronous robots \cite{DFPSY}.

Summarizing, while a comprehensive computational map has existed for the synchronous landscape, only disconnected fragments exist so far of the {\em asynchronous landscape}.

\subsection{Contributions}

In this paper, we analyze the computational relationship among
the four models $\OB$, $\FS$, $\FC$  and $\LU$,
under the range of asynchronous schedulers
$< LCM$-{\bf atomic}-\ASY, $LC$-{\bf atomic}-\ASY, $CM$-{\bf atomic}-\ASY, $M$-{\bf atomic}-\ASY, and \ASY $>$, 
 establishing a large variety  of results.
 Through these results,  we close several open problems, and 
 create a complete map of the asynchronous landscape for  distributed systems of autonomous mobile robots
 in the four models.

Among our contributions, we prove the existence of a separator 
 between \SSY\ and \ASY\  in the standard $\OB$ model for the unrestricted visibility case
by identifying  a simple natural problem,
{\tt Monotone Line Convergence} $($\MLCv$)$,
that separates \SSY\ from \ASY\ for $\OB$ robots.
This problem
requires two robots to convergence towards each other  monotonically (i.e., without ever increasing their distance) on the
line connecting them. We prove that this problem, trivially solvable 
in semi-synchronous systems, is however unsolvable if the system is asynchronous.


Because of this separation in $\OB$ on one hand,
and of the known absence of separation in  $\LU$ on the other, the next immediate question is whether either of $\LU$'s 
specific features (i.e., constant-sized communication and persistent memory) is strong enough alone to overcome asynchrony. In other words,
are there separators between \SSY\ and \ASY\  in $\FS$ ? in $\FC$ ?
In these regards, we provide a positive answer to both questions, thus proving
 that both features are needed to overcome asynchrony. 

The characterization of the computational relationship
between the four models with respect to the range of asynchronous schedulers is {\em complete}: for any two models, $M_1,M_2\in\{\OB,\FS, \FC, \LU \}$ and 
adversarial schedulers  $K_1,K_2\in\{ LCM$-{\bf atomic}-\ASY, $LC$-{\bf atomic}-\ASY, $CM$-{\bf atomic}-\ASY, $M$-{\bf atomic}-\ASY, \ASY $\}$ 
it is determined whether the computational power of (the robots in) $M_1$ under $K_1$  is stronger than, weaker than, equivalent to or orthogonal to (i.e., incomparable with) that of (the robots in) $M_2$ under $K_2$.

For example, we prove that 
for  $\FS$ (i.e., in presence of only limited internal persistent memory),  \SSY\ is computationally more powerful 
than {\sc Move}-{\bf atomic}-\ASY, which in turn is computationally 
more powerful than  \ASY. The several orthogonality (i.e., incomparability)  results include for example the fact that the combination
of asynchrony and limited persistent memory  is neither more nor less powerful than the combination
of synchrony and obliviousness.
Observe that to prove that a model under a specific scheduler is stronger than or orthogonal to
another model and scheduler (or same model and a different scheduler, or other model and same scheduler) requires to determine a problem solvable in one setting but not in the other.

Among the  equivalence of two models each under a specific scheduler,  we have proved that for  $\FC$ (i.e., in presence of only limited communication): the atomic combination of {\em Compute} and {\em Move} does not provide any gain with respect to complete asynchrony; on the other hand,
the atomic combination of {\em Look} and {\em Compute} completely overcomes asynchrony.
The proof of the equivalence 
has involved  designing a {\em simulation protocol} that allows to correctly execute any protocol for the first  model and scheduler into the other model and scheduler.

The resulting asynchronous landscape is shown in Figure \ref{fig:Relation-Diagram-SvsAbelow} 
where 
${S}$, ${A}$, ${A_{LC}}$ ${A_M}$, and ${A_{CM}}$  
 denote 
\SSY, \ASY, $\LC$-{\bf atomic}-\ASY, $M$-{\bf atomic}-\ASY, and $\CM$-{\bf atomic}-\ASY, respectively;
a box located higher than another indicates dominance unless they are connected by a dashed line,
which denotes orthogonality;  equivalence is indicated directly in the boxes.

\begin{figure}[tbh]
   \centering
\centering\includegraphics[keepaspectratio, width=0.7\textwidth]{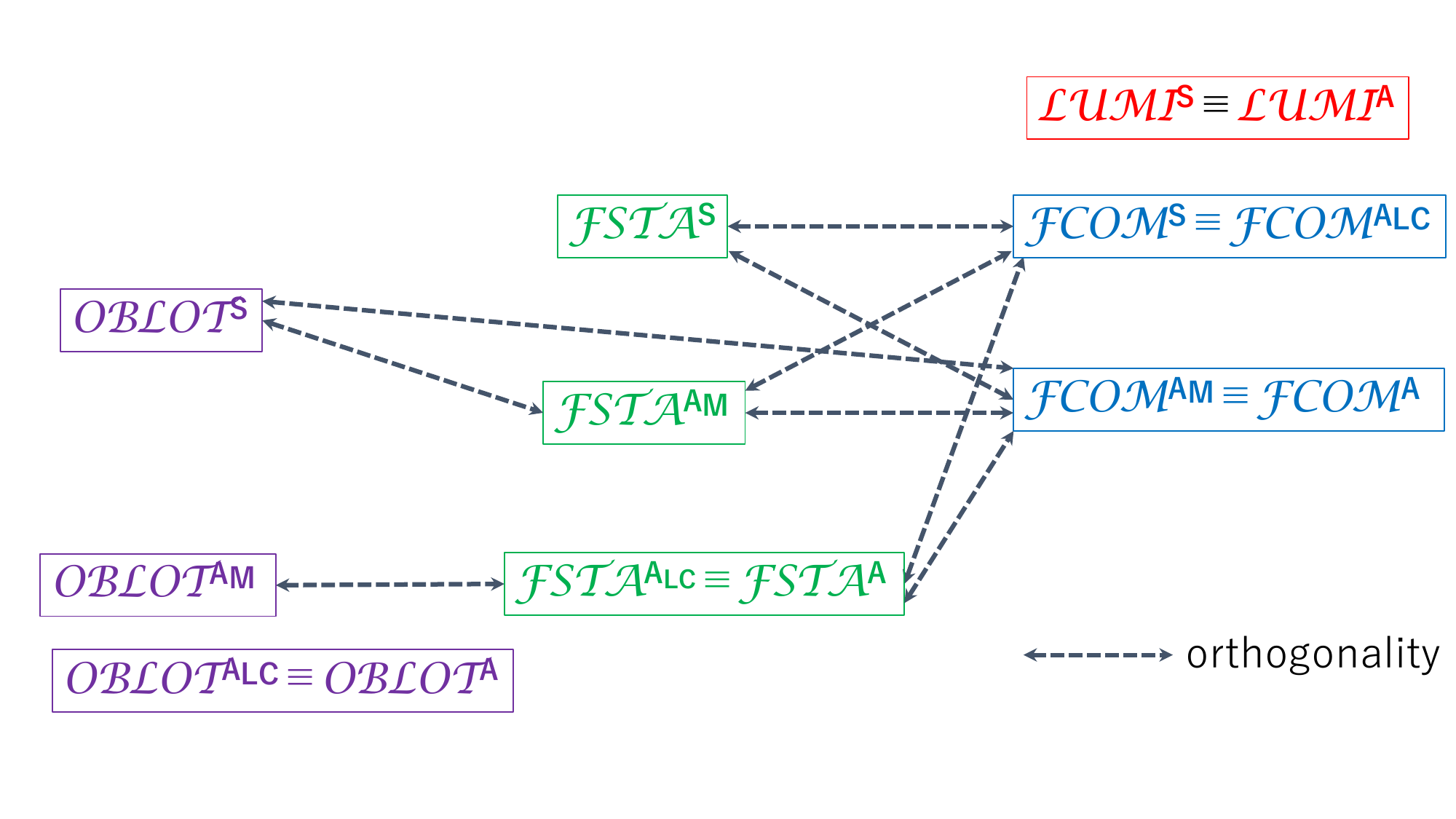}
   \caption{Asynchronous landscape of $\LU$, $\FC$,  $\FS$ and $\OB$.}
    \label{fig:Relation-Diagram-SvsAbelow}
\end{figure} 



\section{Models and Preliminaries}\label{sec:model}

\subsection{Robots}
We shall consider  a set  $R = \{ r_0 ,\cdots, r_{n-1}\}$  of $n>1$  mobile 
computational entities, called {\em robots}, operating in the
 Euclidean plane $\mathbb R^2$. The robots are {\em anonymous} (i.e., they are indistinguishable by their appearance), {\em autonomous} (i.e., without central control), {\em homogeneous} (i.e., the all execute the same program). Viewed as points they can move freely in the plane. Each robot is equipped with a local coordinate system (in which  it it is always at its origin), and it is able
to observe the positions of the other robots in its local coordinate system.
The robots  are {\em disoriented}; that is, there might not be consistency between the coordinate systems of different robots at the same time, or the same robot at different times\footnote{This is also called {\em variable} disorientation; restricted forms  (e.g.,  {\em static} disorientation, where each local coordinate system remains always the same) have been considered for these systems.}.
We assume that the robots however have {\em chirality}; that is, they 
agree on the  the same circular orientation of the plane (e.g., ``clockwise'' direction). 

At any time, a robot is either {\em active} or {\em inactive}.  When active, a robot $r$ executes a $\mathit{ Look}$-$\mathit{Compute}$-$\mathit{Move}$ ($\mathit{LCM}$) cycle.
Each cycle is compose of three  operations:
\begin{enumerate}
\item {\em Look:} The robot obtains an instantaneous snapshot of the positions occupied by the other robots (expressed in its own coordinate
system)\footnote{This is called the {\em full visibility} (or unlimited visibility)  setting; restricted forms of visibility have also been considered for these systems~\cite{FPSW05}.}. 
We do not assume that the robots are capable of strong multiplicity detection~\cite{FPS}.
\item {\em Compute:} The robot executes its algorithm using the snapshot as input. The result of the computation is a destination point.
\item {\em Move:} The robot moves  to the computed destination\footnote{This is called the {\em rigid mobility} setting;  restricted forms of mobility 
(e.g., when movement may be interrupted by an adversary), called {\em non-rigid mobility}  have also been considered for these systems.}; 
if the destination is the current location, the robot stays still and the move is said to be null.
 \end{enumerate}
\noindent After executing a cycle,  a robot becomes inactive. All robots are initially inactive. The time it takes to complete a cycle is assumed to be finite and the operations $\mathit{Look}$
 and $\mathit{Compute}$  are  assumed to be instantaneous.

In the standard model, $\OB$, the robots are also {\em silent}: 
they have no explicit means of communication; 
furthermore, they are {\em oblivious}: at the start of a cycle, a robot has no
memory of observations and computations performed in previous cycles.

In the other common model, $\LUM$, 
each robot $r$ is equipped with
a persistent register $Light[r]$, called {\em light}, whose value  
 called {\em color}, is from a constant-sized set $C$ and 
 is visible by the robots.
The color of the light can be set in each cycle by $r$ at the end of
its {\em Compute} operation, and is not automatically reset at the end of a cycle.
In $\LUM$, the {\em Look} operation produces a colored snapshot; i.e., it returns the set of pairs 
 $(position,color)$ of the other robots.
It is sometimes convenient to describe a robot $r$ as having $k\geq 1$ lights, denoted
 $r.light_1,\ldots,r.light_k$, where the values of $r.light_i$ are from a finite set of colors
 $C_i$, 
 and to consider $Light[r]$ as a $k$-tuple of variables; clearly, this corresponds to $r$ having a
 single light that uses $\Pi^{k}_{i=1}|C_i|$ colors.
 Note that if $|C|=1$, this case corresponds to the $\OB$ model.

 Two submodels of $\LU$, $\FS$ and $\FC$, have been defined and investigated, each offering only one
 of its two capabilities, persistent memory and 
direct means of communication, respectively.  
 In  $\FS$,  
  a robot can only see the color of its own light;   
 thus, the color merely encodes an internal state.
 Therefore, robots are {\em silent}, as in $\OB$, but they  
  are {\em finite-state}. 
In $\FC$, a robot can only see the color of the light of the other robots;
thus,  a robot can communicate to the other robots the color of its light 
but does not remember its own state (color).
 Thus,  robots are enabled with {\em  finite-communication}  but are {\em oblivious}. 
 
 In all the above models,
 a {\em configuration} ${\cal C}(\T)$ at time $\T$ is the multiset of the $n$ pairs 
 $(r_i(\T), c_i(\T))$, where $c_i(\T)$ is the color of robot $r_i$ at time $\T$.

\subsection{Schedulers,  Events}
With respect to the activation schedule of the robots, and the duration of their $\mathit{LCM}$  cycles, the fundamental distinction is between the {\em synchronous} and {\em asynchronous} settings.

In the {\em synchronous} setting (\SSY), also called {\em semi-synchronous} and first studied in \cite{SY},  time is divided into discrete
intervals, called {\em rounds}; in each round, a non-empty set of robots is activated  and  they simultaneously perform
a single  $\LK$-$\CP$-$\M$ cycle in perfect synchronization. 
The selection of which robots are 
activated at a given round is made by an adversarial scheduler, constrained only 
to be fair
(i.e., every robot is activated infinitely often). 
The particular  synchronous setting, where every robot is  activated in every round
is called {\em fully-synchronous} (\FSY).
In a synchronous setting, without loss of generality,  the expressions ``$i$-th round" and ``time $t=i$" are used as synonyms.

In the {\em asynchronous} setting (\ASY), first studied in \cite{FPSW99}, 
 there is no common notion of time,  the duration of each phase is finite 
 but unpredictable and might be different in different cycles, 
 and each robot is activated 
 independently of the others.
The duration of the phases of each cycle as well as the decision of when 
 a robot is activated is
controlled by an adversarial scheduler, constrained only to be fair,
i.e., every robot must be activated infinitely often.

In the asynchronous settings, the execution  by a robot of any of the operations  $\mathit{Look}$, $\mathit{Compute}$ and $\mathit{Move}$  is called an {\em event}.
We associate relevant time information to events:
for the  $\mathit{Look}$  (resp., $\mathit{Compute}$) operation,
which is instantaneous, the relevant time is $\T_L$  (resp., $\T_C$) when the event occurs;
 for the    $\mathit{Move}$ operation,   these are 
the times $\T_B$  and $\T_E$ when the event begins and ends, respectively.
Let ${\cal T} =\{\T_1, \T_2, ...\}$ denote the infinite ordered set of all relevant times; i.e.,
$\T_i < \T_{i+1}, i\in \N$.  In the following, to simplify the presentation and
without any loss of generality, we will refer to $\T_i\in {\cal T}$  simply
by its index $i$; i.e.,  the expression ``time $t$'' will be used to mean
``time $\T_t$''.

 In our analysis of \ASY, we will
 also consider and make use of  the following submodels of  \ASY,
 defined by the level of atomicity of the  $\LK$, $\CP$ and $\M$ 
 operations.

\begin{itemize}
\item \textbf{$\LC$-atomic-\ASY}:
The scheduler does not allow any robot $r$ to
perform a $\LK$ operation while another robot $r'\neq r$
is performing its 
$\CP$ 
operation in that cycle ~\cite{DKKOW19,OWD}.
Thus,  in the $\LC$-atomic-\ASY\ model,
it can be  assumed that, in every cycle, the
$\mathit{Look}$ and $\mathit{Comp}$ operations  
are performed simultaneously and atomically
and that $t_L=t_C$.

\item \textbf{$M$-atomic-\ASY}:
The scheduler does not allow any robot $r$ to
perform a $\LK$ operation while another robot $r'\neq r$
is performing its 
$\M$ 
operation in that cycle  \cite{DKKOW19,OWD}.
In this case, $\M$ operations 
 (called $M$-operations) 
in all cycles can be considered to be performed instantaneously
and that $t_B=t_E$.

\item \textbf{$\CM$-atomic-\ASY}:
The scheduler does not allow any robot $r$ to
perform a $\LK$ operation while another robot $r'\neq r$
is performing a $\CP$ or
$\M$ 
operation in that cycle. 
Thus, in this model, in every cycle  the operations $\CP$ and $\M$,  denoted as CPM,
can be considered as performed simultaneously and atomically, 
and $t_C=t_B=t_E$.
\end{itemize}


To complete the description, two additional specifications are necessary.\\
 Specification 1.
In presence of visible external lights (i.e., models $\LU$ and  $\FC$),
if a  robot $r$  changes its color  in the $\mathit{Comp}$  operation
at time $t\in {\cal T}$, by definition, its new color will become visible only at time $t+1$.


Specification 2. 
Under the \textbf{$M$-atomic-\ASY} and \textbf{$\CM$-atomic-\ASY}
schedulers,
if a robot $r$ ends a non-null $\mathit{Move}$ operation
at time $t\in {\cal T}$, by definition, its new position will become visible only at time $t+1$.

Note that, the model where  the $\LK$, $\CP$, and $\M$ operations are considered 
as a single instantaneous atomic operation
(thus referable to as  \textbf{$\LCM$-atomic-\ASY})  is 
obviously equivalent to \SSY. 

 In the following, for simplicity of notation, we shall use the symbols 
 ${F}$, ${S}$, ${A}$, ${A_{LC}}$, ${A_M}$, and ${A_{CM}}$ to 
 denote the schedulers \FSY, 
\SSY, \ASY, $\LC$-{\bf atomic}-\ASY, $M$-{\bf atomic}-\ASY, and $\CM$-{\bf atomic}-\ASY, respectively.

\subsection{Problems and Computational Relationships}
 Let ${\cal M} = \{\LU, \FC,\FS,\OB \}$  be the set of  models under investigation and 
${\cal S}= \{ F, S, A, A_{LC}, A_{M}, A_{CM}\}$ be the set of  schedulers 
under consideration.

A problem to be solved (or task to be performed) is described by a set of 
{\em temporal geometric predicates},
 which implicitly define the {\em valid}  initial, intermediate, and (if existing) 
terminal\footnote{A terminal configuration is one in which, once reached, the robots no longer move.} configurations, 
as well as restrictions (if any) on the size  $n$ of the set $R$ of robots.

An algorithm ${\cal A}$ {\em solves} a problem $P$ in model $M \in {\cal M}$ under scheduler 
$K\in  {\cal S}$
 if, 
starting from any valid initial configuration, 
any execution by $R$ of ${\cal A}$ in $M$ under $K$
 satisfies the  temporal geometric  predicates of $P$.

Given a model $M \in {\cal M}$ and  a scheduler $K\in  {\cal S}$, we denote by
$M^K$,
 the set of problems solvable 
by robots in $M$ 
under adversarial scheduler $K$.
Let $M_1, M_2\in{\cal M}$ and $K_1, K_2\in{\cal S}$.
\begin{itemize} 
\item We say that  model $M_1$  under scheduler  $K_1$
is {\em computationally not less powerful than} 
model $M_2$ under $K_2$, denoted by
$M_{1}^{K_1} \geq M_{2}^{K_2}$,
 if $M_{1}(K_1) \supseteq M_{2}(K_2)$.

\item We say that 
 $M_1$  under  $K_1$
is {\em computationally more powerful than} 
$M_2$ under $K_2$,  
denoted by
$M_{1}^{K_1} >  M_{2}^{K_2}$,
if 
$M_{1}^{K_1} \geq M_{2}^{K_2}$ 
and
$(M_{1}(K_1) \setminus M_{2}(K_2))  \neq \emptyset$.

\item We say that  $M_1$  under  $K_1$
and  $M_2$  under  $K_2$, are {\em computationally equivalent }, denoted by  
$M_{1}^{K_1} \equiv  M_{2}^{K_2}$,
if $M_{1}^{K_1} \geq M_{2}^{K_2}$ and $M_{2}^{K_2} \geq M_{1}^{K_1}$.

\item Finally, we say that 
$K_1$
$K_2$, are {\em computationally orthogonal} (or {\em incomparable}), denoted by  
$M_{1}^{K_1} \bot  M_{2}^{K_2}$, 
if 
 $(M_{1}(K_1) \setminus M_{2}(K_2))  \neq \emptyset$
 and 
 $(M_{2}(K_2) \setminus M_{1}(K_1))  \neq \emptyset$.

 \end{itemize}

 Trivially, 
 
 \begin{lemma}
 For any $M\in{\cal M}$ and any  $K\in{\cal S}$:
 \begin{enumerate}
 \item $M^{F} \geq M^{S} \geq M^{A_{LC}} \geq M^{A}$  
 \item $M^{F} \geq M^{S} \geq M^{A_{CM}} \geq M^{A_{M}} \geq M^{A}$
  \item $\LUM^{K} \geq \FS^{K} \geq \OB^{K}$
\item $\LUM^{K} \geq \FC^{K} \geq \OB^{K}$
\end{enumerate}
\end{lemma}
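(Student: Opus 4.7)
The plan is to treat each of the four claims by the same template: whenever the left-hand side dominates the right-hand side, every execution allowed by the right-hand adversary/model is also allowed by the left-hand one, so any algorithm that solves a problem in the right-hand setting continues to solve it when lifted to the left-hand setting. Thus the only real work is to verify the inclusion of admissible executions for each single link of the chain.

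For items (1) and (2), I would observe that the asynchronous schedulers are defined by progressively fewer atomicity constraints on the operations in an $\LK$-$\CP$-$\M$ cycle. Concretely, any $F$-execution is a special case of an $S$-execution (activate all robots in every round), any $S$-execution coincides with an $\LCM$-atomic-\ASY\ execution (as noted explicitly at the end of Section 2.2), which is an $LC$-atomic-\ASY\ execution (the $LCM$-atomicity constraint strictly refines the $LC$-atomicity one), and this in turn is an \ASY\ execution (no atomicity is required). The chain $S \geq A_{CM} \geq A_M \geq A$ is handled analogously: $LCM$-atomicity implies $CM$-atomicity, which implies $M$-atomicity (since forbidding $\LK$ during anybody's $\CP$ or $\M$ is stronger than forbidding it during $\M$ alone), which in turn implies no restriction at all. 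For each inclusion of scheduler classes, an algorithm $\mathcal A$ that solves $P$ on the weaker scheduler must in particular satisfy the temporal geometric predicates of $P$ on every execution produced by the stronger scheduler, because such executions form a subset of the former; this immediately yields the desired $\geq$.

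For items (3) and (4), I would argue by model embedding. Fix any scheduler $K\in\mathcal{S}$. Given an algorithm $\mathcal A$ for $\OB$, the same algorithm is trivially a valid $\FS$ algorithm (choose $|C|=1$, or equivalently never consult or update the internal light) and a valid $\FC$ algorithm (same choice, never consult or update the external light); hence $\FS^K\geq \OB^K$ and $\FC^K\geq \OB^K$. Next, an $\FS$ algorithm is an instance of an $\LUM$ algorithm in which the robot inspects only the color of its own light; similarly, an $\FC$ algorithm is an $\LUM$ algorithm in which the robot inspects only the colors of the other robots. Since the $\LUM$ robot sees all of this information, it can simulate either submodel by simply restricting which parts of the colored snapshot it uses in its $\CP$ step, giving $\LUM^K\geq\FS^K$ and $\LUM^K\geq\FC^K$.

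None of the steps is genuinely difficult; the only thing to be careful about is stating precisely why an execution under a more restrictive adversary is still a valid execution under the weaker adversary, and why a restricted use of lights produces an algorithm that formally conforms to the definition of the smaller model (in particular that the fairness requirement is inherited, since it does not depend on atomicity or on the presence of lights). Once these observations are in place, each of the four claims follows directly from the definition of $M^{K}\geq M'^{K'}$ as $M(K)\supseteq M'(K')$.
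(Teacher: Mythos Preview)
Your proposal is correct and is exactly the natural argument one would give; the paper itself offers no proof at all, merely prefacing the lemma with the word ``Trivially,'' so your write-up is already more detailed than what the authors supply. The only cosmetic point is that your use of ``weaker/stronger scheduler'' could be read either way; it would be cleaner to phrase things directly in terms of inclusion of execution sets (e.g., every $F$-execution is an $S$-execution, hence any $S$-correct algorithm is $F$-correct), but the mathematics is right.
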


Let us also recall the following equivalence established in \cite{DFPSY}:
 \begin{lemma} [\cite{DFPSY}] $\LU^{A} \equiv \LU^{S}$
\end{lemma}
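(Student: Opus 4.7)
The plan is to establish the non-trivial direction $\LU^{A} \geq \LU^{S}$ via a simulation argument; the reverse direction $\LU^{S} \geq \LU^{A}$ is immediate from Lemma 1(1). Given any algorithm $\mathcal{A}$ solving a problem $P$ in $\LU^{S}$ using a color set $C$, I will construct an algorithm $\mathcal{A}'$ solving $P$ in $\LU^{A}$ using an enlarged color set $C \times D$, where $D$ is a constant-sized set of \emph{phase tags} encoding each robot's stage in the current simulated \SSY-round.

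First I would introduce tags such as \emph{ready}, \emph{moving}, and \emph{done}, each coupled with the robot's simulated \SSY-color. When an asynchronous robot $r$ becomes active in $\mathcal{A}'$, it performs a $\LK$ and partitions the observed robots, using the tag component, into those still executing the previous simulated round's $\M$ and those that have already completed it. The robot $r$ commits to simulating its next synchronous step only after confirming that every other robot is tagged \emph{done}, i.e., no robot's visible position could be a transient intermediate one. The $\CP$ part of $\mathcal{A}'$ then runs the original $\mathcal{A}$ on the $C$-components to obtain a destination; immediately before moving, $r$ flips its tag to \emph{moving} so that later observers treat its position as stale, and at its next activation after the move ends $r$ flips its tag back to \emph{done}.

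To argue correctness I would prove two invariants and then a projection statement. The \emph{safety} invariant asserts that, whenever a robot $r$ reads another robot $r'$ tagged \emph{done}, the position of $r'$ observed by $r$ equals $r'$'s position at the end of the last simulated round $r'$ completed; here Specification 1 (delayed visibility of a new color) is essential, since it guarantees that the tag flip from \emph{done} to \emph{moving} is posted before any subsequent $\LK$ can observe the stale position. The \emph{liveness} invariant asserts that, under any fair \ASY\ scheduler, every robot eventually advances one simulated round; this follows from a circular-wait argument, using the fact that there are only constantly many tag transitions per simulated round and that fairness guarantees each robot is eventually activated after every relevant event. Given both invariants, the projection of any $\mathcal{A}'$-execution onto the $C$-components yields a legal synchronous execution of $\mathcal{A}$, so every predicate of $P$ satisfied by $\mathcal{A}$ is also satisfied by $\mathcal{A}'$.

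The hardest step will be the safety proof: in pure \ASY\ the adversary can interleave a $\LK$ of one robot with an arbitrarily long $\M$ of another, so I must carefully orchestrate the tag transitions and use the one-unit visibility delay of color changes to rule out any window in which a robot both sees a \emph{done} tag and observes an outdated coordinate. Once this is pinned down, the rest of the argument is a routine case analysis on the tag configurations; the construction then gives $\LU^A \geq \LU^S$, and combined with Lemma 1(1) the lemma follows.
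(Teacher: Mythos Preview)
The paper does not actually prove this lemma; it is stated as a known result and attributed to \cite{DFPSY}. Your simulation approach is indeed the method used there, so in spirit you are on the right track. However, the specific three-tag scheme you sketch has a genuine liveness gap that your ``circular-wait argument'' does not close.

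The problem is fairness of the \emph{simulated} \SSY\ execution. In your scheme a robot moves whenever it sees every other robot tagged \emph{done}, and after moving it eventually flips back to \emph{done}. Nothing prevents a robot $r_1$ that has just returned to \emph{done} from immediately seeing everyone else \emph{done} and moving again, while some robot $r_3$ has not yet simulated a single step. Concretely, the \ASY\ adversary can keep $r_1$ cycling \emph{done} $\to$ \emph{moving} $\to$ \emph{done} forever, and activate $r_3$ only during the intervals in which $r_1$ is tagged \emph{moving}; $r_3$ is then activated infinitely often (so \ASY\ fairness is respected) but never sees ``all others \emph{done}'' and hence never executes $\mathcal{A}$. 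The projected \SSY\ execution is therefore unfair, and correctness of $\mathcal{A}$ no longer applies. Fairness of the underlying \ASY\ scheduler is not enough: you need a barrier that forces every robot that has completed its simulated step to \emph{wait} until all robots have completed theirs before anyone begins the next step. This is exactly why the construction in \cite{DFPSY} (and the analogous {\tt SIM(A)} in this paper for the $\FC$ case) uses additional phase values---roughly a $W$/$M$/$F$ distinction plus a reset phase that implements a mega-cycle---rather than the two effective states (\emph{moving}/\emph{done}) that your description actually uses. Your safety invariant is essentially fine; it is the progress argument that needs the extra machinery.
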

\noindent that is, in the $\LU$ model,  there is no computational difference  
between \ASY\ and \SSY.

Observe that, in all models, any restriction of the adversarial power of the asynchronous scheduler
does not decrease (and possibly increases) the computational capabilities
of the robots in that model. 
In other words,  if $A_\alpha$  is a restricted
scheduler of $A_\beta$, then  $M^{A_\beta} \leq M^{A_\alpha}$
 for any robot model $M \in {\cal M}$.

Note that the difference between $A_{CM}$ and $A_{M}$ is that there exists just one type of configuration that can be observed in $A_{M}$ but cannot be observed in $A_{CM}$:  the one before moving but after computing. 
As for $X \in \{\FS,\OB \}$, since  robots cannot observe the colors of the other robots, we have $X^{A_{CM}} \equiv X^{A_{M}}$ 
and  
$X^{A_{LC}} \equiv X^{A}$. 

\section{The \texorpdfstring{$\OB$}{OB}  Computational Landscape}
\label{sec-Oblot}

\subsection{Separating \SSY\ from \ASY}

In this section we prove that, under \SSY,  the
robots  in  $\OB$ are strictly more powerful than
under  ${\cal A}_{M}$, thus separating
\SSY\ from \ASY\  in $\OB$.



To do so, we consider the classical {\tt Collisionless Line Convergence}  (\CLCv)
problem, where two robots, r and q, must converge to a 
common location,  moving on the line connecting them, without ever  crossing each other; i.e., \CLCv\ is defined by the predicate 
\begin{align}
\begin{split}
CLC \equiv \Big[
&\{\exists \ell \in \mathbb{R}^2, \forall\epsilon\geq 0, \exists T\geq 0, \forall t \ge T:  |r(t)-\ell|+ |q(t)-\ell| \le \epsilon\}, \\
&{\textbf{and}}\  \{\forall t\geq 0: r(t),q(t)\in\overline{r(0)q(0)}\}, \\
&{\textbf{and}}\ \{\forall t\geq 0: 
dis(r(0),r(t))\leq dis(r(0),q(t)), 
dis(q(0),q(t))\leq dis(q(0),r(t))\} \Big]
\end{split}
\nonumber
\end{align}
and we focus on the monotone version 
of this problem defined below.

%
\begin{definition}
\label{def:MLCv}
{\bf MONOTONE LINE CONVERGENCE} $($\MLCv$)$
 The two robots, $r$ and $q$ must  solve the {\tt Collisionless Line Convergence} problem without 
 ever increasing the distance between them.
\end{definition}
\noindent In other words, an algorithm solves
\MLCv\   iff it
satisfies the following predicate:
$$MLC \equiv [ CLC\  {\bf and}\ \{\forall t'\geq t, |r(t')-q(t')| \le |r(t)-q(t)|\}]$$


 First observe that  
\MLCv\ can   be solved in $\OB^S$.

\begin{lemma}\label{lem:CNV-WOC1}
 \MLCv$\in \OB^{S}$. This holds even under  
non-rigid movement and in absence  of chirality.
\end{lemma}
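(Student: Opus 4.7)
The plan is to exhibit the natural \emph{midpoint} algorithm and verify its correctness. Concretely, whenever a robot in $\OB$ is activated, it looks at the position of the unique other robot, computes the midpoint $m$ of the segment joining the two current positions, and sets $m$ as its destination. Since ``midpoint of a segment'' is a coordinate-system-independent object, no chirality is required, and since $m$ is a single point, the specification of the move is independent of whether motion is rigid.

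The verification I would carry out proceeds in four steps. First, by induction on the rounds, I would show that both $r(t)$ and $q(t)$ remain on the initial segment $\overline{r(0)q(0)}$: the inductive step uses that the midpoint of two points of a segment lies on the segment, and that under non-rigid motion the activated robot stops somewhere on the sub-segment from its current position to its destination. Second, because each activated robot targets the midpoint, it cannot overshoot it, and therefore cannot cross the other robot; this directly yields the two collisionless predicates $dis(r(0),r(t)) \le dis(r(0),q(t))$ and $dis(q(0),q(t)) \le dis(q(0),r(t))$. Third, monotonicity of the inter-robot distance $d(t) := |r(t)-q(t)|$ follows from a one-line computation: in \SSY\ both activated robots look simultaneously and thus agree on $m$, so if $r$ and $q$ move toward $m$ by fractions $\alpha, \beta \in [0,1]$ of $d(t)/2$ (taking the coefficient of a non-activated robot to be $0$), the new distance is $d(t)\bigl(1-(\alpha+\beta)/2\bigr) \le d(t)$.

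The final and most delicate step is convergence. Let $\delta > 0$ be the minimum-movement guarantee of the non-rigid model. I would argue that as long as $d(t) \ge 2\delta$, every activated robot moves at least $\delta$, so by the fairness of the \SSY\ scheduler the distance drops below $2\delta$ after finitely many rounds; once $d(t) \le 2\delta$, the midpoint lies within $\delta$ of each robot, so any activation either halves the distance (if only one robot is activated) or drives it to $0$ (if both are). Hence $d(t) \to 0$. Since both robots remain inside the compact segment $\overline{r(0)q(0)}$ and their separation vanishes, they converge to a common limit point $\ell$, which establishes the first conjunct of $CLC$; combined with Steps 1--3, the full predicate $MLC$ then holds.

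The main obstacle I anticipate is this last convergence argument under the adversary's combined freedom to pick activations and to truncate moves; everything else is a routine geometric verification. The key leverage is the minimum-move bound $\delta$, which cleanly splits the analysis into a ``large distance'' regime ($d \ge 2\delta$) that terminates in finitely many rounds and a ``small distance'' regime ($d < 2\delta$) in which each relevant round at least halves $d$.
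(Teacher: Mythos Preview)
Your proposal is correct and uses precisely the paper's approach---the ``move to half distance'' (midpoint) protocol---though you supply far more detail than the paper, whose proof is a single sentence asserting the claim is immediate. One tiny slip: in Step~4, compactness together with $d(t)\to 0$ does not by itself force convergence to a common limit point; what does is the monotonicity of each robot's position along the segment (each robot only ever moves toward the other), which follows from your Steps~1--2.
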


\begin{proof}

It is rather immediate to see that the simple  protocol using the strategy
``move to half distance" satisfies 
the MLC predicate and thus solves the problem.
\end{proof}

On the other hand, \MLCv\ 
 is not solvable in $\OB^{A_{M}}$.

\begin{lemma}\label{lem:CNV-WOC2} 
  \MLCv $\not \in  \OB^{A_{M}}$ even under fixed disorientation 
 and agreement on the unit of distance.   
\end{lemma}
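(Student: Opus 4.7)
The plan is to argue by contradiction: assume some deterministic oblivious protocol $\mathcal{A}$ solves \MLCv\ for two robots in $\OB$ under $A_M$, and construct a legal $A_M$-execution that violates the non-crossing clause of CLC (and therefore MLC). The feature of $A_M$ I want to exploit is that the adversary is still free to insert an arbitrary number of events between a robot's $\mathit{Compute}$ and its (now instantaneous) $\mathit{Move}$, and this window is enough to let one robot slip past the other's pending destination.

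The first step is to reduce $\mathcal{A}$ to a scalar displacement function. Place $r$ and $q$ at unit distance, and, using fixed disorientation, chirality, and the agreement on the unit of distance, lock their local coordinate systems so that each one sees the other in the same local direction at distance $1$; both robots thus receive literally identical snapshots. Because the MLC conjunct forces every position to stay on $\overline{r(0)q(0)}$, $\mathcal{A}$'s output, whenever a robot sees the other at distance $d$, is described by a single scalar $m(d)\in[0,d]$, namely the displacement toward the other robot. Two easy observations constrain $m$: (i) $m(d)\le d/2$ for every $d>0$, because an $\SSY$-style atomic joint activation with $m(d)>d/2$ would swap the two robots past each other and already violate CLC (and any $A_M$-solver is a fortiori an \SSY-solver); and (ii) $m(d)>0$ for every $d>0$, because otherwise the configuration at that distance is a fixed point of $\mathcal{A}$ and $\MLCv$'s convergence clause fails. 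Property (ii) then forces the scalar iteration $d_{n+1}=d_n - m(d_n)$ starting from $d_0=1$ to tend to $0$: its limit $L$ satisfies $L = L - m(L)$, hence $m(L)=0$, hence $L=0$.

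The heart of the argument is the adversarial schedule. Let $\delta:=m(1)>0$. At time $1$ I schedule $r$'s $\mathit{Look}$ and $\mathit{Compute}$; $r$ now has a pending destination at distance $\delta$ from its starting position along the line toward $q$, but I leave its $\mathit{Move}$ pending. I then schedule complete $\mathit{Look}$--$\mathit{Compute}$--$\mathit{Move}$ cycles of $q$, one after the other. This is $A_M$-legal: $r$'s $\mathit{Move}$ has not yet begun, so no $\mathit{Look}$ of $q$ overlaps an $M$-phase of $r$, and each of $q$'s own Moves is instantaneous by the $A_M$ convention. After $q$'s $n$-th cycle the two robots are at distance $d_n$, so the convergence observation yields a finite $N$ with $d_N<\delta$; at that moment $q$ sits strictly between $r$'s starting position and $r$'s planned destination. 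Releasing $r$'s pending $\mathit{Move}$ now translates $r$ by $\delta$ past $q$'s current position, directly violating the CLC clause $dis(r(0),r(t))\le dis(r(0),q(t))$ and therefore MLC. Any fair extension of this prefix provides the required contradiction.

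The step I expect to be most delicate is precisely the first reduction, namely justifying that a single scalar function $m$ governs both robots simultaneously. Fixed disorientation lets the adversary pin the two local frames once and for all; the trick is to choose them so that the snapshot $r$ has of $q$ and the snapshot $q$ has of $r$ coincide as (distance, local direction) pairs, which uses chirality and unit agreement together. Determinism of $\mathcal{A}$ then forces both robots to emit the same vector output, and the MLC constraint that motion stays on the initial segment collapses that vector to the scalar $m(d)$. Once this reduction is cleanly in place, the remaining analytic bookkeeping --- monotonicity of $(d_n)$, convergence to $0$, and the strict inequality $d_N<\delta$ producing the crossing --- is routine.
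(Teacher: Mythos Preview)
Your overall architecture matches the paper's: reduce to a scalar displacement function, freeze one robot between \textit{Compute} and \textit{Move}, let the other robot iterate the map $d\mapsto d-m(d)$ until the distance drops below the frozen robot's pending displacement $\delta=m(1)$, then release the pending move to force a crossing. The paper does exactly this (with the roles of $r$ and $q$ swapped and the pending robot stalled between \textit{Look} and \textit{Compute} rather than between \textit{Compute} and \textit{Move}, which is immaterial under $A_M$).

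There is, however, a genuine gap in your convergence step. You write that the limit $L$ of $(d_n)$ ``satisfies $L=L-m(L)$, hence $m(L)=0$, hence $L=0$.'' Passing to the limit inside the iteration $d_{n+1}=d_n-m(d_n)$ requires continuity of $m$ at $L$, which you have no right to assume: $\mathcal{A}$ is an arbitrary deterministic rule and $m$ can be wildly discontinuous. All you actually get from $d_n\to L$ is $m(d_n)\to 0$, not $m(L)=0$. A concrete obstruction: take $m(d)=(d-\tfrac12)^2$ for $d>\tfrac12$ and $m(\tfrac12)=\tfrac14$; then $m(d)>0$ for all $d>0$, yet starting from $d_0=1$ the iterates converge to $\tfrac12$, never dropping below $\delta=m(1)=\tfrac14$. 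Your local argument cannot rule this $m$ out.

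The paper closes this gap without any analytic hypothesis on $m$: it observes that the very sequence $(d_n)$ is the distance sequence of a \textsc{RoundRobin} execution of $\mathcal{A}$ (alternate single activations), which is a legal $A_M$ schedule. Correctness of $\mathcal{A}$ under $A_M$ then forces that execution to converge, so $d_n\to 0$ and in particular $d_N<\delta$ for some finite $N$. Replacing your fixed-point sentence with this observation repairs the proof; everything else you wrote goes through.
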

\begin{proof}

By contradiction, assume that there exists an algorithm ${\cal A}$ 
that solves \MLCv\ in $\OB^{A_{M}}$. 
Let  the two robots, $r$ and $q$, have the same unit of distance,  initially each
see the other on the positive direction of the $X$ axis
and their local coordinate system  not change during the execution of ${\cal A}$.
Three observations are in order.

(1) First observe that, by  the predicates defining  \MLCv, if a robot moves, it must move towards the other,
and in this particular setting, it must stay on its $X$ axis.

(2) Next observe that, every time a robot is activated and executes ${\cal A}$,
it must move. In fact, if, on the contrary,   ${\cal A}$  prescribes that a robot activated
at some distance $d$ from the other must not move,
then, in a fully synchronous execution of ${\cal A}$
where both robots are initially at distance $d$, neither
of them will ever move and, thus, will never 
converge. 

(3) Finally observe that, when  robot $r$ moves towards $q$ on the $X$ axis
after seeing it  at distance $d$,  the length $f(d)$ of the computed move 
is the same as that $q$ would  compute if seeing $r$  at distance $d$.

Consider now  the following execution ${\cal E}$ under ${A_{M}}$:
 Initially both robots  are simultaneously activated, and are at
 distance $d$ from each other. 
Robot $r$ completes its computation and executes 
the move instantaneously (recall, they are operating under
$A_{M}$), and continues to be activated and to execute 
${\cal A}$ while robot $q$ is still in its initial 
computation. 

Each  move by $r$  clearly reduces the distance
between the two robots. More precisely, by observation (3), after $k\geq 1$ moves,
the distance will be reduced from $d$ to $d_{k}$ where  $d_0=d$ and 
$d_{k>0}  =d_{k-1} - f(d_{k-1}) = d - \sum_{0 \leq i < k} f(d_i)$.\\

{\bf Claim.} {\em After a finite number of moves of $r$,
 the distance between the two robots becomes smaller that $f(d)$}.

{\bf Proof of Claim.}
By contradiction, let $r$ never get closer than $f(d)$ to $q$; that is
for every $k>0$, $d_{k}> f(d)$.

Consider then the execution $\hat{{\cal E}}$ of ${\cal A}$ under the \RR\ synchronous scheduler:
the robots, initially at distance $d$, are activated one per round, at alternate rounds.
Observe that, since ${\cal A}$ is assumed to be correct under ${A_{M}}$,
it must be correct also under \RR. This means that, 
starting from the initial distance $d$, for any fixed distance $d' >0$, the two robots become closer 
than $d'$. Let $m(d')$ denote the number of rounds for this to occur; then, 
  the distance between 
them becomes smaller than $f(d)$ after $m(f(d))$ rounds. 
Further observe that, after round $i$, the distance $d_i$ between 
them is reduced by $f(d_i)$. 
Summarizing, 
$d_{m(f(d))}  =  d - \sum_{0 \leq i < m(f(d))} f(d_i) < f(d)$, 
contradicting that  $d_{k}> f(d)$ for every $k>0$.
\qed\\

Consider now  the execution ${\cal E}$ at the time  
the distance becomes smaller that $f(d)$;
 let robot $q$ complete its computation at that time and
perform its move, of length $f(d)$, towards $r$. This move then
creates a collision, contradicting the correctness 
of {\textit A}.
\end{proof}

From Lemmas \ref{lem:CNV-WOC1} and \ref{lem:CNV-WOC2},
and since  $\OB^{A_M} \geq \OB^{A}$ by definition, 
the main result now follows:

\begin{theorem}\label{th:domFSOBSoverFSOBAM} 
 $\OB^{S} >$ $\OB^{A}$ 
\end{theorem}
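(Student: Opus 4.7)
The plan is to assemble the theorem directly from the two preceding lemmas together with the trivial monotonicity properties collected in Lemma 1 and noted in the paragraph preceding the theorem. Recall that by Definition (iii) of computational strict dominance, I must establish two things: (a) $\OB^S \geq \OB^A$, and (b) there exists a problem in $\OB(S) \setminus \OB(A)$.

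For part (a), I would simply invoke Lemma 1, item 1, which gives $\OB^S \geq \OB^{A_{LC}} \geq \OB^A$ by the nested restrictions among the asynchronous schedulers. No further work is needed here.

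For part (b), I would exhibit \MLCv\ as the separating problem. By Lemma \ref{lem:CNV-WOC1}, \MLCv\ is solvable in $\OB^S$ (in fact under even weaker assumptions on mobility and orientation, which is more than we need). By Lemma \ref{lem:CNV-WOC2}, \MLCv\ is not solvable in $\OB^{A_M}$. The bridge is the observation explicitly recorded in the paper that $A_M$ is a restriction of $A$, so any algorithm correct under the more powerful adversary $A$ is a fortiori correct under $A_M$; equivalently, $\OB^A \subseteq \OB^{A_M}$, hence unsolvability in $\OB^{A_M}$ propagates to unsolvability in $\OB^A$. Therefore $\MLCv \in \OB(S) \setminus \OB(A)$, which combined with (a) yields $\OB^S > \OB^A$.

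There is no genuine obstacle here: all the technical content has been absorbed into the two lemmas (in particular the adversarial construction in Lemma \ref{lem:CNV-WOC2}, where the hard step is arguing that $r$'s unilateral moves must eventually bring the robots within $f(d)$ of each other, allowing the belated move of $q$ to cause a crossing). The theorem itself is merely a packaging step, so my proposed proof would consist of two or three sentences citing the relevant results and invoking the definitions.
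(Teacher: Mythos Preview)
Your proposal is correct and matches the paper's own proof essentially line for line: the paper likewise derives the theorem directly from Lemmas~\ref{lem:CNV-WOC1} and~\ref{lem:CNV-WOC2} together with the observation that $\OB^{A_M} \geq \OB^{A}$ by definition. Your additional explicit invocation of Lemma~1 for the $\geq$ direction is fine (the paper leaves this implicit), and the rest is identical.
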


In other words,  
under the synchronous scheduler \SSY, $\OB$ robots are 
strictly more powerful than when under the asynchronous scheduler \ASY.
This results provides a definite positive answer to the
long-open question of whether there exists a
computational difference between synchrony and asynchrony in 
$\OB$.

\subsection{Refining the \texorpdfstring{$\OB$}{OB}  Landscape}
\label{sec: OBlandscape}
We can refine the $\OB$ landscape as follows;
By definition, $\OB^{A_M} \geq \OB^{A}$. 
Consider now the following problem for $n=4$ robots.


\begin{figure}[tbh]
    \centering
\centering\includegraphics[keepaspectratio, width=0.54\textwidth]{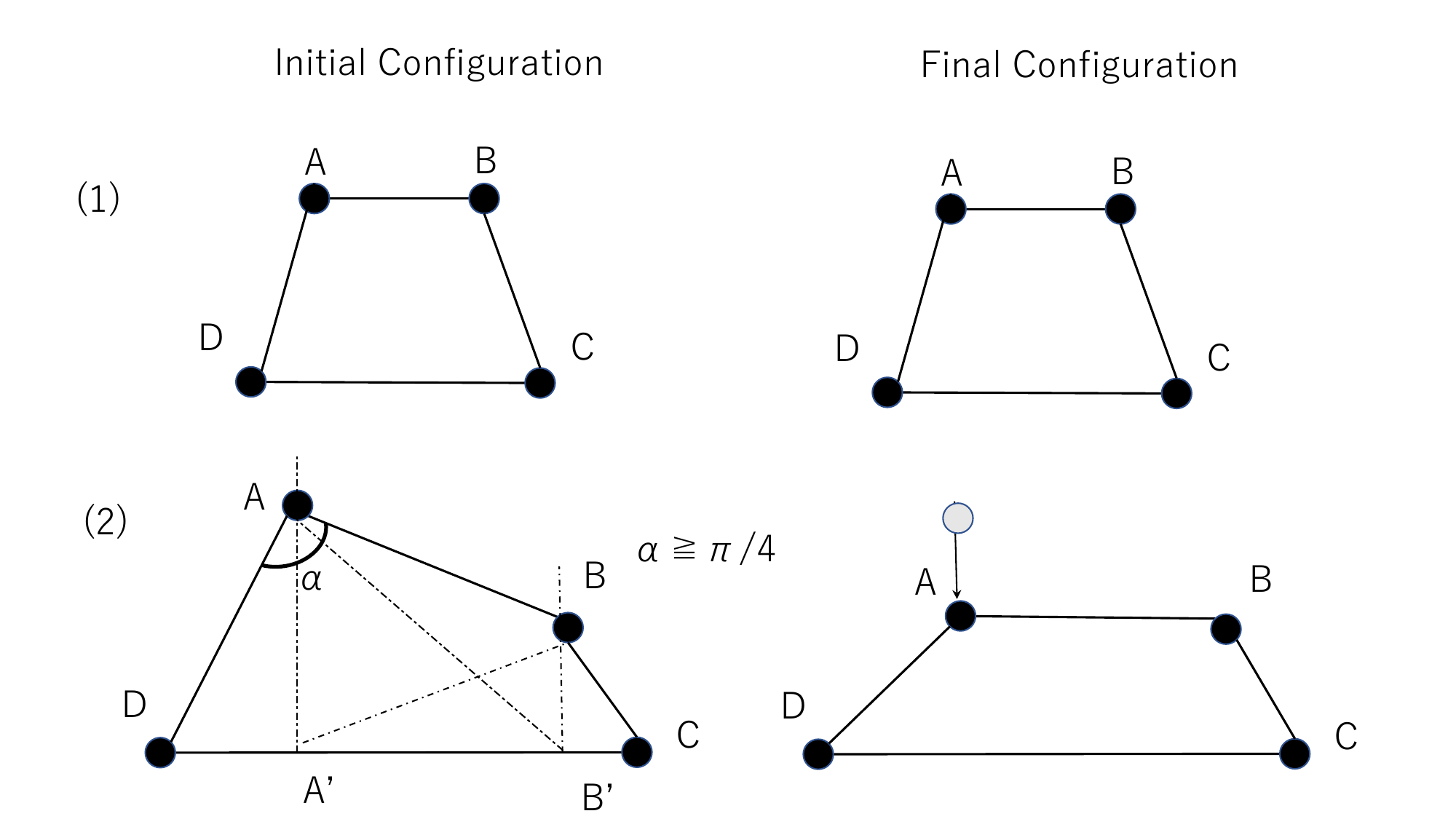}
\centering\includegraphics[keepaspectratio, width=0.44\textwidth]{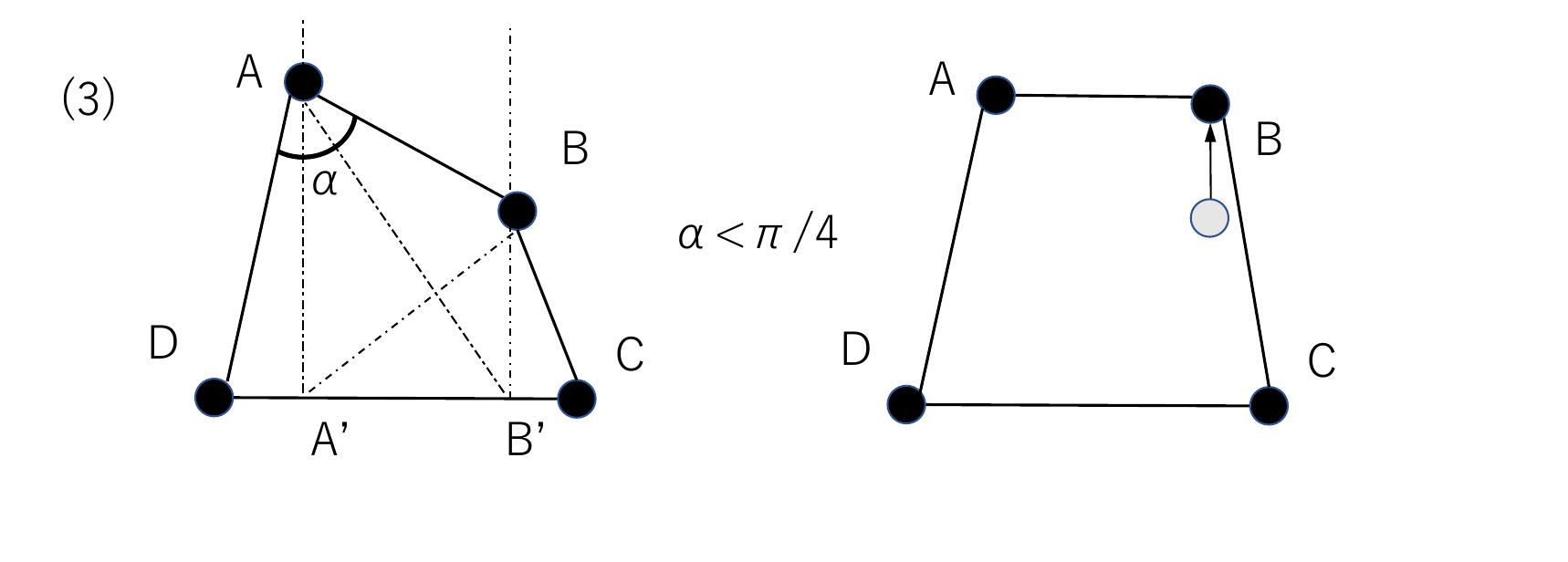}
    \caption{TRAPEZOID FORMATION (TF).
    }
    \label{fig:TF}
\end{figure}

\begin{definition}
{\bf  TRAPEZOID FORMATION  (TF)} :  
Consider a set of four robots, $R=\{a,b, c, d\}$ whose initial configuration  forms a
convex quadrilateral $Q=(ABCD)= (a(0)b(0)c(0)d(0))$ with one side, say $\overline{CD}$, longer than all others. The task is to
transform $Q$ into a trapezoid $T$,
subject to the following conditions:\\
(1)  If $Q$ is a trapezoid,
the configuration must stay unchanged (Figure~\ref{fig:TF}(1)); i.e.,
$$TF1 \equiv
[\ Trapezoid(ABCD) \Rightarrow 
\{ \forall t > 0, r\in\{a,b,c,d\}  : r(t)= r(0) \}\ ]$$
(2) Otherwise,
without loss of generality,
 let $A$ be farther than $B$ from $CD$.
  Let $Y(A)$  (resp., $Y(B)$) denote the perpendicular lines 
from $A$ (resp., $B$) to $CD$ meeting $CD$ in $A'$ (resp. $B'$), and
  let $\alpha$ be the smallest angle between 
$\angle{BAA'}$ and $\angle{ABB'}$.\\
(2.1)
If $\alpha \geq \pi/4$ then the robots must form the 
trapezoid
shown in Figure~\ref{fig:TF}(2), 
where the location  of $a$ 
is a translation of its initial one on the line $Y(A)$, and that 
of all other robots is  unchanged; specifically,
$$TF2.1 \equiv [\ (\alpha \geq \pi/4)\ 
\Rightarrow  
 \{ \forall t\geq0,  r\in\{b,c,d\}  : r(t)= r(0), a(t) \in Y(A) \}\ {\bf and}\ $$
 $$ \{
\exists t>0 : \forall t'\geq t,\ 
\{ \overline{a(t')b(t')}\ || \overline{CD}\}\
{\bf and}\  
\{ a(t')= a(t)\}\ 
\}\ ]
$$
 (2.2) If instead $\alpha < \pi/4$
then the robots must form the 
trapezoid
shown in Fig.~\ref{fig:TF}(3), 
where the location of all robots but $b$  is  unchanged, and that of $b$ 
is a translation of its initial one on the line $Y(B)$; specifically,
$$TF2.2 \equiv [\ (\alpha < \pi/4)\ 
\Rightarrow  
 \{ \forall t\geq0,  r\in\{a,c,d\}  : r(t)= r(0), b(t) \in Y(B) \}\ {\bf and}\ $$
 $$ \{
\exists t>0 : \forall t'\geq t,\ 
\{ \overline{a(t')b(t')}\ || \overline{CD}\}\
{\bf and}\  
\{ b(t')= b(t)\}\ 
\}\ ]
$$
\end{definition}

\indent Observe that  
$TF$  can  be solved in $\OB^{A_{M}}$.

\begin{lemma}\label{lem:TF1}
$ \text{TF} \in \OB^{A_{M}}$, even in
 absence of chirality.
\end{lemma}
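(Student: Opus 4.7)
The plan is to exhibit an algorithm $\mathcal{A}$ that, from any snapshot, deterministically designates at most one robot to move (to a canonical destination) and instructs every other robot to perform a null move. Since the unique longest side $\overline{CD}$, the perpendicular distances from $A,B$ to the line through $C,D$, and the angles $\angle BAA'$ and $\angle ABB'$ are all chirality-independent scalar quantities, every robot, regardless of the handedness of its local frame, can reconstruct the labeling $\{a,b,c,d\}$ from its snapshot and locally identify its own role. Concretely, on any Look: if the quadrilateral is already a trapezoid (some pair of sides is parallel) then the robot performs a null move; otherwise, it identifies $\overline{CD}$ as the unique longest side, $a$ as the vertex of the remaining two that is farther from the line through $C,D$, $b$ as the nearer one, and computes $\alpha=\min\{\angle BAA',\angle ABB'\}$. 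If $\alpha\ge\pi/4$, only the robot that recognizes itself as $a$ heads along $Y(A)$ to the point at perpendicular distance $|BB'|$ from $CD$; if $\alpha<\pi/4$, only the robot that recognizes itself as $b$ heads along $Y(B)$ to the point at distance $|AA'|$ from $CD$; every other robot performs a null move.

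Under the $A_M$ scheduler, every $\M$ event is instantaneous, so the global configuration is well-defined at every Look. If the initial $Q$ is already a trapezoid, every Look triggers a null move and TF1 holds. Otherwise, let $x\in\{a,b\}$ be the designated mover; until $x$ completes its Move, any other robot's Look returns exactly the snapshot $x$ itself saw, so each such cycle yields a null move; after $x$'s Move, by construction $\overline{ab}\parallel\overline{CD}$, so every subsequent Look sees a trapezoid and yields a null move. Consequently, $x$ is the only robot whose position ever changes, and it moves exactly once along the correct perpendicular to the correct destination, so TF2.1 or TF2.2 is satisfied in its corresponding regime.

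The main obstacle is ensuring that after $x$'s Move the configuration is still reliably detected as a trapezoid by $\mathcal{A}$, so that no further activity is triggered. I would take the trapezoid detector to be the parallel-sides test rather than re-invoke the unique-longest-side identification: in the sub-case $\alpha\ge\pi/4$ the edge $\overline{CD}$ does remain the unique longest side after $a$'s move (since $|AB|$ and $|AD|$ strictly decrease while $|BC|$ and $|BD|$ are unchanged), but in the sub-case $\alpha<\pi/4$ the edge $\overline{BC}$ can grow and the unique-longest-side identity need not survive $b$'s Move, so a parallelism test is the cleanest post-move invariant. The asynchronous obstacle is then dispatched by the $A_M$ guarantee that no Look overlaps a Move: every snapshot taken before $x$'s Move is identical to $x$'s own snapshot, so the role assignment and the null-move prescription for all $y\ne x$ are consistent across all intervening cycles, and chirality is nowhere invoked.
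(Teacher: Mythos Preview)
Your argument is correct and follows the same approach as the paper: give a rule that designates at most one robot to move based on the snapshot, rely on $A_M$ to guarantee that every Look prior to that Move sees the unchanged initial configuration, and observe that after the single Move the configuration is a trapezoid so all robots halt. Your proposal is in fact more careful than the paper's own proof, since you make explicit both the chirality-independence of the role identification and the need to test for a trapezoid via parallel sides rather than re-deriving the unique longest side after the move.
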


\begin{proof}

It is immediate to see that the following
simple set of rules solves TF in $\OB^{A_{M}}$. 

Rule 1: If the observed configuration 
is as shown in Figure~\ref{fig:TF} (1), 
the configuration is already a trapezoid, and
no robot performs any move ($TF1$).

Rule 2: Let the configuration  be as shown in Figure~\ref{fig:TF} (2). 
Whenever  observed by $b, c, d$, none of them moves;  
 when observed by $a$, $a$ moves to the desired point eventually creating a terminal configuration subject to Rule 1.
Since the scheduler is $\M$-atomic \ASY\, the other robots do not observe $a$ during this move, but only  after the move is completed.

Rule 3: Analogously, 
let the configuration  be as shown in Figure~\ref{fig:TF} (3).
Whenever  observed by  $a, c, d$, none of them moves;  when observed by $b$, 
$b$ moves to the desired point eventually creating a trapezoid and reaching a terminal configuration, unseen by all other robots
during this movement.
\end{proof}

 However,  
$TF$  cannot  be solved in $\OB^{A}$.  

\begin{lemma}\label{lem:TFoblot}
$\text{TF} \notin \OB^{A}$, even with fixed disorientation.
\end{lemma}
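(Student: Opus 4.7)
The plan is to assume an algorithm $\mathcal{A}$ solves TF in $\OB^{A}$ and construct an $\ASY$ execution in which obliviousness forces $\mathcal{A}$ to violate the specification. I would exploit the fact that under $\ASY$ a mover can be caught mid-$\M$ by another robot's $\LK$, so the observer sees a configuration that differs from the initial one and may belong to a different clause of the TF specification.

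First I would select an initial configuration $C_0 = (a(0),b(0),c(0),d(0))$ in the $TF2.2$ regime ($\alpha_0 < \pi/4$), designed so that the prescribed trajectory of $b$ along $Y(B)$ sweeps through snapshots that look like $TF2.1$. Concretely, I would place $c(0),d(0)$ on the $x$-axis with $|c(0)d(0)|$ far larger than the other sides, and place $a(0)=(x_A,y_A)$, $b(0)=(x_B,y_B)$ so that $a(0)$ is farther from $\overline{c(0)d(0)}$ than $b(0)$ and $y_A - y_B > |x_A - x_B|$ (which is precisely the inequality equivalent to $\alpha_0 < \pi/4$). By $TF2.2$, any correct execution must keep $a, c, d$ fixed and move only $b$ along $Y(B)$ up to height $y_A$, independently of chirality and of any fixed disorientation.

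Next I would let the $\ASY$ adversary activate $b$ first, have $b$ compute the non-null destination $(x_B,y_A)$, and then drag $b$'s $\M$ phase out arbitrarily long. I would catch $b$ at an intermediate point $b^* = (x_B,y^*)$ with $y_A - |x_A-x_B| < y^* < y_A$. The key claim is that the instantaneous snapshot $S^* = (a(0),b^*,c(0),d(0))$ is itself a valid TF initial configuration in the $TF2.1$ regime: convexity, uniqueness of $\overline{c(0)d(0)}$ as the longest side, and the ``not a trapezoid'' property all survive this small downward displacement of $b$ provided the constants in $C_0$ were chosen with sufficient slack, while the angle recomputed at $S^*$, $\alpha^* = \arccos((y_A-y^*)/\sqrt{(x_A-x_B)^2+(y_A-y^*)^2})$, exceeds $\pi/4$ exactly because $y_A - y^* < |x_A-x_B|$. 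In $S^*$ the role of $A$ is still played by the robot at $a(0)$, so $TF2.1$ forces $\mathcal{A}$ to prescribe a non-null destination for $a$ on $S^*$: otherwise $a$ would never move and $TF2.1$ would be violated.

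Finally, the adversary activates $a$ at the instant $b=b^*$ inside the execution from $C_0$. Since $\OB$ robots are oblivious, $a$'s computed destination depends only on the snapshot it reads, which is $S^*$; so $a$ commits to the same non-null move dictated by $\mathcal{A}(S^*)$ and eventually executes it, contradicting the $TF2.2$ requirement that $a(t) = a(0)$ for every $t\ge 0$. The main obstacle is the geometric bookkeeping in the second paragraph: I must ensure that the angle $\alpha$ crosses $\pi/4$ \emph{while $b$ is still in the interior of its prescribed segment} and that throughout the sweep the snapshot never degenerates (stays convex, keeps $\overline{c(0)d(0)}$ strictly longest, and stays a non-trapezoid). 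The single inequality $y_A - y_B > |x_A - x_B|$ paired with a sufficiently long $\overline{c(0)d(0)}$ simultaneously secures all three conditions, so the whole argument reduces to verifying one explicit coordinate example.
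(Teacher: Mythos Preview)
Your approach is the mirror image of the paper's: the paper starts on the boundary $\alpha=\pi/4$ (hence in $TF2.1$), lets $a$ begin its move, and activates $b$ mid-move once the snapshot has slipped into the other clause; you start strictly inside $TF2.2$, let $b$ move, and activate $a$ once the snapshot has crossed into $TF2.1$. The punchline is identical in both: by obliviousness, the second robot cannot tell the intermediate snapshot from a fresh initial configuration governed by the \emph{other} clause, so it is forced to compute a non-null move, contradicting the clause that governs the true initial configuration.

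One gap to flag in your exposition: you write ``have $b$ compute the non-null destination $(x_B,y_A)$'', but nothing forces the algorithm to send $b$ all the way to height $y_A$ in a single cycle; it could inch upward. What you actually need is that in any correct execution from $C_0$ the height of $b$ is a continuous function of time that starts at $y_B$ and, by $TF2.2$, must equal $y_A$ at some finite time; hence it attains some value $y^*\in(y_A-|x_A-x_B|,y_A)$, and the adversary schedules $a$'s $\LK$ at precisely that instant (whether $b$ is mid-$\M$ or at rest). The paper sidesteps this bookkeeping by placing the initial configuration exactly on the threshold $\alpha=\pi/4$, so that the very first non-null move of the designated mover already puts the configuration into the opposite clause. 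Aside from this, your geometric analysis (the equivalence $\alpha_0<\pi/4\Leftrightarrow y_A-y_B>|x_A-x_B|$, and the preservation of convexity, the unique longest side, and the non-trapezoid property for sufficiently long $\overline{CD}$) is correct and more explicit than the paper's.
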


\begin{proof}

By contradiction, let ${\cal A}$ be an algorithm that always allows the four  $\OB$ robots to solve TF under the asynchronous  scheduler. Consider  the initial configuration
where  $a$ is further than $b$ from $\overline{CD}$, and   
$\alpha = \pi/4$. In this configuration, $a$ is  required to  move (along $Y(A)$) 
while no other robot is  allowed to move. Observe that, as soon as $a$ moves,
 it creates a configuration where $a$ is still  further than $b$ from $\overline{CD}$,  but  
  $\alpha' = \min \{ \angle{b(t)a(t)A'}, \angle{a(t)b(t)B'} \} < \pi/4$.
Consider now the execution of ${\cal A}$  in which
 $a$ is activated first,  and then $b$ is activated  while $a$ is moving; 
 in this execution, the configuration seen by $b$ requires  it to
to  move, violating  $TF2.1$ and contradicting the assumed correctness 
 of algorithm ${\cal A}$.

\end{proof}

Thus, by Lemmas \ref{lem:TF1} and \ref{lem:TFoblot},
we have  
\begin{theorem}
\label{thm:TFoblot}
\ \  $\OB^{A_M} > \OB^{A}$
\end{theorem}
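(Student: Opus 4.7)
My plan is to derive the theorem as a direct consequence of the two preceding lemmas together with the standard monotonicity principle that restricting an adversarial scheduler cannot shrink the class of solvable problems. Concretely, I would first invoke the observation (made explicit in the paper right after Lemma~2) that since every $A_M$-schedule is also a legal $A$-schedule with an additional constraint, any algorithm correct under $A$ remains correct under $A_M$, giving the unconditional inclusion $\OB^{A_M} \supseteq \OB^{A}$.

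To upgrade this non-strict inclusion to strict dominance, I need only produce a single problem lying in $\OB^{A_M} \setminus \OB^{A}$. Lemma~\ref{lem:TF1} accomplishes half of this by exhibiting a three-rule $\OB$ algorithm that solves TF under $A_M$, crucially exploiting the fact that while $a$ (or $b$) is moving no other robot can perform a \LK, so that the moving robot is observed only in its initial or final location. Lemma~\ref{lem:TFoblot} supplies the other half, showing that under the unrestricted $A$ the adversary can schedule $b$'s \LK\ midway through $a$'s move along $Y(A)$, at which point the angle $\alpha$ has dropped strictly below $\pi/4$ and $b$ would be forced to move in violation of the $TF2.1$ predicate. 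Combining the two lemmas yields the witness TF, and hence $\OB^{A_M} > \OB^{A}$ by the definition of $>$ from Section~\ref{sec:model}.

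The only conceptual care required is to keep track of the direction of the implication: restricting the adversary \emph{strengthens} the robots, so the more restricted scheduler $A_M$ solves at least as many problems as the unrestricted $A$, and the separator must be a problem solvable in the more restricted model but not in the looser one. With this orientation in place the argument reduces to a one-line deduction; the actual difficulty has already been absorbed into the design of TF and the impossibility argument of Lemma~\ref{lem:TFoblot}, so no further case analysis or calculation is needed here.
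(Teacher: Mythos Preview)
Your proposal is correct and matches the paper's own proof exactly: the theorem is stated as an immediate consequence of Lemmas~\ref{lem:TF1} and~\ref{lem:TFoblot} together with the trivial inclusion $\OB^{A_M} \geq \OB^{A}$, with TF serving as the separating problem. There is nothing to add or amend.
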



\begin{theorem}
$\OB^{S}>\OB^{A_{M}}>\OB^{A_{LC}}\equiv\OB^{A}$
\end{theorem}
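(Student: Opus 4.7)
The proof plan is to assemble the final three-way chain of comparisons from results already established in the excerpt, with essentially no new argument required for any of the three pieces.

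First I would handle the right-most equivalence $\OB^{A_{LC}} \equiv \OB^{A}$. This is a direct citation of the observation already recorded in the preliminaries: since robots in $\OB$ (and $\FS$) cannot observe colors of other robots, the only structural difference between $A_{LC}$ and $A$ -- namely, the atomicity of $\LK$ and $\CP$ -- has no effect on what such robots can compute, because their $\CP$ phase is purely a local computation that cannot be disturbed by other robots' actions regardless of scheduling. The excerpt states this equivalence explicitly at the end of Section 2, so here I would just invoke it.

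Next I would establish the middle strict dominance $\OB^{A_{M}} > \OB^{A_{LC}}$. The non-strict side $\OB^{A_{M}} \geq \OB^{A_{LC}}$ follows from combining Lemma 1(1)-(2): both $A_M$ and $A_{LC}$ lie above $A$, and by the equivalence above $\OB^{A_{LC}} \equiv \OB^{A}$, so $\OB^{A_M} \geq \OB^{A} \equiv \OB^{A_{LC}}$. The strict part then comes from Theorem 2, which exhibits \text{TF} as a problem in $\OB^{A_M} \setminus \OB^{A}$; applying the equivalence again, \text{TF} is also in $\OB^{A_M} \setminus \OB^{A_{LC}}$.

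Finally I would establish the left-most strict dominance $\OB^{S} > \OB^{A_M}$. The non-strict direction $\OB^{S} \geq \OB^{A_M}$ is Lemma 1(2). For strictness, I would use \MLCv\ as the separating problem: Lemma 3 (\text{CNV-WOC1}) places \MLCv\ in $\OB^{S}$, and Lemma 4 (\text{CNV-WOC2}) states \MLCv\ $\notin \OB^{A_M}$ -- note that the proof of Lemma 4 is already stated against an $A_M$ adversary, so no strengthening is needed. Hence $\MLCv \in \OB^{S} \setminus \OB^{A_M}$, giving $\OB^{S} > \OB^{A_M}$.

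There is essentially no obstacle here; the real work has been done in Lemmas 3, 4 and Theorem 2, and in the structural remark on $\OB^{A_{LC}} \equiv \OB^{A}$. The only thing to double-check while writing the proof is that the separators used on each link are actually witnessed against the correct adversary (MLCv against $A_M$ for the top link, TF against $A$ -- equivalently $A_{LC}$ -- for the middle link), which the excerpt has already arranged.
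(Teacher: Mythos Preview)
Your proposal is correct and mirrors the paper's own proof: the equivalence $\OB^{A_{LC}}\equiv\OB^{A}$ is quoted from the preliminaries, the strict dominance $\OB^{A_M}>\OB^{A_{LC}}$ is obtained from Theorem~\ref{thm:TFoblot} via that equivalence, and $\OB^{S}>\OB^{A_M}$ is obtained from Lemmas~\ref{lem:CNV-WOC1} and~\ref{lem:CNV-WOC2}. You are slightly more explicit than the paper in spelling out the non-strict directions via Lemma~1, but the substance is identical.
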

\begin{proof} (1) The equivalence $ \OB^{A_{LC}}\  \equiv \OB^{A}$ holds because, by definition, $\OB$ robots cannot distinguish between $A_{LC}$ and $A$; then, by 
Theorem \ref{thm:TFoblot}, $\OB^{A_{M}}\  > \OB^{A_{LC}}$.
(2) It follows  from Lemmas \ref{lem:CNV-WOC1} and \ref{lem:CNV-WOC2}. 
(3) It follows from (1) and Theorem \ref{th:domFSOBSoverFSOBAM}. \end{proof}

\section{The  \texorpdfstring{$\FC$}{FC} Computational Landscape}\label{sec:FCOM}

\subsection{Separating \SSY\ from \ASY\  in \texorpdfstring{$\FC$}{FC}}

We have seen (Theorem \ref{th:domFSOBSoverFSOBAM}) that,
to overcome the limitations imposed by asynchrony,
the robots must have some additional power with
respect to those held in $\OB$.

In this section, we show that  the communication capabilities of $\FC$ 
are not sufficient.
In fact, we prove that, under \SSY,  the
robots  in  $\FC$ are strictly more powerful than
under  ${\cal A}_{M}$, thus separating
\SSY\ from \ASY\  in $\FC$. 
To do so, we use the problem MLCv again. 

Observe that \MLCv\ can be solved even in $\OB^S$ (Lemma~\ref{lem:CNV-WOC1}), and thus
in $\FC^S$.

\begin{lemma}\label{lem:MLCv-pos}
$\MLCv \in \FC^{S}$; this holds even under variable disorientation, non-rigid movement and in absence  of chirality.
\end{lemma}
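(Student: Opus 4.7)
The plan is to observe that this statement follows essentially for free from Lemma~\ref{lem:CNV-WOC1}, combined with the trivial inclusion $\OB^{K} \subseteq \FC^{K}$ for any scheduler $K\in{\cal S}$ (part (4) of the inequality lemma). Since Lemma~\ref{lem:CNV-WOC1} already exhibits an algorithm solving \MLCv\ in $\OB^{S}$, and since an $\FC$ robot can simulate an $\OB$ robot by simply ignoring the colors broadcast by the other robot and never changing its own light, the very same algorithm witnesses $\MLCv \in \FC^{S}$.

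Concretely, I would briefly recall the ``move-to-half-distance'' strategy: whenever a robot is activated under \SSY, it computes the midpoint of the segment joining itself to the other robot in its own local coordinate system, and sets that midpoint as its destination. This rule uses only the perceived position of the other robot, not orientation, not unit of distance beyond what is inherent in the local coordinate system, and not any color information. I would then verify each conjunct of the $MLC$ predicate in turn: the destination always lies on the segment $\overline{r(t)q(t)}$, so the motion stays on the original connecting line (the line is preserved across rounds because under \SSY\ both robots move simultaneously, if at all, along that same line); the two robots never cross (each stops at or before the midpoint, so the ``behind'' condition of $CLC$ is preserved); the inter-robot distance is non-increasing at every round (each step can only shrink it, by a factor of at least $1/2$ when a robot moves rigidly, and by some nonnegative amount otherwise), giving the $MLC$ monotonicity clause; and in any fair \SSY\ execution each robot is activated infinitely often, which forces the distance to tend to $0$, yielding convergence to a common point.

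The robustness claims require one additional sentence each. Variable disorientation is harmless because the midpoint of the segment to the only other robot is a coordinate-free geometric object, invariant under rotations and reflections of the local frame. Absence of chirality is likewise irrelevant for the same reason. Non-rigid movement is handled by the adversary being allowed to stop a robot before it reaches its destination; but any such interrupted move still lands on the segment between the robot's starting point and the observed midpoint, hence inside the original segment, and it still does not cross the other robot, so both the collisionless and monotone-distance predicates continue to hold. Convergence under non-rigid motion follows from the standard assumption that each non-null move covers at least a fixed minimum distance $\delta>0$ before it can be interrupted.

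There is essentially no obstacle here: the work was done in Lemma~\ref{lem:CNV-WOC1}, and the present statement amounts to pointing out that the $\OB$ solution lifts verbatim to $\FC$. The only thing to be slightly careful about is to confirm that none of the three robustness conditions (variable disorientation, non-rigidity, no chirality) are already used implicitly in some step of the midpoint argument, which as noted above they are not.
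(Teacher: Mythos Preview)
Your proposal is correct and matches the paper's own justification exactly: the paper simply observes that \MLCv\ is solvable in $\OB^{S}$ by Lemma~\ref{lem:CNV-WOC1} and hence in $\FC^{S}$ by the trivial inclusion, without even writing out a separate proof. Your additional verification of the robustness clauses is more detailed than what the paper provides but entirely in line with it.
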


On the other hand, \MLCv\ 
 is not solvable in $\FC^{A_{M}}$.

\begin{lemma}\label{lem:MLCv-imp}
$\MLCv \not \in  \FC^{A_{M}}$.  
\end{lemma}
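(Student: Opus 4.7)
The plan is to adapt the adversarial strategy behind the proof of Lemma~\ref{lem:CNV-WOC2} to $\FC$, the key twist being that $\FC$ robots can signal through their persistent color register while remaining oblivious. The setup is identical to that proof: two robots $r,q$ with identical local coordinate systems, a shared unit of distance, common initial color $c_0$, and initial distance $d$ on the $X$-axis. I assume for contradiction that an algorithm $\mathcal{A}$ solves \MLCv{} in $\FC^{A_M}$, and write $A$ for the (next-color, move-length) rule it induces on a single snapshot.

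The first step is to extract a symmetric ``first-move'' state. In the fully synchronous symmetric execution of $\mathcal{A}$ from $(d,c_0)$ both robots are always in identical (distance, color) states, and by the monotonicity clause of \MLCv{} every non-null move strictly shortens the distance; so the distance remains pinned at $d$ until the first non-null move. Meanwhile, the common color walks deterministically through a sequence in the finite palette: if this sequence never produced a move, it would close into a cycle and the execution would stagnate, contradicting convergence. Hence there is a first round with common state $(d,c^\ast)$ at which $A(d,c^\ast)=(c',f)$ for some $f>0$. Exploiting Move-atomicity, the adversary can replay those null-move rounds in lockstep under $A_M$ to drive both robots to $(d,c^\ast)$, then have both Look simultaneously so each records the snapshot $(d,c^\ast)$, then complete $r$'s Compute and instantaneous Move of length $f$ while deferring $q$'s Compute arbitrarily. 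Because $q$ has not Computed, its displayed color stays at $c^\ast$, so every subsequent Look by $r$ returns a snapshot $(d_i,c^\ast)$ with $d_0=d$ and $d_{i+1}=d_i-A(d_i,c^\ast).f$. Once we have shown this solo sequence satisfies $d_k<f$ for some finite $k$, the adversary finally wakes $q$ and lets it execute its pending Move of length $f$ (still driven by the stale snapshot $(d,c^\ast)$); $q$'s jump then sends it past $r$'s current position, violating the collision-less clause of $CLC$ and hence of \MLCv{}.

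The delicate part is precisely forcing $d_k<f$ in finitely many solo cycles, and this is where the proof genuinely diverges from its $\OB$ analog. In $\OB$ the claim follows at once from identifying $r$'s solo sequence with the distance sequence under \RR{}, which is legitimate because $\OB$ behavior depends only on distance; in $\FC$ the color carried by $q$ evolves under \RR{} (refreshing after each $q$-cycle) but is frozen at $c^\ast$ in the adversarial solo execution, so the identification breaks. My plan to close this gap is a pigeonhole argument on the finite palette: if the solo sequence instead stabilized at a limit $d^\infty\ge f$ with $A(d^\infty,c^\ast).f=0$, I would reapply the earlier finite-palette/convergence argument at the new base distance $d^\infty$ to extract a fresh trigger color $c^{\ast\ast}$ with $A(d^\infty,c^{\ast\ast}).f>0$, iterate to obtain a strictly descending sequence of stabilization distances with associated triggers, and invoke finiteness of the color palette to find two distinct scales sharing a trigger pair. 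The resulting repetition is then stitched into a single extended adversarial execution --- one in which short, carefully scheduled activations of $q$ refresh its color at the precisely needed moments so that $r$'s consecutive solo moves cumulatively push the distance below the original move length $f$ --- yielding the required $d_k<f$. Executing this color-chaining cleanly while respecting $A_M$ fairness is the main technical obstacle I anticipate.
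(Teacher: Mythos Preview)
Your approach diverges from the paper's in one crucial modeling choice: you impose fixed disorientation and a shared unit of distance (mirroring the $\OB$ proof of Lemma~\ref{lem:CNV-WOC2}), whereas the paper exploits the default \emph{variable} disorientation. That single change dissolves the difficulty you isolate. Under variable disorientation the adversary forces every snapshot to show the opponent at perceived distance exactly~$1$; since an $\FC$ robot's decision depends only on the snapshot, the move at real distance $d$ when the opponent shows color $c$ has length $F(c)\cdot d$ for a ratio $F(c)$ independent of~$d$. The adversary drives both robots to a symmetric state with $0<F(c_0)\le 1/2$, has both Look, defers $r$'s Compute, and cycles $q$ alone: $q$ always sees $r$'s still-unchanged color $c_0$, so each cycle multiplies the real distance by the fixed factor $1-F(c_0)$, and after $\lfloor\log_{1-F(c_0)}F(c_0)\rfloor+1$ steps the distance is below $F(c_0)d_0=f(c_0,d_0)$. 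Releasing $r$'s deferred Compute/Move then forces the overshoot. No color pigeonhole is needed.

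Your fixed-disorientation route has a genuine gap precisely where you flag it. When $r$ runs solo against $q$'s frozen color $c^\ast$, the move at step $i$ is $A(d_i,c^\ast).f$, and you must show $d_k<f$ for some finite~$k$. Your fallback addresses only the case where the sequence \emph{reaches} a value $d^\infty$ with $A(d^\infty,c^\ast).f=0$; it says nothing about the case where $d_i$ is strictly decreasing yet bounded below by $f$ (there is no continuity assumption on $A(\cdot,c^\ast)$, so $g(d_i)\to 0$ does not yield a stuck point). Even in the stuck-point case, ``reapplying the finite-palette/convergence argument at $d^\infty$'' is not sound as stated: the configuration there is $(d^\infty,c_r,c^\ast)$ with $c_r$ generally different from $c^\ast$, so the FSYNC-symmetric extraction of a fresh trigger color no longer applies, and $q$ still carries a pending Compute on the stale snapshot $(d,c^\ast)$ that cannot be discarded. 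The proposed stitching of refreshed $q$-colors into a single $A_M$-fair execution is left entirely unspecified. The paper's variable-disorientation trick bypasses all of this in a few lines; if you want your stronger fixed-disorientation claim, you would need a substantially different argument.
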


\begin{proof}
Let us consider two robots, $r$ and $q$, and show that the adversary can activate them in a way that exploits variable disorientation to cause them to violate the condition of $\MLCv$.

We consider the execution in which the adversary always forces the robots to perceive the distance between $r$ and $q$ as 1, which is equivalent to the current unit distance of $X$. We define a function $f(c,d)$ as the length of the move taken by a robot when it observes color $c$ of the other robot and the true distance between the two robots is $d$ in the last Look phase. Since the distance always appears as 1 to the robots, the value $F(c) = f(c,d)/d$ is independent of $d$. We denote the initial color of the robots as $c_0$ and assume that $F(c_0)>0$, which does not affect generality as the adversary can activate $r$ and $q$ multiple times until both robots have a color $c$ such that $F(c)>0$. Without loss of generality, we also assume that $F(c_0)\le 1/2$. If $F(c_0)> 1/2$, it follows that $r$ and $q$ pass each other when the adversary activates both robots at time step 0, violating the condition of $\MLCv$. Therefore, we assume $0 < F(c_0)< 1/2$ without loss of generality.

Starting from time step 0, the adversary refrains from activating $r$ and instead activates only $q$ to move $\lfloor \log_{1-F(c_0)} F(C_0) \rfloor +1$ times. Since $q$ always perceives $c_0$ as the color of $r$ during this period, the distance between $r$ and $q$ decreases by a factor of $(1-F(C_0))$ with each move of $q$. As a result, the distance between $r$ and $q$ becomes smaller than $F(C_0) d_0 = f(c_0,d_0)$ after this period, where $d_0$ is the initial distance between $r$ and $q$. The adversary then activates $r$ to perform its Move phase. $r$ moves a distance of $f(c_0, d_0)$ and overtakes $q$, thereby violating the condition.
\end{proof}

From Lemmas \ref{lem:MLCv-pos} and \ref{lem:MLCv-imp},
and since  $\FC^{A_M} \geq \FC^{A}$ by definition, 
the main result now follows:

\begin{theorem} 
\label{th:domFCSoverFCA} 
 $\FC^{S} >$ $\FC^{A}$ 
\end{theorem}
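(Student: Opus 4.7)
The plan is to combine the two preceding lemmas with the trivial scheduler monotonicity to witness a strict separation via the problem \MLCv. The ``easy'' direction $\FC^{S} \supseteq \FC^{A}$ is already in hand: it is an instance of the chain $M^{S} \geq M^{A_{LC}} \geq M^{A}$ from the trivial Lemma of Section~2 with $M=\FC$. So the task reduces to exhibiting one problem that lies in $\FC^{S}$ but not in $\FC^{A}$.

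For this I would take \MLCv\ as the witness. Lemma \ref{lem:MLCv-pos} places \MLCv\ in $\FC^{S}$, and Lemma \ref{lem:MLCv-imp} excludes it from $\FC^{A_{M}}$. The remaining short step is to propagate this unsolvability from the $M$-atomic scheduler down to fully unrestricted asynchrony: the trivial Lemma of Section~2 records $\FC^{A_{M}} \supseteq \FC^{A}$, since tightening the adversary's atomicity can only enlarge the class of solvable problems. Contrapositively, $\MLCv \notin \FC^{A_{M}}$ forces $\MLCv \notin \FC^{A}$.

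Putting the pieces together, $\MLCv \in \FC^{S} \setminus \FC^{A}$, which combined with $\FC^{S} \supseteq \FC^{A}$ matches the definition of $\FC^{S} > \FC^{A}$ from Section~2. Since both supporting lemmas are already established, there is no substantive obstacle inside the theorem itself; the only point worth being explicit about is the direction of the scheduler inclusion (unsolvability under a more restricted adversary implies unsolvability under a less restricted one). This mirrors exactly the template used to derive $\OB^{S} > \OB^{A}$ from Lemmas \ref{lem:CNV-WOC1} and \ref{lem:CNV-WOC2}, and applies verbatim in the $\FC$ setting.
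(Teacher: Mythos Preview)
Your proposal is correct and mirrors the paper's own argument essentially verbatim: the paper derives the theorem directly from Lemmas~\ref{lem:MLCv-pos} and~\ref{lem:MLCv-imp} together with the trivial inclusion $\FC^{A_M} \geq \FC^{A}$, exactly as you do.
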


\subsection{Refining the \texorpdfstring{$\FC$}{FC} Landscape}
\label{RefineFCOM}


In this section, we complete the characterization of the asynchronous landscape of $\FC$ proving
$\FC^{A}\equiv 
\FC^{A_{CM}}< \FC^{A_{LC}} \equiv \FC^{S}$.
Specifically, we prove the following two theorems. 

\begin{theorem}
\label{th:FCACM=FCA}
 $\FC^{A_{CM}} \equiv \FC^{A}$. This holds in absence of chirality.
 \end{theorem}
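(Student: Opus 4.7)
The direction $\FC^{A_{CM}} \geq \FC^{A}$ is immediate from the preceding monotonicity lemma, since $A_{CM}$ is a strictly more restricted scheduler than $A$. The nontrivial direction, $\FC^{A_{CM}} \leq \FC^{A}$, requires showing that any algorithm $\mathcal{A}$ solving a problem $P$ for $\FC$ robots under $A_{CM}$ can be converted into an algorithm $\mathcal{A}'$ solving $P$ for $\FC$ robots under $A$. The plan is to build a generic simulation that uses the externally visible lights of $\FC$ as a traffic-control signal, neutralizing the single extra adversarial power of $A$ over $A_{CM}$, namely the ability to schedule a Look concurrently with another robot's Move; Compute is instantaneous, so the Compute-atomicity component of $CM$-atomicity is automatically respected.

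Concretely, the simulation extends $\mathcal{A}$'s color set $C$ to $C \times \{0, 1\}$, where the second coordinate is a one-bit ``moving'' flag. A robot $r$ executing $\mathcal{A}'$ proceeds as follows at each activation: after its Look, if $r$ sees any other robot carrying flag $1$, it sets its own color to $(c_0, 0)$ for a fixed neutral $c_0 \in C$ and performs a null Move (so no action is taken on a possibly mid-transit snapshot). Otherwise $r$ runs $\mathcal{A}$ on the snapshot using only the first coordinates of the observed colors; if $\mathcal{A}$ prescribes a null Move with new color $c'$, then $r$ adopts $(c', 0)$, and if $\mathcal{A}$ prescribes a non-null Move to destination $p$ with color $c'$, then $r$ adopts $(c', 1)$ and moves to $p$. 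At the next activation of the same $r$, provided the snapshot shows no other flag set, the invocation of $\mathcal{A}$ naturally overwrites the stale $(c', 1)$ with the fresh output of $\mathcal{A}$, so the flag is self-clearing on the owner's next cycle.

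Correctness is established by exhibiting, for every $A$-execution $E'$ of $\mathcal{A}'$, an $A_{CM}$-execution $E$ of $\mathcal{A}$ that agrees with $E'$ on the sequence of ``stable'' configurations (those where no robot carries flag $1$): the ``wait'' activations in $E'$ have no counterpart in $E$, while each ``act'' activation matches one cycle of $\mathcal{A}$ on an identical Look-snapshot. The temporal-geometric predicates defining $P$, which depend only on the position-time trajectories, are thus satisfied in $E'$ iff they are in $E$, and they are in $E$ by the assumed correctness of $\mathcal{A}$. Fairness of $A$ ensures that every flag is cleared eventually, so no robot waits forever.

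The main obstacle is the race in which two (or more) robots concurrently reach the ``act'' branch during overlapping $A$-cycles, each setting its flag to $1$ before observing the other's flag; this creates joint in-flight movements that need not be reproducible by any serialized $A_{CM}$-schedule. The fix is to refine $\mathcal{A}'$ with an intermediate \texttt{pending} state implementing a two-step commit: a robot first marks itself \texttt{pending} with a null Move, and only if its next activation finds no other \texttt{pending} or moving marker does it promote to a real Move. Simultaneous pendings are arbitrated by a chirality-independent priority rule derived from interrobot distances, and losing candidates reset to the stable color. Verifying that this refinement (i) preserves fairness under every valid $A$-schedule, (ii) makes every successful commit occur in a snapshot reproducible by some $A_{CM}$-schedule of $\mathcal{A}$, and (iii) remains correct even for symmetric configurations without chirality, is the technically delicate heart of the argument.
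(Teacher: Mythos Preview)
Your high-level strategy---a generic simulation that uses extra light bits as traffic control---is exactly what the paper does, but your concrete mechanism has a gap that is specific to $\FC$ and that the paper's construction is designed precisely to circumvent.

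The problem is the ``wait'' action. When $r$ sees another robot with flag $1$ and sets its own light to $(c_0,0)$, it destroys the first coordinate, i.e., $r$'s current $\mathcal{A}$-color. In $\FC$ a robot cannot read its own light, so it has no way to preserve that coordinate while updating the flag. But other robots \emph{do} need to see $r$'s correct $\mathcal{A}$-color: the next robot to ``act'' will feed the observed first coordinates into $\mathcal{A}$, and seeing $c_0$ instead of $r$'s last $\mathcal{A}$-output yields a snapshot that corresponds to no $A_{CM}$-execution of $\mathcal{A}$. The same issue recurs in your refinement: marking oneself \texttt{pending}, or resetting after losing arbitration, again overwrites the $\mathcal{A}$-color. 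Leaving the light unchanged during a wait is not a fix either, since then a robot that has just completed a move still carries flag $1$ and cannot clear it without knowing it is set, leading to mutual blocking.

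This is why the paper's simulation is far more elaborate than a single moving-bit: each robot maintains, in addition to $r.light$ (the $\mathcal{A}$-color, written only when $\mathcal{A}$ is actually executed), separate components $r.phase$, $r.state$, and crucially $r.\CoS.state$, a copy of the state set at the \emph{successor} location in a cyclic ordering of occupied points. A robot recovers its own state by reading what its \emph{predecessor} stored about it; this neighbor-relay is what lets an $\FC$ robot effectively read itself. Phases $1,2,3,m$ and the mega-cycle structure then enforce both the $CM$-atomic constraint and fairness. Your distance-based priority rule is also insufficient in symmetric configurations where robots have identical distance multisets; the paper uses the positional cyclic order (and remarks that chirality can be removed at the cost of more colors). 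In short, the simulation idea is right, but making it work in $\FC$ requires the indirection through neighbor-stored state, which your proposal does not have.
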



\begin{theorem}
\label{th:FCALC=FCS}
$\FC^{S} \equiv\FC^{A_{LC}}$.
\end{theorem}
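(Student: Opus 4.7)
The plan is to prove the equivalence by establishing both $\FC^{S} \geq \FC^{A_{LC}}$ and $\FC^{A_{LC}} \geq \FC^{S}$. The first inequality is immediate from Lemma~1(1), which gives $M^{S} \geq M^{A_{LC}}$ for every model $M$, and in particular for $M=\FC$.

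For the nontrivial converse $\FC^{A_{LC}} \geq \FC^{S}$, the approach is a generic simulation: given any algorithm $\mathcal{A}$ solving a problem $P$ in $\FC^{S}$, I construct an algorithm $\mathcal{A}'$ for $\FC^{A_{LC}}$ that simulates $\mathcal{A}$ round by round. The simulation augments $\mathcal{A}$'s color set with an auxiliary constant-size status flag, used by $\mathcal{A}'$ to arbitrate the one source of asynchrony remaining under $A_{LC}$, namely the possibly long $\M$ phases of other robots. The key observation is that in $A_{LC}$ the atomicity of $\LK$ and $\CP$ guarantees that every Look produces a consistent snapshot of positions and colors; the only obstruction to treating this snapshot as a legal $S$-configuration is the presence of some other robot whose $\M$ is still in progress. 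Accordingly, a robot's flag is set to \emph{move-in-progress} exactly when the robot is about to begin a non-null $\M$, and to \emph{quiescent} otherwise. The rule for a looking robot $r$ is then: if every other robot carries the \emph{quiescent} flag, the snapshot is equivalent to some reachable $S$-configuration and $r$ simulates one round of $\mathcal{A}$ with itself as the only activated robot, publishes the resulting $\mathcal{A}$-color, sets its flag to \emph{move-in-progress} exactly when the computed move is non-null, and executes that move; if instead some visible robot carries the \emph{move-in-progress} flag, $r$ performs a null move and sets its own flag to \emph{quiescent}, thereby simulating an $S$-round in which $r$ is not activated.

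Correctness then reduces to two claims that I would verify in turn. First, the sequence of real (non-null) moves actually performed by $\mathcal{A}'$ forms a valid execution of $\mathcal{A}$ under some $S$-scheduler, so the temporal-geometric predicates defining $P$ are inherited by $\mathcal{A}'$. Second, fairness is preserved: the fairness of the $A_{LC}$ scheduler together with the finite duration of each $\M$ guarantees that the globally \emph{quiescent} configuration recurs, so every robot obtains the chance to simulate a real round of $\mathcal{A}$ infinitely often and the embedded $S$-execution is itself fair.

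The main obstacle lies in the obliviousness of $\FC$: a robot cannot read its own light, so each cycle's light update must depend only on what is visible on the others. This leads to two delicate points. A waiting robot must rewrite its light, yet it must do so without corrupting the $\mathcal{A}$-color it is still supposed to be advertising; and a robot that has just finished a non-null $\M$ keeps its now-stale \emph{move-in-progress} flag visible until its next $\CP$, so one must argue that this lag never causes a cascade of spurious waits that breaks fairness. Resolving these points requires exploiting the total ordering of $\LK$--$\CP$ events guaranteed by $A_{LC}$ and a careful bookkeeping which, from the observed configuration alone, identifies which simulated $S$-round each visible color belongs to; once that embedding is justified, the predicates of $P$ transfer from $\mathcal{A}$ to $\mathcal{A}'$, yielding $\FC^{A_{LC}}\geq\FC^{S}$ and completing the equivalence.
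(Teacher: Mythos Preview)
Your overall plan---simulate an $S$-execution under $A_{LC}$ by tagging lights with a status flag---is the paper's plan too, and the direction $\FC^{S}\geq\FC^{A_{LC}}$ via Lemma~1 is correct. But the second obstacle you name is fatal to the scheme as you describe it, and your one-line resolution does not work.

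Concretely: take two robots $r,q$, both initially \emph{quiescent}, and an algorithm $\mathcal{A}$ that always prescribes a non-null move. The $A_{LC}$ adversary activates $q$ alone; $q$ sees $r$ \emph{quiescent}, runs $\mathcal{A}$, sets \emph{move-in-progress}, and moves. While $q$ moves, the adversary activates $r$, which sees $q$'s flag and waits, setting itself \emph{quiescent}. Once $q$'s move ends, the adversary activates $q$ again before $r$; since $q$'s rule depends only on the flags it can see, $q$ again sees only $r$'s \emph{quiescent} flag, runs $\mathcal{A}$, sets \emph{move-in-progress}, and moves. Iterating, $r$ is activated infinitely often by the fair $A_{LC}$ scheduler yet never sees an all-\emph{quiescent} snapshot, so the embedded $S$-execution starves $r$. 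The ``finite duration of each $\M$'' is irrelevant; the adversary simply slips $q$'s next atomic $\LK$-$\CP$ in ahead of $r$'s every time. Your claim that the globally \emph{quiescent} configuration recurs is therefore false.

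The paper closes this gap with substantially more machinery than a binary flag: three per-robot states $\{W,M,F\}$, a four-phase protocol, and explicit \emph{mega-cycle} resets that bar a robot whose state is $F$ from re-executing $\mathcal{A}$ until every robot has executed once. That mechanism, however, requires each robot to branch on its \emph{own} state---which in $\FC$ it cannot read. The paper's answer to \emph{that} is a neighbor-echo using chirality: each robot at location $x$ additionally advertises the state-set at $\CoS(x)$, so a robot reconstructs its own state from its predecessor's advertisement minus what it sees at its own location. None of this is in your sketch, and invoking ``the total ordering of $\LK$-$\CP$ events'' does not supply it; that ordering constrains what \emph{others'} colors look like in a snapshot, not what your own color was.
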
 

We will prove Theorems \ref{th:FCACM=FCA} and \ref{th:FCALC=FCS} in Sections \ref{sec:FCACM=FCA} and \ref{sec:FCALC=FCS}, respectively.
By these two theorems and Theorem \ref{th:domFCSoverFCA}, we immediately obtain
the following separation:
\begin{theorem}
\label{th:FLC>FCM}
 $\FC^{A_{LC}} > \FC^{A_{CM}}$.
\end{theorem}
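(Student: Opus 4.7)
The plan is to obtain the separation essentially for free by chaining together the three results already in hand: the two equivalence theorems established in this section (Theorems~\ref{th:FCACM=FCA} and~\ref{th:FCALC=FCS}) together with the strict separation under the classical synchrony/asynchrony pair (Theorem~\ref{th:domFCSoverFCA}). No new construction, simulation, or impossibility argument is needed at this stage; the content of Theorem~\ref{th:FLC>FCM} is purely that the \SSY\ vs.\ \ASY\ gap for $\FC$ can be relocated to a gap between two intermediate asynchronous schedulers.

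Concretely, I would first invoke Theorem~\ref{th:FCALC=FCS} to rewrite $\FC^{A_{LC}}$ as the equivalent class $\FC^{S}$. Next, I would invoke Theorem~\ref{th:FCACM=FCA} to rewrite $\FC^{A_{CM}}$ as the equivalent class $\FC^{A}$. Both rewrites preserve the relation $>$ (and indeed any of the relations $\geq, \equiv, \bot$), because computational equivalence means set-equality of the solvable-problem classes, so substituting equal classes on either side of $>$ is sound. Finally, I would apply Theorem~\ref{th:domFCSoverFCA}, which gives $\FC^{S} > \FC^{A}$, to conclude $\FC^{A_{LC}} > \FC^{A_{CM}}$.

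There is essentially no obstacle: the only thing to double-check is that the separator witnessing $\FC^{S} \setminus \FC^{A} \neq \emptyset$, namely \MLCv\ (Lemmas~\ref{lem:MLCv-pos} and~\ref{lem:MLCv-imp}), transports correctly through the two equivalences. This is automatic once Theorems~\ref{th:FCACM=FCA} and~\ref{th:FCALC=FCS} are established, since they assert equality of the solvable classes; in particular \MLCv\ lies in $\FC^{A_{LC}}$ (via $\FC^{S}$) but not in $\FC^{A_{CM}}$ (since otherwise it would be in $\FC^{A}$, contradicting Lemma~\ref{lem:MLCv-imp}). Thus the theorem follows as an immediate corollary.
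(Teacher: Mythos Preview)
Your proposal is correct and matches the paper's own argument exactly: the paper states that the separation is obtained immediately from Theorems~\ref{th:FCACM=FCA}, \ref{th:FCALC=FCS}, and~\ref{th:domFCSoverFCA}, which is precisely the chaining you describe.
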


Note that, since $A_{CM} \leq A_{M} \leq A$, Theorem \ref{th:FCACM=FCA} implies the following corollary.

\begin{corollary}\label{co:FCACM=FCAM}
 $\FC^{A_{M}} \equiv \FC^{A}$. This holds in absence of chirality.
 \end{corollary}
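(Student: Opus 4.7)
The plan is to derive this corollary as an immediate structural consequence of Theorem~\ref{th:FCACM=FCA} together with the scheduler-monotonicity principle recorded in the preliminaries; no fresh simulation or impossibility argument is needed.

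First, I would invoke the observation that restricting the adversarial scheduler can only enlarge (never shrink) the class of problems solvable in a given robot model. Since $A_{CM}$ is a restriction of $A_{M}$ (it additionally forbids a $\LK$ while another robot executes $\CP$), and $A_{M}$ is itself a restriction of $A$, this yields the chain
\[
\FC^{A}\ \subseteq\ \FC^{A_{M}}\ \subseteq\ \FC^{A_{CM}}.
\]

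Next, Theorem~\ref{th:FCACM=FCA} asserts $\FC^{A_{CM}} \equiv \FC^{A}$, which supplies the reverse inclusion $\FC^{A_{CM}}\subseteq \FC^{A}$. Stitching the two together produces
\[
\FC^{A}\ \subseteq\ \FC^{A_{M}}\ \subseteq\ \FC^{A_{CM}}\ \subseteq\ \FC^{A},
\]
so all three classes coincide, and in particular $\FC^{A_{M}} \equiv \FC^{A}$. The ``absence of chirality'' clause transfers automatically: since the simulation underlying Theorem~\ref{th:FCACM=FCA} makes no use of a common rotational orientation, the sandwich above immediately carries the chirality-free guarantee to the $A_M$ setting. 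The only conceptual point requiring care is the direction of the scheduler-monotonicity inclusions; this is not really an obstacle, since a more constrained adversary offers strictly fewer adversarial executions, hence at least as many algorithmic successes, fixing the direction unambiguously.
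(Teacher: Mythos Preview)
Your argument is correct and mirrors the paper's own derivation: sandwich $\FC^{A_M}$ between $\FC^{A}$ and $\FC^{A_{CM}}$ by scheduler monotonicity, then collapse the chain using Theorem~\ref{th:FCACM=FCA}. One small inaccuracy worth fixing: the simulation underlying Theorem~\ref{th:FCACM=FCA} \emph{does} use chirality (to fix the cyclic $\CoP/\CoS$ ordering of occupied positions); the chirality-free claim is justified in the paper by a remark that the construction can be modified with additional colors, so you should invoke the theorem's stated ``absence of chirality'' clause directly rather than assert that the simulation itself is orientation-free.
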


\subsubsection{Proof of Theorem \ref{th:FCACM=FCA}}
\label{sec:FCACM=FCA}

In this section, we show hat
every problem solvable by a set of $\FC$ robots  under ${A_{CM}}$ can also be solved 
 under \ASY.  We do so constructively: we present a {\em simulation} algorithm 
for $\FC$ robots   that allows them to correctly execute in \ASY\ 
any protocol ${\cal A}$ given in input (i.e., all its executions under   \ASY\ are
equivalent to  some executions under ${A_{CM}}$).  
 The simulation algorithm (called 
 \textit{SIM})  makes each $\FC$ robot 
  execute   $\mathcal{A}$   infinitely often,
never violating the conditions of scheduler $CM$-{\bf atomic}-\ASY. 
To achieve this, {\em SIM} needs an activated robot to be able to retrieve 
some  information about its past (e.g., whether or not it has ``recently" executed
$\mathcal{A}$). Such information can  obviously be encoded and persistently stored by the robot in
the color of its own light; but, since an $\FC$ robot cannot see the color of its light, 
the robot cannot access the stored information. However, this information can be seen 
by the other robots, and hence can be communicated by some of them (via the color of their lights)  to the needing robot. This can be done efficiently as follows. Exploiting  chirality, the robots can agree
at any time on a circular ordering of the nodes where robots are located, so that for any such a location $x$
both its predecessor  $\CoP(x)$ and its successor  $\CoS(x)$ in the ordering are uniquely identified,
with  $\CoP(\CoS(x))=x$;  all robots located at  $x$ then become responsible for communicating the needed information to  the robots located at $\CoS(x)$\footnote{Although we use chirality to determine the cyclic order, this assumption can be circumvented by slightly increasing the number of light colors and deciding the color of the corresponding robot using local 'suc' and 'pred'~\cite{FPS19}.}.

\Newcodeline
\begin{algorithm}[t]
\caption{{\tt SIM(A)}:  predicates and subroutines for robot $r$ at location $x$} 

\label{algo:simA}
{\small
\begin{tabbing}
111 \= 11 \= 11 \= 11 \= 11 \= 11 \= 11 \= \kill
{\em Assumptions}: Let $x_0, x_1, \ldots, x_{m-1}$ be the circular arrangement on the configuration $C$ ($m \geq 2$), \crm
and let define $\CoS(x_i)=x_{i+1 \mod m}$ and $\CoP(x_i)= x_{i-1 \mod m}$.\crm\crm
%
{\em State Look}\crm
Observe, in particular,   
  $\CoP(x).state(\CoS.state)$, $\CoS(x).state(\CoS.state)$, and   $\rho.phase(\rho \neq r)$; \\
  as well as $r.state.here$ (the set of states seen by $r$ at its own location $x$. \\Note that, for this, $r$ cannot see its own  color).\crm\crm
\color{black}

{\em predicate}  {\em Is-all-phases(p:phase)} \crm
$\forall \rho(\neq r )(\rho.phase = p)$ \crm
\crm



{\em predicate}  {\em Is-phases-mixed(p, q: phase)} \crm
$\forall \rho(\neq r )[(\rho.phase =p)$  {\bf or} $(\rho.phase =q)$] \\ {\bf and} [{\bf not} {\em Is-all-phases(p)}] {\bf and} [{\bf not}  {\em Is-all-phases(q)}]\crm
\crm
{\em predicate}  {\em Is-exist-M}\crm
$\exists \rho(\neq r )[(\rho.state=M)$ {\bf or} $(M \in \rho.suc.state)]$\crm
\crm
{\em predicate}  {\em Is-all(s: state)}\crm
$\forall \rho(\rho.state=s)$ {\bf and} $own.state=\{s\}$\crm
\crm
{\em function}  {\em r.own.state}: set of states\crm
 \>$own.state \leftarrow  \CoP(x).\CoS.state- x.state.here$,\crm
 \>where $x.state.here$ corresponds
to  the set of states seen by $r$ at its own location $x$\crm 
\crm

{\em subroutine} {\em Copy-States-of-Neighbors}\crm
 \>r.\CoS.state $\leftarrow \CoS(x).state$\crm
\crm


{\em subroutine Reset-state-and-neighbor-state} \crm
\>\> $r.state \leftarrow W$\crm
\>\> $r.\CoS.state \leftarrow \{W\}$\crm
\crm
\end{tabbing}

}
\end{algorithm}

\Newcodeline
\begin{algorithm}
\caption{{\tt SIM(A)} - for robot $r$ at location $x$}
\label{algo:simAmain}
{\small
\begin{tabbing}
111 \= 11 \= 11 \= 11 \= 11 \= 11 \= 11 \= \kill
%

{\em State Compute}\crm
\Cl \> r.des $\leftarrow$ r.pos  \crm
\Cl \> {\bf if} {\em Is-all-phases(1)} {\bf then} \crm
\Cl \>\>$Copy$-$state$-$of$-$Neighbors$\crm
\Cl \>\>$r.phase \leftarrow  1$\crm
\Cl \>\>{\bf if} {\em Is-all(F)} {\bf then}  $r.phase \leftarrow m$\crm
\Cl \>\>{\bf else if} {\em ($\exists \rho(\neq r )[(\rho.state=M)$)} {\bf then} $r.phase \leftarrow 2$\crm

\Cl \>\>{\bf else if} ({\em r.own.state=} $\{W\}$)  {\bf then}\crm
\Cl \>\>\>\> Execute the {\tt Compute} of  $\mathcal{A}$  //   determining my color $r.light$ and destination $r.des  $ //  \crm
\Cl \>\>\>\>$r.state \leftarrow M$\crm

\Cl \> {\bf else if} {\em Is-all-phases(2)} {\bf then} \crm%
\Cl \>\>$r.phase \leftarrow  3$\crm
\Cl \>\>$Copy$-$state$-$of$-$Neighbors$ \crm

\Cl \> {\bf else if} {\em Is-all-phases(3)} {\bf then}\crm
\Cl \>\>$Copy$-$state$-$of$-$Neighbors$\crm
\Cl \>\>$r.phase \leftarrow  3$\crm
\Cl \>\> {\bf if} {\em Is-exist-M}  {\bf then}\crm

\Cl \>\>\>{\bf if} {\em r.own.state=} $\{M\}$ {\bf then}\crm
\Cl \>\>\>\>$r.state \leftarrow F$\crm
\Cl \>\>\>$Copy$-$state$-$of$-$Neighbors$ \crm
\Cl \>\> {\bf else}// $no$-$M$//\crm
\Cl \>\>\>$r.phase \leftarrow  1$\crm
\Cl \>\>\>$Copy$-$state$-$of$-$Neighbors$ \crm
\Cl \> {\bf else if} {\em Is-phase-mixed(1,2)} {\bf then}\crm
\Cl \>\>\>$r.phase \leftarrow  2$\crm
\Cl \> {\bf else if} {\em Is-phase-mixed(2,3)} {\bf then}\crm
\Cl \>\>\>$r.phase \leftarrow  3$\crm
\Cl \>\>\>$Copy$-$state$-$of$-$Neighbors$ \crm

\Cl \> {\bf else if} {\em Is-phase-mixed(1,3)} {\bf then}\crm
\Cl \>\> \>$r.phase \leftarrow  1$ \crm
\Cl \>\>\>$Copy$-$state$-$of$-$Neighbors$ \crm

\Cl \> {\bf else if} {\em Is-all-phases(m)} {\bf then} //Reset state//\crm
\Cl \> \>{\em Reset-state-and-neighbor-state} \crm
\Cl \>\> {\bf if} $\exists \rho \neq r (\rho.state=F)$ {\bf then}\crm
\Cl \> \> \>r.phase $\leftarrow m$  \crm
\Cl \>\> {\bf else} //There does not exist $F$//\crm
\Cl \> \> \>r.phase $\leftarrow 1$ \crm

\Cl \> \label{line:1mmixed}{\bf else if}  {\em Is-phase-mixed(1,m)}  {\bf and}   {\em Is-all(F)}  {\bf then} $r.phase \leftarrow m$ \crm
\Cl \> {\bf else if}  {\em Is-phase-mixed(1,m)}  {\bf and}   {\em Is-all(W)}  {\bf then} $r.phase \leftarrow 1$ \crm

\crm

{\em State Move}\crm
Move to $r.des$;\crm
\end{tabbing}
}
\end{algorithm}

 Let $\mathcal{A}$ be an algorithm for  $\FC$ robots 
 in $CM$-{\bf atomic}-\ASY,
 and let  $\mathcal{A}$  use a light of $\ell$ colors: $C=\{c_0, c_1, \ldots c_{\ell-1} \}$. 
  It is assumed that, in any initial configuration ${\mathcal C}$,
   the number of distinct locations\footnote{  In $\FC$, 
 by definition, if all robots of the same color are located on the same position, they would not be able to see anything including themselves and they could not perform any task.} is $m \geq 2$. 
 
The pseudo code of the simulation algorithm is presented in {\bf Algorithm~1} (predicates and subroutines) and {\bf Algorithm~2} (main program). 
 
The simulation algorithm is composed of four phases. 
To execute the simulation algorithm, a robot $r$ uses four externally visible persistent lights:
\begin{enumerate}
\item $r.light$  $\in C$, indicating its own light used in the execution of $\mathcal{A}$; initially, $r.light=c_0$;
\item $r.phase$ $\in \{1,2,3,m\}$, indicating the current
phase of the simulation algorithm;
  initially $r.phase=  1$;
\item $r.state \in \{W, M, F\}$, indicating the state of $r$ in its execution of the simulation; 
initially,  $r.state=W$;  
\item $r.\CoS.state\in 2^{\{W,M,F\}}$, indicating the  set of states  at \CoS(x), where $x$ is the current location of $r$; initially,  r.\CoS.state=$\{W\}$. 
\end{enumerate}

\noindent Summarizing, each robot $r$ has $Light[r]=\langle r.light, r.phase, r.state, r.\CoS.state \rangle$. For a location $x$ and  $l\in\{light,phase,state,\CoS.state\}$, let $x.l= \cup_{\text{r at x}}r.l$ 
denote the set of the lights $r.l$ of the robots at location $x$.

Figure~\ref{fig:transition-sim-SSY-by-ASY-LC-in-FCOM} shows the transition diagram as the robots change phase's values. 
Informally, the simulation algorithm is composed of three main phases which are continuously repeated
and a fourth one which is occasionally performed.
Each execution of the three main phases corresponds to a single execution of  $\mathcal{A}$, each 
satisfying the $CM$-{\bf atomic} condition,  by some robots. The three main phases are repeated until 
every robot has executed $\mathcal{A}$ at least once, ensuring fairness. Appropriate flags are set up to detect
 this occurrence; a  "mega-cycle"  is said to be completed, and after the execution of the fourth phase
 (a reset), a new mega-cycle is started (continuing
 the simulation of the execution of $\mathcal{A}$ through the continuing execution of  the three phases).
 In other words, in each mega-cycle all robots are activated and execute\footnote{In each phase of mega-cycles, at most one robot may execute the simulated algorithm more than once.} ${\cal A}$ under the $CM$-{\bf atomic} condition.

\begin{figure}
\centering\includegraphics[keepaspectratio, width=14cm]{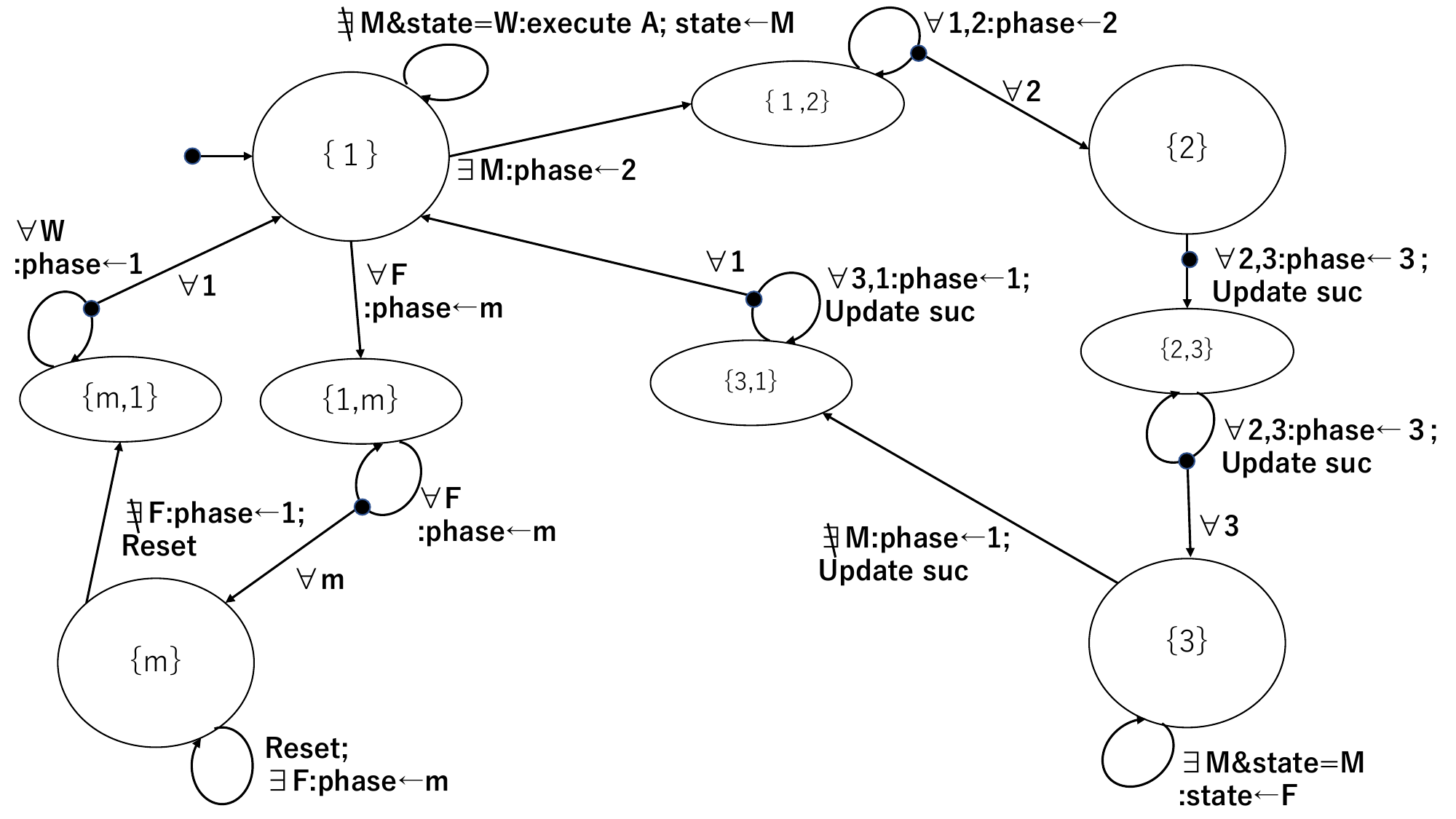}
    \caption{Transition Diagram of {\tt SIM(A)}. In the figure,label $condition:action$ on each arrow means if an activated robot observes $condition$, the robot performs $action$ and the configuration is changed to one the head of the arrow points. Arrows with branches indicate that if the condition on the arrow is satisfied, the transition occurs. {\bf Update suc} and {\bf Reset} indicate that "update suc.state" and "reset state and suc.state," respectively. Some arrows are omitted because the figure gets complicated. In the cases of $(\alpha, \beta)=(1,2), (2,3), (3,1), (1,m), (m,1)$, there is direct transition from $\{ \alpha\}$ to  $\{\beta \}$ without via $\{ \alpha,\beta\}$}
    \label{fig:transition-sim-SSY-by-ASY-LC-in-FCOM}
\end{figure}

The details of these phases and between phases are as follows:


\begin{itemize}
\item {\bf Between Phases} ($\forall \rho [\rho.phase =p$ or $\rho.phase=q]$).
These are the transit states from Phase~$p$ to Phase~$q$. Activated robots only change $phase$ flags and do not execute the simulated algorithm nor change $state$ flags (see Fig.~\ref{fig:transition-sim-SSY-by-ASY-LC-in-FCOM}). Also from $2$ to $3$, and from $3$ to $1$, each robot updates its $\CoS.state$ during this mixed phases.

\item {\bf Phase 1-Perform Simulation} ($\forall \rho(\neq r) [\rho.phase =1]$). 

In the $\LK$-operation, 
$r$ understands to be in Phase~$1$ by detecting  $\rho.phase=1$ for any other robot $\rho$. 
After the $\LK$-operation,  each robot $r$ at $x$ can recognize its own
$r.state$ by using the predecessor's $\CoS.state$ and $r.state.here$ corresponding to the set of states seen by $r$ at its own location $x$ (Fig.~\ref{fig:determine-own-state}). Since we assume the agreement of chirality, the relation of $suc$ and $pred$ is uniquely determined\footnote{Although we use chirality to determine the cyclic order, this assumption can be circumvented by slightly increasing the number of light colors and deciding the color of the corresponding robot using local 'suc' and 'pred' (refer to~\cite{FPS19}).}.
\begin{figure}
\centering\includegraphics[keepaspectratio, width=14cm]{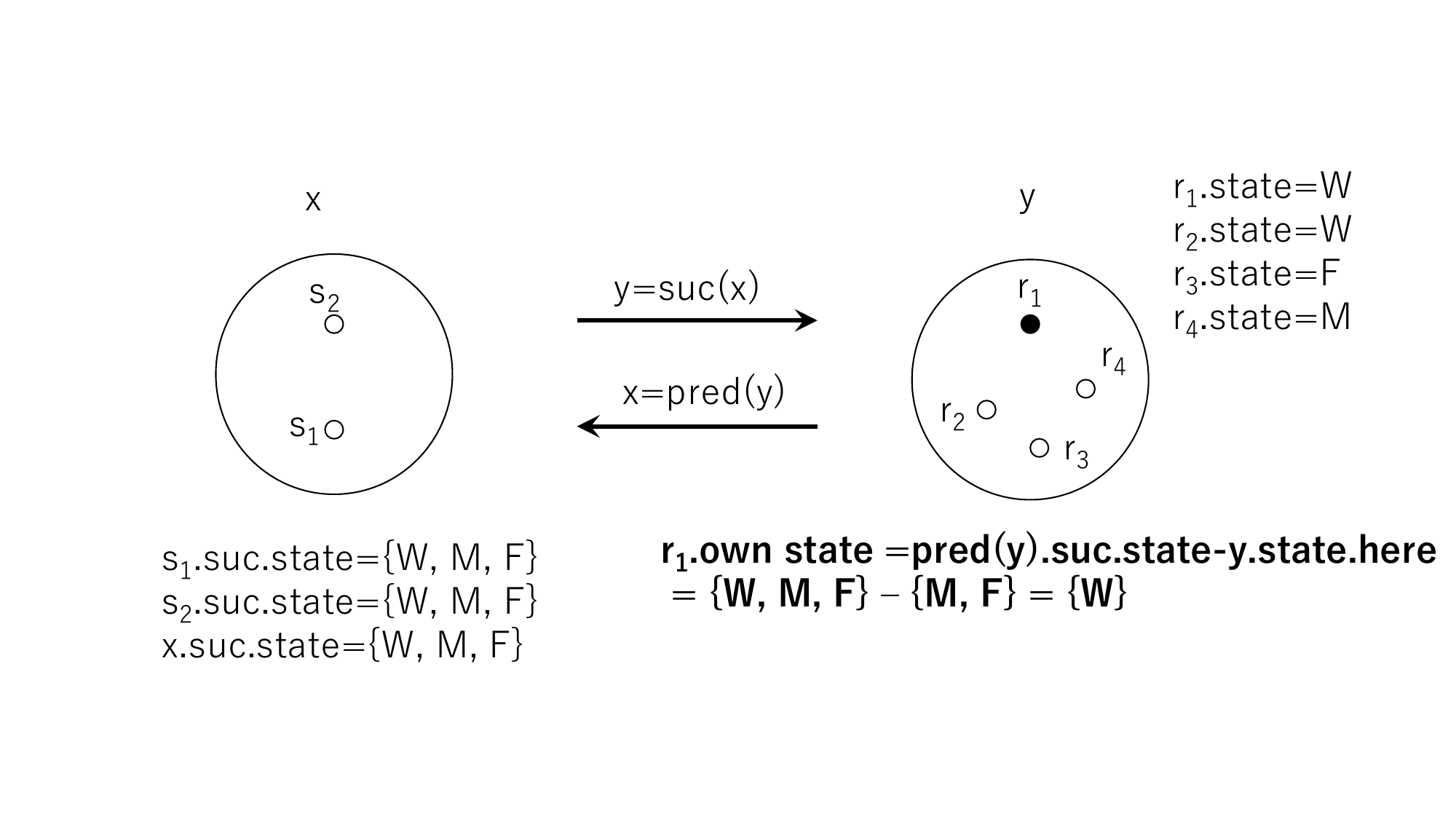}
    \caption{Determination of own.state of $r$.}
    \label{fig:determine-own-state}
\end{figure}
This phase consists of two stages: checking the end of mega-cycles and execution of simulation.

{\em Checking End of Mega-Cycles:}
The first part of this phase is to check the end of the current mega-cycle. 
Since $r.state=F$ means that robot $r$ has executed the simulation in the current mega-cycle, $Is$-$all(F)$ means that all robots have executed algorithm ${\cal A}$ in this mega-cycle\footnote{Since each robot recognize its own state in Phase~$1$, $Is$-$all(F)$ means $\forall r (r.state=F)$.}.  Then it moves to Phase~$m$ (resetting all state flags) and returns to  Phase~$1$.

{\em Perform Simulation:}  
If $Is$-$all(F)$ is false, an activated robot $r$ executes algorithm ${\cal A}$ and changes its $state$ to $M$, provided that $r$'s own state is $W$($r.state=W$)  and $r$ does not observe robots $\rho$ with $\rho.state=M$, which means some robot $\rho$ has executed algorithm ${\cal A}$ and set $\rho.flag$ to $M$. Since robots executing algorithm ${\cal A}$ change their $state$ to $M$, as long as activated robots do not observe robots $\rho$ with $\rho.state=M$, it is guaranteed that there is no possibility of observing moving robots and thus the simulated algorithm behaves under $CM$-{\bf atomic}-\ASY.
If robot $r$  observes robots $\rho$ with $\rho.state=M$, $r$ changes $r.phase$ to $2$.
Note that some robots $\rho$ with $\rho.state=M$ are moving.
The phase moves to $2$ after finishing all the execution of algorithm ${\cal A}$ via the mixed phases of $1$ and $2$, it can be guaranteed by changing all $phase$'s flags to $2$, that is reaching Phase~$2$.

\item {\bf Phase 2-Ensure the end of simulation and update the neighbor's state flag} ($\forall \rho(\neq r) [\rho.phase =2]$).

In the $\LK$-operation, $r$ understands to be in Phase~$2$ by observing   $\rho.phase=2$ of any other robots $\rho$. In this configuration,
there are no robots executing algorithm ${\cal A}$, no robot is moving, and the locations of the robots remain unchanged. 
An activated robot $r$ at $x$ only updates $r.\CoS.state$ by observing $\CoS(x)$ and change its {\em phase} flag form $2$ to $3$. When every robot
$r$ has $r.phase$ to $3$, the third phase starts.

\item {\bf Phase 3-Change flags from $M$ to $F$}   ($\forall \rho(\neq r) [\rho.phase =3]$). 

In the $\LK$-operation, $r$ understands to be in Phase~$3$ by observing $\rho.phase=3$ of any other robot $\rho$. 
In Phase~$3$, 
if robot $r$ has executed algorithm $\mathcal{A}$ in Phase~$1$ ($r.state=M$), then it
 changes its state flag from $M$ to $F$ (to insure  that the scheduling of robots performing the simulated algorithms is fair). After all robots with $M$ change their state flags to $F$,  every robot copies its neighboring states' flags ($\CoS(x).state$) 
setting Phase to $1$.

 

\item {\bf Phase $m$-Reset Mega-Cycle}   ($\forall \rho(\neq r) [\rho.phase =m]$). 
In the $\LK$-operation, $r$ understands to be in Phase~$m$ by observing $\rho.phase=m$ of any other robot $\rho$. 
In Phase~$m$, 
each robot $r$ sets $r.state=W$ and $r.\CoS.state  = \{W\}$
and,  after all robots reset their states flags,
the phase returns to $1$ to begin a new mega-cycle. Configurations having phase flags $m$ and $1$ occur in the cases from $Is$-$all$-$phases(1)$ to $Is$-$all$-$phases(m)$ and from $Is$-$all$-$phases(m)$ to $Is$-$all$-$phases(1)$. But 
it is not difficult for the robots to distinguish the particular transition being observed: if $Is$-$all(F)$ is true,  it is the former,  otherwise ($Is$-$all(W)$ is true) it is the latter.
(see Fig.~\ref{fig:transition-sim-SSY-by-ASY-LC-in-FCOM}). 

\end{itemize}




%


We prove the correctness of {\tt SIM(A)} working on \ASY.

Since we consider $\mathcal{FCOM}$ robots, when a robot $r$ checks a predicate, for example $\forall \rho(\rho.phase =\alpha)$,  $r$ cannot see its own $r.phase$. Then, $r$ only checks $\forall \rho\neq r(\rho.phase =\alpha)$ and observes that the predicate may be satisfied although $r.phase$ is not $\alpha$. 
Therefore, 
predicates appearing in the $\FC$ algorithm must be of the form 
$\forall$ $\rho\neq $r ($\ldots$)  and we must consider configurations on which only one robot $r$ observes that some predicate holds but any of the other robots observe that the predicate does not hold.
If it holds $\forall \rho(\rho.phase =\alpha)$ on the configuration, it is denoted by $same(phase=\alpha)$ (because all robots are in the same phase).  If it holds $\forall \rho \in R -\{r_e\} (\rho.phase =\alpha)$ and $r_e.phase=\beta$ on configuration, it is denoted by $except1(phase=\alpha;\beta(r_e))$ (because all robots except $r_e$ have  phase $\alpha$ and $r_e.phase=\beta$).  

The transition from Phase~$\alpha$ to $\gamma$ begins in configuration satisfying $same(step=\alpha)$ or $except1(step=\alpha;\beta(r_e))$
and ends in one satisfying $same(step=\gamma)$ or $except1(step=\gamma;\alpha(r'_e))$, where $(\alpha, \gamma)=(1,2),(2,3),(3,1),(1,m),$ and $(m,1)$.
In the configuration of $same(step=\alpha)$ or $except1(step=\alpha;\beta(r_e))$,
some robot changes its $phase$ to $\gamma$ at time $t$ and then the number of robots with $phase=\gamma$ increase. Finally, the configuration becomes one with $same(step=\gamma)$ or $except1(step=\gamma;\alpha(r'_e))$ at $t'$. Since the simulation algorithm works in \ASY, all robots are not inactive at the times $t$ and $t'$ in general. However, we can consider these times as if the start times (called pseudo start time, or ps-time) in the followings. Note that all robots do not move between $t+1$ and $t'$ in the algorithm. 

Let $t_{C}$ be the time $Comp$-operations are performed at which the number of robots with $phase=\alpha$ is at most one and let $R_C$ be a set of robots perform $Comp$-operations at $t_{C}$. Let $t'_{C}$ the time 
$Comp$-operations are performed just before $t_{C}$. There are two cases we consider.

(1) When the number of robots with $phase=\alpha$ is zero at $t_{C}$, robots in $R_C$ do not move and even if robots in $\R-R_C$ are activated in the time interval  $[t'_{C}+1..t_{C}]$, the lights of these robots are unchanged until they finish their {\textit LCM}-cycle. 

(2) When the number of robots with $phase=\alpha$ is one at $t_{C}$, let $r'_e$ be the robot with $phase=\alpha$. Note that $r'_e$ has not been activated between $t$ and $t_{C}$. Robots in $R_C$ also do not move and even if robots in $\R-(R_C \cup \{r'_e\})$ are activated in the time interval  $[t'_{C}+1..t_{C}]$, the lights of these robots are unchanged until they finish their LCM-cycle.

Then the time $t_{C}+1$ can be considered as all robots are inactive and robots can start in the configuration at that time.

Let $C_t$ be the configuration at $t$ and let $r$ be located at $x$ on $C_t$. We first consider a precondition to hold at the beginning of Phase~$1$ at $t$.

%

\begin{description}
\item[$PC_1(t)$:] $same(phase=1)$ \\ and $\forall r \in \R[ (r.state = W$ or $r.state=F)$  and  $r.\CoS.state=\CoS(x).state$] 
on $C_t$, or\\
$except1(phase=1;\alpha(r_e)(\alpha=3,m))$ and $\forall r \in \R [r.state = W$ or $r.state=F$] and $\forall r \in \R -\{r_e\}[r.\CoS.state=\CoS(x).state$] on $C_t$.\\
%
\end{description}

Note that the initial  configuration at time $0$ satisfies $PC_1(0)$ and $Is$-$all(F)=false$, in fact, for every robot $\rho$, $\rho.phase=1$, $\rho.state=W$ and $\rho.\CoS.state=\{W\}$.
The configuration satisfying the precondition $PC_1(t)$ occurs at the initial configuration, after one execution of the simulation and after any Mega-cycle is finished. The former two cases satisfy $Is$-$all(F)=false$ and the last case satisfies $Is$-$all(F)=true$.
The last case will return to the initial configuration with $Is$-$all(F)=false$ (Lemma~\ref{lem:mto1}).

\noindent
{\bf Case 1: Mega-cycle has been finished}

First, we consider the case that $PC_1(t)$ holds and $Is$-$all(F)=true$ at $t$, that is Mega-cycle is finished at $t$. Note that in this case $\alpha=3$ in $PC_1(t)$. Generally if $Is$-$all(F)=false$, then  $\alpha=m$ in $PC_1(t)$, otherwise, $\alpha=3$ in $PC_1(t)$ (this case occurs after one execution of the simulation).

(1-I) Consider that $except1(phase=1;\alpha=3(r_e))$ holds at $t$. If robot $r$ except $r_e$ is activated after $t$,
since $r$ observes $r_e.phase=3(\neq 1)$ and $Is$-$phase$-$mixed(1,3)$ hold, the configuration is not changed ({\bf lines}~28-30). Letting $t'$ be a time when $r_e$ is activated after $t$,
there is a time $t'_C$ such that  $r_e.\CoS.state$ is correctly set at $t'_C$ \footnote{$r_e.phase$ is also set to $1$. However, in this case, since $Is$-$all(F)=true$, $r_e.phase$ is set to $m$ after all ({\bf lines}~4-5).}.
In addition, $r_e$ observes $Is$-$all(F) =true$, $r_e$ sets $r_e.phase=m$ at $t'_C$.  Then since robot $r$ except $r_e$ executes {\bf line}~38\footnote{If $r_e$ observes $\rho.phase=m$ for some robot $\rho$, $r_e$ also executes {\bf line}~38.    Otherwise, $r_e$ observes $Is$-$phases(1)=true$ and $Is$-$all(F) =true$ and $r_e$ sets $r_e.phase=m$}, the number of $phase=m$ increases and number of $phase=1$ decreases after $t'_C$. 
Then
letting $t_m-1$ be the time $Comp$-operations are performed at which the number of robots with $phase=1$ is at most one,
$t_m$ becomes the ps-time such that $same(phase=m)$ or $except1(phase=m;1(r'_e)))$ holds at $t_m$.
Note that $Is$-$all(F)=true$ at time $t_m$.

(1-II) In the case that $same(phase=1)$ holds and $Is$-$all(F)=true$ at $t$,
since there is a robot that changes its $phase$ to $m$,
similarly we can show that there exists a ps-time $t_m$ such that $same(phase=m)$ or $except1(phase=m;1(r_e))$ holds at $t_m$.

Then the reset of \textit{state}s begins at $t_m$. 
There is a time such that $same(phase=m)$ holds and resetting \textit{state}s and $\CoS.state$s continues until there is at most one robot $r$ with $r.state=F$. If there is just one robot $r$ with $r.state=F$, any other robot than $r$ continue to reset and their phases remain $m$ since they observe $r.state=F$ ({\bf lines}~33-34).
On the other hand, since $r$ observes that there does not exist $F$ when activated after $t_m$, $r$ resets its own $state$ and $\CoS.state$ and changes $r.phase$ to $1$ and  ({\bf lines}~32,36).
In the case that there is no robot with $state=F$, the activated robots at that time change their $phase$ to $1$  and reset their own states and $\CoS.state$s\footnote{However, resetting has been already finished at this time.}.
In both cases, since there exists a robot $r$ with $r.phase=1$ and \textit{state}s and $\CoS.state$'s of all robots are reset (that is, $Is$-$all(W)$ is true),
the number of $phase=1$ increases and there exists a ps-time $t_1$ such that $PC_1(t_1)$ holds. Note that in this case $\alpha=m$ in $PC_1(t_1)$.

\begin{lemma}\label{lem:mto1}
Assume that $PC_1(t_0)$ holds  and $Is$-$all(F)=true$ at $C_{t_0}$, and the simulation algorithm is executed  from the configuration $C_{t_0}$ with $PC_1(t_0)$. 
Then there exists a ps-time $t_1$ such that the following conditions are satisfied for a configuration $C_{t_1}$;
\begin{description}
\item [(1)] $same(phase=1)$ and ($\forall r \in \R[ r.state =W$  and $\CoS(x).state=\{ W \}$])) on $C_{t_1}$, or\\

\item [(2)] $except1(phase=1;m(r_e))$ and ($\forall r \in \R[ r.state =W$  and $\CoS(x).state=\{ W \}]$)) on $C_{t_1}$\footnote{Note that this conditions satisfies $PC_1(t_1)$. In this case, $r_e$ resets $r_e.state$ and $r.\CoS.state$ at $t_1$.},

\noindent
where $x$ is the location occupied by robot $r$.
\end{description}
\end{lemma}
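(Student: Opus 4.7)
The plan is to track the simulation's progress through three stages after the previous mega-cycle has ended. First, all robots transition from phase $1$ to phase $m$ upon detecting the end-of-megacycle condition (``$Is$-$all(F)$''). Second, while in phase $m$ each robot resets its $state$ and $\CoS.state$ flags. Third, once no robot has $state=F$ remaining, the robots transition from phase $m$ back to phase $1$, yielding one of the two target configurations of the lemma. The pseudo-start-time (ps-time) notion introduced just before the lemma is essential: it lets me treat an arbitrary asynchronous run as if it began in a ``quiescent'' configuration at the end of each stage, so I can reason in terms of configuration-to-configuration transitions.

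For Stage~(i), under $PC_1(t_0)$ every robot has $state\in\{W,F\}$, and $Is$-$all(F)$ is true when evaluated via the predecessor's $\CoS.state$ (Fig.~\ref{fig:determine-own-state}). In the subcase $same(phase=1)$ at $t_0$, the first activated robot observes both $Is$-$all$-$phases(1)$ and $Is$-$all(F)$, and so sets $phase \leftarrow m$ via the end-of-megacycle branch; during the ensuing mixed $\{1,m\}$ configurations, the $Is$-$phase$-$mixed(1,m)\wedge Is$-$all(F)$ rule forces every subsequently activated robot to follow suit. In the subcase $except1(phase=1;\alpha(r_e))$ with $\alpha\in\{3,m\}$, the robots other than $r_e$ are blocked by the corresponding mixed predicates and remain idle until $r_e$'s own cycle completes and it joins phase $m$ (if $\alpha=3$), or until another robot is activated after $r_e$ has already reached $m$. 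Since the count of phase-$m$ robots is monotonically non-decreasing and the scheduler is fair, there is a ps-time $t_m$ at which $same(phase=m)$ or $except1(phase=m;1(r'_e))$ holds, and crucially no robot has executed a non-null Move so far.

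For Stage~(ii), from $t_m$ onward every robot that observes $Is$-$all$-$phases(m)$ (or the terminal mixed predicate) executes Reset-state-and-neighbor-state, unconditionally overwriting $state\leftarrow W$ and $\CoS.state\leftarrow\{W\}$. A straightforward potential argument---the number of robots whose $(state,\CoS.state)$ differs from $(W,\{W\})$ strictly decreases on each relevant activation---combined with fairness yields a ps-time $t_W \ge t_m$ at which every robot satisfies $state=W$ and $\CoS.state=\{W\}$. For Stage~(iii), any phase-$m$ robot activated before $t_W$ that still observes some $\rho.state=F$ keeps $phase=m$; after $t_W$ it observes $Is$-$all(W)$ and sets $phase \leftarrow 1$, either directly from $Is$-$all$-$phases(m)$ or via the $Is$-$phase$-$mixed(1,m)\wedge Is$-$all(W)$ branch. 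One more monotonicity-plus-fairness step then produces $t_1 \ge t_W$ landing in case~(1) of the statement if all robots have switched, or in case~(2) if exactly one robot $r_e$ still lags in phase $m$.

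The main obstacle I anticipate is tracking the lone ``lagging'' robot consistently across the three stages, because $\FC$ robots cannot observe their own lights and every predicate in the code is universally quantified over the \emph{other} robots. The key invariant I will need---implicit in $PC_1$---is that the lagging robot's $\CoS.state$ was already correctly set during the $\{2,3\}$ and $\{3,1\}$ mixed phases of the previous mega-cycle; this guarantees that when it is finally activated it correctly perceives its own $state$ via its predecessor's $\CoS.state$, and so takes the intended branch rather than spuriously re-executing ${\cal A}$ or mishandling its reset. A second, smaller obstacle is verifying that no robot issues a non-null Move throughout the three stages, so that positions are preserved; this follows because every branch exercised in these stages leaves the initial assignment $r.des \leftarrow r.pos$ untouched.
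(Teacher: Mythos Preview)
Your three-stage decomposition (phase $1\to m$; reset of $state$ and $\CoS.state$; phase $m\to 1$) is exactly the argument the paper gives in the ``Case~1: Mega-cycle has been finished'' discussion that immediately precedes the lemma, including the handling of the lagging robot via the mixed-phase predicates and the monotonicity-plus-fairness reasoning. One small refinement: under the hypothesis $Is$-$all(F)=\text{true}$ the paper observes that only $\alpha=3$ can occur in $except1(phase=1;\alpha(r_e))$ (the $\alpha=m$ subcase arises only when $Is$-$all(F)=\text{false}$), so your $\alpha=m$ branch in Stage~(i) is vacuous here.
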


\noindent
{\bf Case 2: The simulation begins}

The following case is performing one execution of the simulation of algorithm \textit{A}. 
Let $t_1$ be a ps-time that satisfies $PC_1(t_1)$ and $Is$-$all(F)=false$.

First, we consider the case of $same(phase=1)$.

\begin{figure}

\centering\includegraphics[keepaspectratio, width=0.85\textwidth]{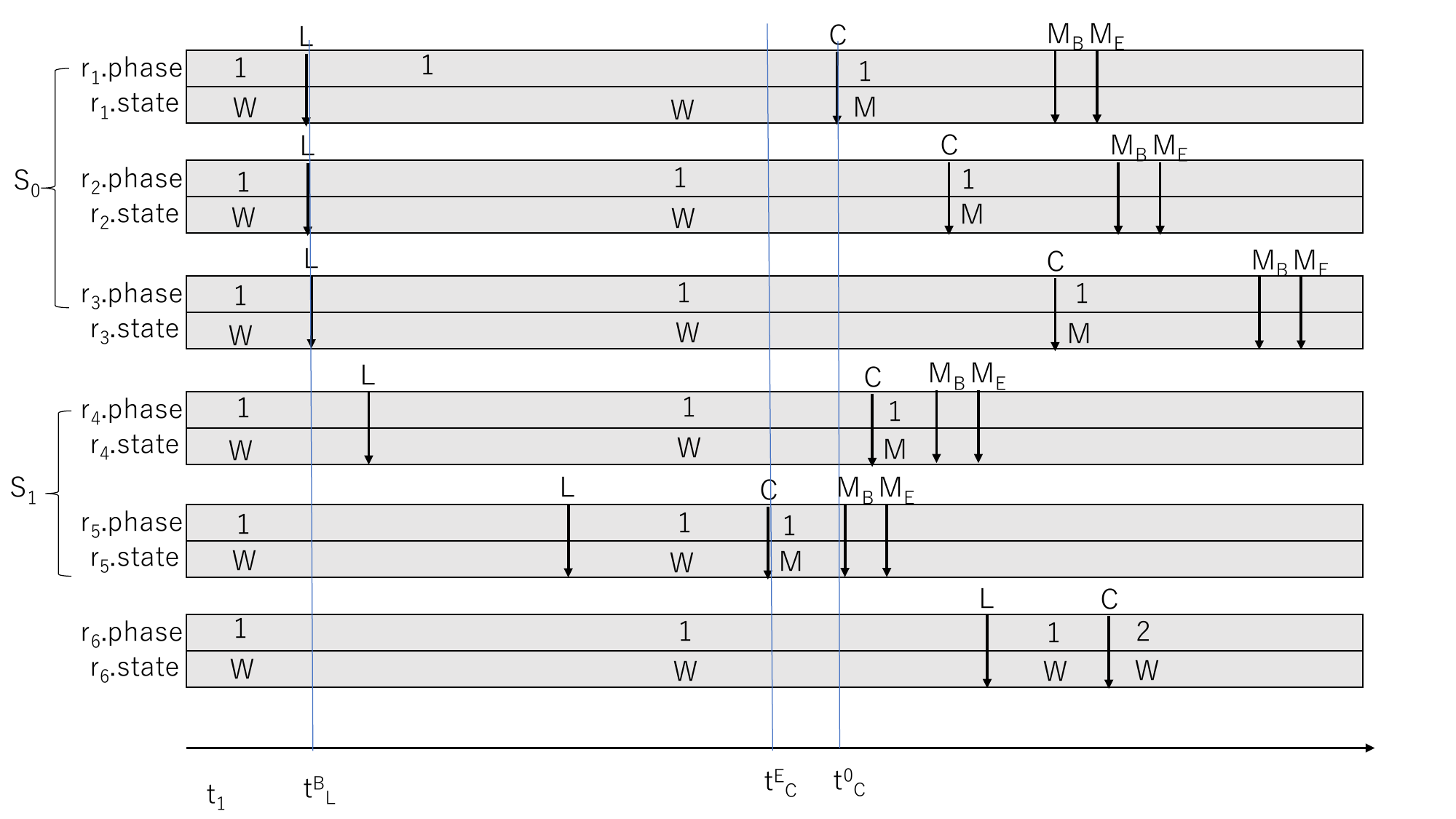}
    \caption{After $same(phase=1)$ and $Is$-$all(F)=false$ at $t_1$}
    \label{fig:S0S1}
\end{figure}

\noindent
(I) $same(phase=1)$ and $Is$-$all(F)=false$ at $t_1$:
Let $S_0$ be a set of the first activated robots  $\rho$ with $\rho.state=W$ after $t_1$ and
let $t^B_L$ be the  time when $Look$-operations of these robots are performed. 
Let $t^E_C$ be defined  as $\min \{ t | t$ is the time $\CP$-operation of robot in $S_0$ or the robot ($\not \in S_0$) with $\rho.state=W$ and activated after  $t^B_L$ is performed$\}$.
Let $S_1$ be a set of robots $\rho$ with $\rho.state=W$ and activated between $[t^B_L+1..t^E_C]$
and let $S=S_0 \cup S_1$.
Note that any robot in $S_0 \cup S_1$ does not observe robot $r$ with $r.state=M$ between $[t^B_L..t^E_C]$\footnote{Any robot $\rho$ with $\rho.state=F$ does nothing even if it is activated between $[t^B_L..t^E_C]$}, robots in $S_0 \cup S_1$ change their \textit{state}s to $M$ and execute algorithm~{\textit A} (see Fig.~\ref{fig:S0S1}).

We consider the two cases: (a) a robot, say $r'$ first activated after $t^E_C+1$ is not in $S$, (b) $r \in S$ is one of the first activated robots after $t^E_C+1$, and let $S'$ be a set of the first activated robots after $t_C^E+1$. Note that any robot in $S'$ has finished the first activation and its \textit{state} is $M$.

\noindent
(I-a) 
Note 
that for any robot $\rho$ in $S$ $\rho.phase=1$ until it finishes the execution of algorithm {\textit A} after $t^E_C+1$.  Letting $r''$ be a robot performing the $\CP$-operation at $t^E_C$, since $r''.state=M$ after $t^E_C+1$, $r'$ observes $r''.state=M$ and changes $r'.phase$ to $2$ ({\bf line}~6). 
After that, robots in $\R-S-\{r'\}$ change their phase flags to $2$ because they observe $\rho.phase=2$ for some robot $\rho$ and $Is$-$phase$-$mixed(1,2)$ is satisfied\footnote{$r'$ remains $r'.phase=2$ if it is activated.}. The robots in $S$ finish their simulation of {\textit A} and then change their phase flags to $2$. Then there exists  a ps-time $t'$ such that for the configuration $C_{t'}$ it holds that  $same(phase=2)$ or $except1(phase=2;1(r_e))$ and
if $\rho \in S$ then $\rho.state=M$ else $\rho.state$ is the same as that in $C_t$, and any robot in $S$ has completed its execution of the algorithm {\textit A} until $t'$. 

\noindent
(I-b)  If at least one robot in $S$ except $r$ performs the $Comp$-operation at $t_C^E$, the $state$ flag is $M$ at $t_E^C+1$. The robot $r$ observes  $state$ flags $M$ and changes $r.phase$ to $2$.
Thus, this case can be reduced to the case (a).
Otherwise, 
$r$ is the only robot in $S$ performing $Comp$-operation at $t_C^E$.
Although $r.state=M$ (but $r.phase=1$) at that time, $r$ observes the same snapshot as that at $C_t$ except for its own location. Then  $r$ begins executing the algorithm~{\textit A} again, because $r.own.state$ has not changed. If robots in $S$ except $r$ perform the $Comp$-operation at 
time $t''$ after that 
$r$ observes their  $state$s $M$ at $t''+1$ ($\exists \rho(\neq r )[(\rho.state=M)$ is true)  $r$ changes its phase flag to $2$. Thus, we can prove the case after $t''+1$ by using the method similar to the case (a),  there exists  a ps-time $t'$ such that for the configuration $C_{t'}$ it holds that  $same(phase=2)$ or $except1(phase=2;1(r_e))$ and
$\rho.state=M (\rho \in S)$ and for other robot $\rho' \in \R - S$ $\rho'.state$ is the same as that in $C_t$, robot in $S$ has completed its execution of the algorithm {\textit A} until $t'$. The difference is that if $r$ is activated $k$ times between $[t^E_C+1..t'']$, $r$ executes the algorithm~{\textit A} $k+1$ times. 

Noting that in the simulation algorithm
only the robots observing that there exist no $M$- $state$ flags in the configuration execute algorithm~\textit{A}. This means that the robots executing the algorithm do not observe other moving robots, that is the simulated algorithm obeys $\CM$-atomic \ASY.

Next, we consider the case $except1(phase=1;\alpha(r_e)(\alpha=3,m))$.

\noindent
(II) $except1(phase=1;\alpha(r_e)(\alpha=3,m))$ and $Is$-$all(F)=false$ at $t_1$: 
If $except1(phase=1;\alpha(r_e)(\alpha=3,m))$ holds, all robots except $r_e$ observe $Is$-$phase$-$mixed(1,\alpha)=true$ and their phase flags remain $1$
until $r_e$ is activated and performs $Comp$-operation. When $r_e$ is activated at $t'$ after $t_1$, $r_e$ observes $Is$-$all$-$phases(1)=true$ and $r_e$ changes $r_e.phase$ to $1$ and updates $r_e.\CoS.state$ correctly at time $t'_1+1$ ({\bf lines}~3-4), where $t'_1$ is the time when $\CP$-operation of $r_e$ is performed. Thus,
it can be reduced to the case (I).


Therefore, the following lemma holds.

\begin{lemma}\label{lem:1to2}
Assume that $PC_1(t_1)$ is satisfied and the simulation algorithm is executed from the configuration $C_{t_1}$ with $PC_1(t_1)$. 
Then there exists a ps-time $t_2$ such that the following conditions are satisfied for a configuration $C_{t_2}$;
\begin{description}
\item [(1)] $same(phase=2)$ {\bf or} $except1(phase=2;1(r_e))$,
\item [(2)] Let $S$ be a set of robots executing algorithm~{\textit A} between $[t_1..t_2]$. Then $S \neq \emptyset$ and any robot in $S$ does not observe moving robots (that is the $\CM$-{\bf atomic} condition is satisfied). And if $r \in S$ then $r.state=M$ else $r.state$ is the same as in $C_{t_1}$.
\end{description}
\end{lemma}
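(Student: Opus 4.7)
My plan is to proceed by case analysis on the two forms that $PC_1(t_1)$ can take, and within the main case track how the Phase~1 branch of Algorithm~2 propagates the transition from $phase=1$ to $phase=2$ while maintaining the $\CM$-atomic condition on the simulated algorithm $\mathcal{A}$.

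First I would dispatch the easy subcase $except1(phase=1;\alpha(r_e))$ with $\alpha\in\{3,m\}$: every robot other than $r_e$ observes $Is$-$phase$-$mixed(1,\alpha)$ and therefore either only updates $phase$/$\CoS.state$ flags or does nothing at all, and in particular never executes $\mathcal{A}$ until $r_e$ is next activated. Once $r_e$ performs its next $\CP$ it observes $Is$-$all$-$phases(1)$ and enters the Phase~1 branch, refreshing both $r_e.phase$ to $1$ and $r_e.\CoS.state$; from that pseudo start time onward the configuration satisfies $same(phase=1)$ (we are assuming $Is$-$all(F)=false$, so the line~5 branch is not taken), and the analysis reduces to the main subcase.

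For the main subcase $same(phase=1)$ with $Is$-$all(F)=false$, I would introduce the auxiliary sets used in the informal discussion preceding the lemma: $S_0$ is the set of robots with $state=W$ performing a $\LK$ at the earliest post-$t_1$ time $t^B_L$ at which any $W$-robot is activated; $t^E_C$ is the earliest time at which any $W$-robot activated from $t^B_L$ onward performs its $\CP$; $S_1$ is the set of $W$-robots activated within $[t^B_L+1,t^E_C]$; and $S=S_0\cup S_1$. By construction no robot in $S$ observes any $\rho$ with $\rho.state=M$ at its $\LK$, so each of them enters the branch that executes $\mathcal{A}$ and sets its own $state$ to $M$, giving $S\neq\emptyset$ and establishing part~(2) of the lemma. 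After $t^E_C$ at least one robot has $state=M$, so every subsequently activated $W$-robot observes the ``$\exists\rho\neq r:\rho.state=M$'' condition and instead transitions $phase:=2$; every $F$-robot activated later transitions via $Is$-$phase$-$mixed(1,2)$. Thus the count of $phase=1$ flags is monotonically non-increasing once some $M$ appears, and the argument terminates at a pseudo start time $t_2$ with $same(phase=2)$ or $except1(phase=2;1(r_e))$, proving part~(1).

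The main obstacle is subcase (I-b), where the only robot of $S$ performing $\CP$ at $t^E_C$ is some $r$ which is then re-activated before any other member of $S$ finishes its cycle. Because an $\FC$ robot cannot see its own light, $r$ recovers its $state$ via $r.own.state=\CoP(x).\CoS.state - r.state.here$; the predecessor's $\CoS.state$ still reflects the pre-execution snapshot, so $r.own.state$ evaluates to $\{W\}$ and $r$ re-executes $\mathcal{A}$. This is harmless for the claim because (i) $r$ still observes no other robot in $state=M$, so the $\CM$-atomic condition on each of $r$'s invocations of $\mathcal{A}$ is preserved, and (ii) as soon as another element of $S$ reaches its $\CP$, the ``$\exists\rho\neq r:\rho.state=M$'' predicate becomes true at $r$'s next $\LK$, so $r$ joins the $phase:=2$ branch and the situation reduces to subcase (I-a). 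This is the only place where $\FC$-invisibility of one's own light requires delicate bookkeeping; everywhere else the argument is driven by the phase-monotonicity described above.
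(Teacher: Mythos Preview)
Your proposal is correct and follows essentially the same route as the paper: the same reduction of the $except1(phase=1;\alpha(r_e))$ case to $same(phase=1)$, the same auxiliary sets $S_0,S_1,S$ and times $t^B_L,t^E_C$, and the same split into subcases (I-a)/(I-b), including the observation that in (I-b) the stale $\CoP(x).\CoS.state$ makes $r.own.state=\{W\}$ so $r$ may harmlessly re-execute~$\mathcal{A}$ until another robot's $M$ becomes visible. The only minor slack (shared with the paper) is the implicit appeal to fairness when $S=\{r\}$, so that eventually some robot outside $S$ is activated and sees $r.state=M$; this is immediate and does not affect the argument.
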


Let $PC_2(t_2)$ define the conditions satisfying  Lemma~\ref{lem:1to2} at ps-time $t_2$
and let $X(t_2)=(x_0(t_2), \ldots, x_{m_2}(t_2))$ be locations robots occupy in $C_{t_2}$. 
Note that $\CoS.state$s are not updated for $X(t_2)$  at time $t_2$.
If $except1(phase=2;1(r_e))$ holds, all robots except $r_e$ observe $Is$-$phase$-$mixed(1,2)=true$ and their phase flags remain $2$.
until $r_e$ is activated. When $r_e$ is activated at $t'$ after $t_2$, $r_e$ observes $Is$-$all$-$phases(2)=true$ and $r_e$ changes $r_e.phase$ to $3$ and updates $r_e.\CoS.state$ correctly. Therefore, $same(phase=2)$  holds after $t'+1$ and each robot $r$ activated after $t'+1$  changes $r.phase$ to $3$ and updates $r.\CoS.state$ correctly. Thus, the following lemma holds.
  
\begin{lemma}\label{lem:2to3}
Assume that $PC_2(t_2)$ is satisfied and the simulation algorithm is executed from the configuration $C_{t_2}$ with $PC_2(t_2)$. 
Then there exists a ps-time $t_3$ such that the following conditions are satisfied for a configuration $C_{t_3}$;
\begin{description}
\item [(1)] $same(phase=3)$ holds  or $except1(phase=3;2(r_e))$ holds at $t_3$, 

\item [(2)] For any robot $r \in \R$ at $x$, $r.state$ at $t_3$ is the same as at $t_2$.
\item [(3)] For any robot $r \in \R-\{r_e\}$ at $x$, $r.\CoS.state$ at $t_3$ is correctly set, that is, $r.\CoS.state=\CoS(x).state$.
\end{description}
\end{lemma}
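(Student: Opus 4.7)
The plan is to imitate the ps-time argument of Lemma~\ref{lem:1to2} and Lemma~\ref{lem:mto1}, but in the simpler setting of phase~$2$ to phase~$3$ where the simulated algorithm is dormant and no robot moves. Throughout Phase~$2$, the only code branches executed are those triggered by $Is$-$all$-$phases(2)$, $Is$-$phase$-$mixed(1,2)$, and $Is$-$phase$-$mixed(2,3)$; none of them modify a $state$ flag, none compute a non-null destination, and positions remain frozen. This freezing is what makes properties~(2) and~(3) almost immediate once the phase propagation is handled.

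First I would dispose of the asymmetric starting configuration $except1(phase=2;1(r_e))$ of $PC_2(t_2)$. While $r_e$ is not yet activated, every other robot $r$ that activates observes $Is$-$phase$-$mixed(1,2)$ and only (re)sets $r.phase \leftarrow 2$, leaving $r.state$ and $r.\CoS.state$ untouched. By fairness $r_e$ is eventually activated, and at the time $t'$ that its $\CP$-operation fires it observes $Is$-$all$-$phases(2)$ because the predicate quantifies only over $\rho \neq r_e$; hence $r_e$ sets $r_e.phase \leftarrow 3$ and performs $Copy$-$state$-$of$-$Neighbors$, refreshing $r_e.\CoS.state$ against the current (static) states at $\CoS(x)$. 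After $t'$ the configuration is $except1(phase=2;3(r_e))$, which is subsumed by the mixed-phase regime below. The other sub-case of $PC_2(t_2)$, namely $same(phase=2)$, is already in that regime.

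From any configuration satisfying $same(phase=2)$ or $except1(phase=2;3(r'_e))$, every activation of a phase-$2$ robot $r$ evaluates either $Is$-$all$-$phases(2)$ or $Is$-$phase$-$mixed(2,3)$ to true, and in each case $r$ executes $Copy$-$state$-$of$-$Neighbors$ and then advances $r.phase \leftarrow 3$. The number of phase-$2$ robots therefore strictly decreases at every such activation. I would define $t_3$ exactly as in Lemma~\ref{lem:mto1}: the first $\CP$-time at which at most one robot still carries $phase=2$; call that possibly surviving robot $r'_e$. This gives property~(1). Property~(2) holds because no executed branch in $[t_2,t_3]$ alters a $state$ flag. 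Property~(3) holds because each $r \neq r'_e$ ran $Copy$-$state$-$of$-$Neighbors$ at least once in $[t_2,t_3]$ while the neighbor states at $\CoS(x)$ were unchanged from $C_{t_2}$ onward, so the copied value equals $\CoS(x).state$ in $C_{t_3}$.

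The main obstacle is the usual asynchronous pitfall: some robot $r$ may have scheduled its $\LK$ at a point in the mixed-phase regime where, due to the ordering of a competing $r_e$'s pending $\CP$, the predicate selector drops into the vacuous $Is$-$phase$-$mixed(1,2)$ arm (which only overwrites $r.phase$ with its current value~$2$), so $r.\CoS.state$ is not refreshed on that activation. The ps-time construction side-steps this by defining $t_3$ only after every robot that will carry phase~$3$ has completed at least one ``phase-$3$-setting'' $\CP$-operation; fairness guarantees such an activation exists. This is exactly the ps-time trick used in Case~1 of the proof of Lemma~\ref{lem:mto1}, so I would mirror that construction rather than rederive it from scratch.
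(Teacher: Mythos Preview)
Your argument is correct and follows the paper's own proof essentially verbatim: the paper also first disposes of the $except1(phase=2;1(r_e))$ sub-case by waiting for $r_e$'s activation (at which it sees $Is$-$all$-$phases(2)$, flips to phase~$3$, and refreshes $\CoS.state$), and then lets phase~$3$ propagate via the $Is$-$all$-$phases(2)$ / $Is$-$phase$-$mixed(2,3)$ branches, invoking the same ps-time construction to define $t_3$. Your treatment is in fact more explicit than the paper's terse inline sketch (which contains an apparent typo claiming ``$same(phase=2)$ holds after $t'+1$''), but the structure and the justification of (1)--(3) are identical.
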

Note that in Lemma~\ref{lem:2to3} if $same(phase=3)$ holds, $r.\CoS.state$ at $t_3$ is correctly set for any robot $r \in \R$ at $x$.

Let $PC_3(t_3)$ define the conditions that satisfy  Lemma~\ref{lem:2to3} at the time $t_3$
and let $X(t_3)=(x_0(t_3), \ldots, x_{m_3}(t_3))$ be locations robots occupy in $C_{t_3}$.
Note that $X(t_3)=X(t_2)$ because any robot does not move between $t_2$ and $t_3$.

If $except1(phase=3;2(r_e))$ holds, all robots except $r_e$ observe $Is$-$phase$-$mixed(2,3)=true$ and their phase flags and $\CoS.state$ flags remain $3$ and unchanged, respectively,
until $r_e$ is activated and its $\CP$-operation is performed. When $r_e$ is activated at $t'$ after $t_3$, $r_e$ observes $Is$-$all$-$phases(3)=true$ and $r_e$ changes $r_e.phase$ to $3$ and updates $r_e.\CoS.state$ correctly.  Then for any robot $r$ $r.\CoS.state$ is correctly set at time $t''+1$, where $t''$ is the time when the $\CP$-operation of $r_e$ is performed.
After $t''+1$, robot $r$ with $r.state=M$ changes it to $F$ while its $\CoS.state$ is updated\footnote{If $r_e.sate=M$, $r_e$ changes it to $F$ at time $t''+1$}. Since the number of $M$ appearing in the configuration is monotonically decreasing, there is a time when there is no $M$ in the configuration and some robot observing the configuration changes its $state$ flag to $1$. 
Then there exists a ps-time $t'''>t''$ such that 
(1) $same(phase=1)$ {\bf or} $except1(phase=1;3(r_e))$ at $t'''$, (2) $C_{t'''}$ has no $M$ in $state$ and $\CoS.state$ flags, and (3) All $\CoS.state$ flags are correctly set. 

In the case that  $same(phase=3)$ holds, since activated robot $r$ with $r.phase=3$ change it to $F$, above (1)-(3) also hold similarly. 
\begin{lemma}\label{lem:3to1}
Assume that $PC_3(t_3)$ holds and the simulation algorithm is executed from the configuration $C_{t_3}$ with $PC_2(t_3)$. 
Then there exists a ps-time $t_1$ such that the following conditions are satisfied for a configuration $C_{t_1}$;
\begin{description}
\item [(1)] $same(phase=1)$ holds or $except1(phase=1;3(r_e))$ holds at $t_3$, 

\item [(2)] For any robot $r \in \R$ at $x$, if $r.state=M$ at $t_3$ then $r.state=F$ at $t_1$, otherwise $r.state$ at $t_1$ is the same as that at $t_3$.
\item [(3)] For any robot $r \in \R-\{r_e\}$ at $x$, $r.\CoS.state$ at $t_1$ is correctly set, that is, $r.\CoS.state=\CoS(x).state$.
\end{description}
\end{lemma}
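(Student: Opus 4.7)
The plan is to mirror the structure used in the proofs of Lemmas~\ref{lem:1to2} and~\ref{lem:2to3}: split on the two possible forms of $PC_3(t_3)$, show that in either case the system reaches a configuration where all robots have $phase=3$ with correctly set $\CoS.state$ fields, then track how $state$ flags evolve from $M$ to $F$, and finally argue the transition into $phase=1$.

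First I would handle the case $except1(phase=3;2(r_e))$. As long as $r_e$ has not been activated, every other activated robot sees $Is$-$phase$-$mixed(2,3)$ and the code at lines~22--24 leaves $phase$, $state$, and $\CoS.state$ unchanged. When $r_e$ is eventually activated (fairness), its $\CP$-operation at some time $t''$ observes $Is$-$all$-$phases(3)$, so $r_e$ sets its own $phase$ to $3$ and, by $Copy$-$state$-$of$-$Neighbors$, updates $r_e.\CoS.state$ correctly with respect to the current (still unchanged) positions. From $t''+1$ onward we are in the situation $same(phase=3)$ with every $r.\CoS.state$ correctly reflecting $\CoS(x).state$; the case $same(phase=3)$ at $t_3$ falls immediately into this same regime.

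Next I would track the $M\to F$ conversion. Under $same(phase=3)$ with correct $\CoS.state$ fields, any activated robot can determine its own state via $r.own.state$ (Fig.~\ref{fig:determine-own-state}). As long as $Is$-$exist$-$M$ holds, a robot with $r.own.state=\{M\}$ executes lines~16--18, setting $r.state\leftarrow F$ and refreshing $\CoS.state$; robots not in that situation keep $phase=3$ and only refresh $\CoS.state$. Since no robot is moving in this phase (moves were all committed in Phase~1 and nothing moves in Phase~2 or~3), positions are stable, so the $M$-multiset in the configuration is monotonically non-increasing, and by fairness every $M$-labeled robot is eventually activated and rewritten to $F$. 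Thus there is a first ps-time at which no $M$ appears anywhere in either $state$ or $\CoS.state$ flags.

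From that ps-time on, an activated robot observes $Is$-$exist$-$M$ as false and executes lines~19--21: it sets its $phase$ to $1$ and updates its $\CoS.state$. As more robots switch, the configuration eventually reaches either $same(phase=1)$ or $except1(phase=1;3(r_e'))$ (the surviving ``straggler'' being any robot that has not yet been activated after the last $M$ disappeared). This gives conclusion~(1). Conclusion~(2) follows because the only $state$ transitions executed on this interval are $M\to F$, and conclusion~(3) follows because every non-stragger robot performed $Copy$-$state$-$of$-$Neighbors$ after all $M\to F$ conversions completed and positions never changed during Phase~3. The main subtlety to get right in the write-up is the ordering argument for conclusion~(3): I must ensure that each robot's final $\CoS.state$ refresh occurs \emph{after} the last $M\to F$ update at its successor location, which is guaranteed because the predicate $Is$-$exist$-$M$ becoming false globally precedes any transition to $phase=1$, and at that moment all $\CoS(x).state$ sets have stabilized.
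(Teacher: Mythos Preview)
Your proposal is correct and follows essentially the same approach as the paper's argument (given in the text immediately preceding the lemma): reduce the $except1(phase=3;2(r_e))$ case to $same(phase=3)$ by waiting for $r_e$, argue the $M\to F$ conversion via monotonicity and fairness, then track the transition to $phase=1$. Your line-number references are off (the $Is$-$phase$-$mixed(2,3)$ branch sits at lines~25--27, not 22--24, and the no-$M$ branch is lines~20--22), but the structure and reasoning match, and your closing remark on the $\CoS.state$-ordering subtlety is in fact more explicit than the paper's own treatment.
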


It is easily verified that $PC_1(t_1)$ holds by Lemma~\ref{lem:3to1}. Then the next simulation can be performed from $t_1$.  Therefore, by Lemmas~\ref{lem:1to2}-\ref{lem:3to1}, {\tt SIM(A)} executes Phase~$1$-Phase~$3$ and Phase~$m$ in infinite cycles in \ASY\ and the execution of \textit{A} obeys $CM$-{\bf atomic}-\ASY.
Let $E$ be the sequence of the set of activated robots that execute simulated algorithm~{\textit A}  in  Algorithm~{\tt SIM(A)}. 
Since,  by Lemma~\ref{lem:1to2},  any mega-cycle is completed, we can show that $E$ is fair.
Then we have obtained Theorem~\ref{th:FCACM=FCA}. Note that, if   algorithm~\textit{A} uses $\ell$ colors, the simulating algorithm~{\tt SIM(A)} uses $O(\ell)$ colors.

\subsubsection{Proof of Theorem \ref{th:FCALC=FCS}} 
\label{sec:FCALC=FCS}

We can use the same simulation algorithm to show the equivalence between $A_{LC}$ and \SSY\ in $\FC$.
%
%

\begin{lemma}\label{lem:1to2inALC}
Assume that $PC_1(t_1)$ is satisfied and the simulation algorithm is executed from   configuration $C_{t_1}$ with $PC_1(t_1)$ in $LC$-{\bf atomic}-\ASY. 
Then there exists a ps-time $t_2$ such that the following conditions are satisfied for a configuration $C_{t_2}$;
\begin{description}
\item [(1)] $(same(phase=2)$ or $except1(phase=2;1(r_e))$,
\item [(2)] Let $S$ be a set of robots executing algorithm ${\cal A}$ between $[t_1..t_2]$. Then $S \neq \emptyset$ and any robot in $S$ observes the same snapshot (that is, the \SSY\ condition is satisfied). Moreover,  if $r \in S$ then $r.state=M$; otherwise $r.state$ is the same as in $C_{t_1}$.
\end{description}
\end{lemma}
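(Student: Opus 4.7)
The plan is to adapt the proof of Lemma~\ref{lem:1to2} with one key strengthening: because $A_{LC}$ forces $t_L=t_C$ within every cycle, I can show the robots in $S$ share not merely a $CM$-atomic snapshot but the \emph{same} snapshot. All the structural arguments about the mixed-phase transitions (from $PC_1(t_1)$ through $Is$-$phases$-$mixed(1,2)$ configurations to $PC_2(t_2)$, including the reduction of the $except1(phase=1;\alpha(r_e))$ subcase to the $same(phase=1)$ subcase as in case~(II) of Lemma~\ref{lem:1to2}) carry over essentially unchanged, so I focus my attention on the synchronization assertion in condition~(2).

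The heart of the argument is as follows. Under $A_{LC}$, for every cycle $t_L=t_C$, so I may treat the Look--Compute of each cycle as an instantaneous event at a single time $t^{LC}$. Let $t^{\star}$ be the earliest such time among robots that execute $\mathcal{A}$ in Phase~$1$ after $t_1$, and let $R^{\star}\subseteq S$ be the set of robots whose LC event occurs exactly at $t^{\star}$. By Specification~1, all robots in $R^{\star}$ observe the same snapshot: the one present at $t^{\star}$ before any state change produced at $t^{\star}$ becomes visible.

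I then claim $S=R^{\star}$. Each $r\in R^{\star}$ sets $r.state\leftarrow M$ at $t^{\star}$, and by Specification~1 this is visible from $t^{\star}+1$ onward. If there existed $r'\in S\setminus R^{\star}$, its LC would occur at some $t^{LC}(r')>t^{\star}$, hence at time $\geq t^{\star}+1$; at that moment $r'$ would observe $r.state=M$ for every $r\in R^{\star}$, so the guard ``$\exists\rho(\neq r)[\rho.state=M]$'' triggers and $r'$ sets $r'.phase\leftarrow 2$ without executing $\mathcal{A}$, contradicting $r'\in S$. Therefore $S=R^{\star}$, all members of $S$ observe the same snapshot, and the \SSY\ condition of part~(2) holds. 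Fairness yields $S\neq\emptyset$, while $r.state=M$ for every $r\in S$ and the states of non-participating robots remain as in $C_{t_1}$.

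The remaining transition into $same(phase=2)$ or $except1(phase=2;1(r_e))$ then proceeds exactly as in Lemma~\ref{lem:1to2}: as the $M$-flags produced by $S$ become visible, other activated robots encounter the $M$-guard (or fall into an $Is$-$phases$-$mixed(1,2)$ state) and migrate into phase~$2$, so a ps-time $t_2$ eventually emerges with at most one robot still in phase~$1$. The main obstacle I anticipate is ruling out the analogue of subcase~(I-b) of Lemma~\ref{lem:1to2}, in which a robot of $S$ could a~priori re-enter Compute within the same phase; under $A_{LC}$ a second LC of $r\in S$ must occur strictly after $t^{\star}$, and one must verify that either the $M$-state of some other $r'\in R^{\star}$ is already visible (when $|R^{\star}|\geq 2$), or, when $|R^{\star}|=1$, that the propagation through $\CoP(x).\CoS.state$ has reached $r$ in time to force $r.own.state\neq\{W\}$, thereby blocking re-execution at the line~8 guard. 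This small piece of code-level bookkeeping is the only point that genuinely needs fresh work beyond what is already established in Lemma~\ref{lem:1to2}.
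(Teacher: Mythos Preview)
Your overall approach is the same as the paper's: the paper's proof simply observes that under $A_{LC}$ one has $t^B_L=t^E_C$ in the notation of Lemma~\ref{lem:1to2}, hence $S_1=\emptyset$ and $S=S_0$, so every robot in $S$ performs its Look at the single instant $t^B_L$ and sees the same snapshot; all remaining structure is deferred verbatim to Lemma~\ref{lem:1to2}. Your $t^{\star}$ and $R^{\star}$ are exactly $t^B_L$ and $S_0$, and your argument that no robot outside $R^{\star}$ can execute $\mathcal{A}$ is precisely the reason $S_1=\emptyset$.

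Where you go astray is in the direction you propose for the analogue of subcase~(I-b) when $|R^{\star}|=1$. You suggest one must verify that propagation through $\CoP(x).\CoS.state$ reaches $r$ before its second LC and thereby forces $r.own.state\neq\{W\}$, blocking re-execution at the line-8 guard. This cannot be made to work: the adversary may let $r$ complete its Move and begin a fresh LC before any other robot---in particular $r$'s predecessor---is activated at all, so no $\CoS.state$ has been refreshed and $r.own.state$ still evaluates to $\{W\}$; $r$ \emph{does} re-execute $\mathcal{A}$. The paper does not try to prevent this. Consistently with its ``similarly to the proof of Lemma~\ref{lem:1to2}'' clause, the (I-b) behaviour is allowed to unfold exactly as there: each such solo re-execution by $r$ observes a configuration in which no other robot has moved, and is therefore itself a valid \SSY\ round with activation set $\{r\}$. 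The ``same snapshot'' assertion of condition~(2) refers to the common snapshot at $t^B_L$ shared by the members of $S=S_0$; any subsequent solo re-executions by the lone $r$ are additional singleton \SSY\ rounds, so the \SSY\ condition is still satisfied. In short, the ``fresh work'' you anticipate is not blocking re-execution but recognising that re-execution is harmless for the \SSY\ simulation, just as it was harmless for the $CM$-atomic simulation.
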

\begin{proof}
If the simulation algorithm works in $LC$-{\bf atomic}-\ASY,
$t^B_L=t^E_C$ in the proof of Lemma~\ref{lem:1to2} and $S=S_0$.  The robot $r \in S_0$ then sets $r.state=M$ at time $t^B_L+1$ and observes the same snapshot. Similarly to the proof of Lemma~\ref{lem:1to2}, there exists a ps-time $t_2$ such that for the configuration $C_{t_2}$ we have $(same(phase=2)$ or $except1(phase=2;1(r_e))$ and
if $\rho \in S$ then $\rho.state=M$ otherwise $\rho.state$ is the same as in $C_{t_1}$, and any robot in $S$ observes the same snapshot.
\end{proof}

\section{The \texorpdfstring{$\FS$}{FS}  Computational Landscape}
\label{sec-FS}

\subsection{Separating \SSY\ from \ASY\  in \texorpdfstring{$\FS$}{FS}}

In this section, we consider the $\FS$ model; in this model,
the only difference with  $\OB$ is that the robots are endowed 
with a bounded amount of memory whose content persists
from a cycle to the next.
We investigate whether, with this additional
capability, the  robots are able
to  overcome the limitations 
imposed by asynchrony,

The answer is unfortunately negative:
we prove that, also in this model,
 the otherwise enhanced robots are
 strictly more powerful 
  under the synchronous scheduler \SSY\ than
 under the
 asynchronous one \ASY.

To do so, we consider the problem \MLCv\ again.

Observe that \MLCv\ can be solved even in $\OB^S$ (Lemma~\ref{lem:CNV-WOC1}), and thus
in $\FS^S$.

\begin{lemma}\label{lem:MLC-pos}
$\MLCv \in \FS^{S}$;
this holds even under 
variable disorientation, 
non-rigid movement and in absence of chirality.
\end{lemma}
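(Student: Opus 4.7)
The plan is to observe that this statement is an immediate corollary of Lemma~\ref{lem:CNV-WOC1} combined with the trivial inclusion $\FS^S \supseteq \OB^S$ recorded earlier in the paper. Since any $\OB$ algorithm is also an $\FS$ algorithm (an $\FS$ robot can simply never write to, nor read from, its persistent light and thereby behave as an oblivious robot), every problem solvable in $\OB^S$ is solvable in $\FS^S$ under exactly the same hypotheses. Because Lemma~\ref{lem:CNV-WOC1} already established $\MLCv \in \OB^S$ under variable disorientation, non-rigid movement, and absence of chirality, the same witness carries over verbatim, which is all that is required.

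Concretely, I would exhibit the ``move to half distance'' strategy used in Lemma~\ref{lem:CNV-WOC1}: whenever a robot is activated, it reads the position of the other robot in its own local coordinate system and sets its destination to be the midpoint of the segment joining the two. No color, no persistent memory, no common orientation, and no chirality are consulted, so the algorithm lives inside $\FS^S$ and meets all the stated side conditions. Under \SSY\ one or both of the two robots may be activated in a given round; in either case each move is a contraction along the segment toward the other robot, so collinearity is preserved, neither robot overtakes the other, and the pairwise distance is monotonically non-increasing. This is precisely the $MLC$ predicate on top of the $CLC$ predicate.

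The only mildly delicate point, already handled inside the proof of Lemma~\ref{lem:CNV-WOC1}, is convergence under non-rigid movement: the adversary may cut any move short of the halfway point, but since the committed destination halves the current gap and any executed move travels at least the standard minimum guaranteed quantity, the distance strictly shrinks on every activation, and \SSY-fairness then forces convergence. I do not anticipate any obstacle here; the argument is a direct quotation of Lemma~\ref{lem:CNV-WOC1} through the model inclusion $\FS^S \supseteq \OB^S$.
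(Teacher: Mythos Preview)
Your proposal is correct and takes essentially the same approach as the paper: the paper simply notes that since $\MLCv \in \OB^S$ by Lemma~\ref{lem:CNV-WOC1}, and $\FS^S \geq \OB^S$, the result follows. Your write-up is in fact more detailed than the paper's one-line justification, spelling out the half-distance algorithm and the non-rigid convergence argument explicitly.
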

On the other hand, \MLCv\ cannot be solved  in $\FS^{A_M}$.

\begin{lemma}\label{lem:MLC-nega}
$\MLCv\notin \FS^{A_M}$
\end{lemma}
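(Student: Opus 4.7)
My plan is to adapt the proof of Lemma~\ref{lem:MLCv-imp} (the $\FC$ impossibility) to the $\FS$ setting, using variable disorientation to reduce $\mathcal{A}$'s dynamics to an iteration on a finite set, and then applying pigeonhole. Assume for contradiction that $\mathcal{A}\in\FS^{A_M}$ solves $\MLCv$, and let $f(s,d)$ and $\delta(s,d)$ denote its move-length and state-update when the acting robot is in state $s$ and observes the other at local distance $d$. At every $\LK$ the adversary rescales the acting robot's local frame so that the perceived distance is always exactly $1$. Setting $F(s)=f(s,1)$ and $\delta'(s)=\delta(s,1)$, each robot's state then evolves purely by the deterministic iteration $s\mapsto \delta'(s)$ on the finite color set $C$. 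Starting from the common initial state $s_0$, this orbit is eventually periodic; let $s_p,\ldots,s_{p+L-1}$ be its cycle (with $L\le|C|$).

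I first execute $\mathcal{A}$ under a fully synchronous prefix: both robots share state $s_i$ and each moves $F(s_i)\,d_i$ toward the other, so the true distance obeys $d_{i+1}=(1-2F(s_i))\,d_i$. Since \FSY\ is a restriction of $A_M$, $\mathcal{A}$ must solve $\MLCv$ along this execution, forcing $d_i\to 0$ and hence $\sum_{i\ge 0}F(s_i)=\infty$. In particular the $F$-mass over one period of the cycle is strictly positive, so I can choose $T\ge p$ at which the common state $s^{*}:=s_T$ satisfies $F(s^{*})>0$; let $d^{*}:=d_T$ denote the actual distance at that moment.

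The attack now begins at time $T$. Both robots $\LK$ at time $T$ (perceiving $1$) and each schedules a pending $\M$ of actual length $F(s^{*})\,d^{*}$ toward the other. The adversary lets $r$'s Compute and (instantaneous) Move complete immediately, but arbitrarily stalls $q$ inside its Compute, postponing $q$'s scheduled move; this is compatible with $A_M$, since no robot is ever moving while $r$'s subsequent $\LK$s are performed. The adversary then activates $r$ alone, over and over, always rescaling so that $r$ perceives the other at local distance $1$. Thus $r$'s state traverses the cycle $s^{*},\delta'(s^{*}),\delta'^{\,2}(s^{*}),\ldots$ and the actual distance shrinks multiplicatively,
\[
d_{T+k} \;=\; d^{*}\prod_{j=0}^{k-1}\bigl(1-F(\delta'^{\,j}(s^{*}))\bigr).
\]
Because the $F$-mass over the cycle is strictly positive, $\sum_{j\ge 0}F(\delta'^{\,j}(s^{*}))=\infty$, so this product tends to $0$. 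Pick $k$ with $d_{T+k}<\tfrac12 F(s^{*})\,d^{*}$. The adversary then unblocks $q$, which instantaneously performs its pending move of length $F(s^{*})d^{*}>2\,d_{T+k}$ toward the point where $r$ used to be; this carries $q$ past $r$'s current position, producing a post-Move distance strictly larger than $d_{T+k}$ and therefore violating the monotonicity clause of $\MLCv$, contradicting the assumed correctness of $\mathcal{A}$.

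The main obstacle, relative to the $\OB$ argument (Lemma~\ref{lem:CNV-WOC2}), is that the move-length $f(s,d)$ now depends on the internal state, so the ``same $f(d)$ for both robots'' symmetry that powered the $\OB$ proof no longer closes the loop, and one cannot directly reduce $r$'s solo trajectory to a \RR\ execution. Variable disorientation is what decouples the state evolution from the changing true distance, collapsing it into iteration on a finite set; pigeonhole then supplies a periodic orbit with strictly positive $F$-mass, which is exactly what makes the multiplicative shrinkage along $r$'s solo trajectory outrun $q$'s one-shot pending move.
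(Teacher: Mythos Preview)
Your proof is correct and follows the same strategy as the paper: exploit variable disorientation to force every perceived distance to $1$, collapsing the state evolution to a deterministic self-map on the finite color set; locate a recurrent state $s^*$ with $F(s^*)>0$; then let both robots $\LK$ in that state, stall one while the other is activated repeatedly so the true distance shrinks geometrically, and finally release the stalled robot so its pending move of length $F(s^*)d^*$ overshoots. The only cosmetic difference is the prefix---you run an \FSY\ execution and deduce $\sum F(s_i)=\infty$ from convergence, whereas the paper alternates $r$ and $q$ during the transient to guarantee they never meet; your version thus needs a one-line side case when some $F(s_i)=\tfrac12$ (the robots meet under \FSY, so simply take $T=i$ there, since $d_i>0$ and $F(s_i)>0$).
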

\begin{proof}
Let $r$ and $q$ be the two robots that we consider. 
In what follows, we will show that for any algorithm, the adversary can activate $r$ and $q$ 
and exploit variable disorientation so that they violate the condition of $\MLCv$.

Because of variable disorientation,  whenever a robot $X \in \{r,q\}$ performs a Look operation, the adversary can (and will)  force the {\em observed}
distance between $r$ and $q$  in the resulting snapshot to be always 1 (i.e., equals the 
current unit distance of $X$). Let $f(c, d)$ be the length of the computed move when a robot has color $c$ and the {\em real} distance between the two robots is $d$ in the last Look phase. Note that $F(c) = f(c,d)/d$ does not depend on $d$ because the distance always looks one to the robots.

Since the distance always looks the same to the robots, unless the two robots meet, the transition 
sequence of the  internal colors set by a robot 
is fixed. In particular, since the number of colors is a fixed constant, 
after a finite transient, say $(c_0, c_1, \dots, c_k)$, the sequence becomes periodic, say $(c_s, c_{s+1}, ..., c_k)^*$.

Then, the adversary can activate the robots in the following way so that, either during the transient  they violate the condition of $\MLCv$, or both of them end the transient without meeting each other and have color $c_s$:

\begin{enumerate}
\item $i \gets 0$.
\item 
If $F(c_i) > 1/2$, the adversary activates both $r$ and $q$, by which they pass each other, clearly violating the condition of $\MLCv$.
If $F(c_i) \le 1/2$, the adversary first activates $r$, and then activates $q$, by which $r$ and $q$ never meets
(i.e., never reach the same location).
\item $i \gets i + 1$ and go back to 2.
\end{enumerate}

If the robots did not violate  the condition of $\MLCv$ during 
their transient, they are both at the beginning of their periodic sequence with color $c_s$ in distinct positions. 
If $F(c_i)=0$ holds for all $i=s,s+1,\dots,k$, no robot moves, thus $\MLCv$ is never solved.
So, without loss of generality, we assume $c_s = c_0$ and $F(c_0) > 0$.
Then, the following strategy of the adversary leads to the violation of the condition of $\MLCv$,
where $d_0$ is the distance between $r$ and $q$ at time $0$.
\begin{enumerate}
 \item Let $r$ and $q$ perform $Look$ and $Compute$ phase, by which both $r$ and $q$ compute to move by distance $f(c_0, d_0)$.
 \item While $r$ is still waiting to be activated to move, activate only $q$ repeatedly until $q$ overtakes $r$ or the distance between $r$ and $q$ becomes less than $f(c_0, d_0)$. The former case occurs if $F(c_i) > 1$ for some $i$. This obviously violates the condition of $\MLCv$. Otherwise, the latter case must eventually occur because the distance between $r$ and $q$ becomes constant times smaller each time $q$ changes its color $k$ times.
 Then, the adversary finally activate $r$ to perform its Move phase. Then, $r$ moves a distance $f(c_0, d_0)$ and overtakes $q$, violating the condition. 
\end{enumerate}
Thus, for any algorithm, the two robots must violate the condition of $\MLCv$. 
\end{proof}

Thus, by Lemmas  \ref{lem:MLC-pos} 
 and \ref{lem:MLC-nega},
a separation between \SSY\ and \ASY\ in $\FS$ is shown.

\begin{theorem}\label{th:domFSOBAMoverFSOBA}
$\FS^{S} >\FS^{A}$  
\end{theorem}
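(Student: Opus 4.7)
The plan is to derive the theorem as an immediate consequence of Lemmas~\ref{lem:MLC-pos} and~\ref{lem:MLC-nega}, combined with the trivial scheduler monotonicity summarized in Lemma~1. By Lemma~\ref{lem:MLC-pos} we have $\MLCv \in \FS^{S}$, and by Lemma~\ref{lem:MLC-nega} we have $\MLCv \notin \FS^{A_M}$, so the task reduces to bridging from $A_M$ to $A$ and from $S$ to $A$.

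For the first bridge, I would invoke the general observation stated just after Lemma~2: restricting the adversarial power of the asynchronous scheduler cannot decrease the class of solvable problems. Since $A_M$ is a restriction of $A$ (the $A_M$ adversary may not schedule a $\LK$ of one robot while another is in its $\M$), this gives $\FS^{A} \subseteq \FS^{A_M}$. Hence $\MLCv \notin \FS^{A_M}$ immediately yields $\MLCv \notin \FS^{A}$.

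For the second bridge, item~2 of Lemma~1 (the chain $M^{F} \geq M^{S} \geq M^{A_{CM}} \geq M^{A_{M}} \geq M^{A}$) specialized to $M=\FS$ gives $\FS^{S} \supseteq \FS^{A}$. Combining the two bridges, $\MLCv$ witnesses a problem in $\FS^{S} \setminus \FS^{A}$, which, by the definition of $>$ in Section~2.3, is exactly $\FS^{S} > \FS^{A}$.

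Because the two supporting lemmas already do all the nontrivial work, the proof of the theorem itself is little more than composition. The real obstacle of the separation lies in Lemma~\ref{lem:MLC-nega}: one must design an $A_M$ adversary that, exploiting variable disorientation to fix the perceived inter-robot distance at $1$, drives the constant-sized color sequences into their eventually-periodic tails and then uses a staggered activation to force the two robots either to cross or to increase their mutual distance, thereby violating $MLC$. Once that lemma is in place, the theorem follows with no further geometric or combinatorial argument.
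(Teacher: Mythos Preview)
Your proposal is correct and matches the paper's own proof: the paper likewise derives the theorem directly from Lemmas~\ref{lem:MLC-pos} and~\ref{lem:MLC-nega}, with the scheduler monotonicity $\FS^{A}\subseteq\FS^{A_M}\subseteq\FS^{S}$ supplying the containment and the bridge from $A_M$ to $A$. The only difference is that you spell out the two bridging steps explicitly, whereas the paper leaves them implicit.
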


\subsection{Refining the \texorpdfstring{$\FS$}{FS} landscape}\label{sec:refine-FSTA}
We can refine the $\FS$ landscape as follows;
Consider again the TF problem defined and analyzed in Section \ref{sec: OBlandscape}.
By Lemma \ref{lem:TF1}, TF can be solved  in $\OB^{A_M}$, and thus in
$\FS^{A_M}$.

On the other hand, TF
is not solvable in $\FS^{A_{LC}}$.

\begin{lemma}\label{lem:TF2}
$\text{TF} \not \in \FS^{A_{LC}}$, even with fixed disorientation.   
\end{lemma}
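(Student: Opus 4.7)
The plan is to derive a contradiction by exploiting the TF case boundary at $\alpha=\pi/4$: start in case $TF2.2$, let $b$'s mandated motion drag the current configuration into the $TF2.1$ regime, and then force $a$ to move in violation of $TF2.2$. Suppose toward contradiction that $\mathcal{A}$ solves $\text{TF}$ in $\FS^{A_{LC}}$, and fix a valid initial configuration $C_0$ (a convex non-trapezoidal quadrilateral with $\overline{CD}$ longest, $a$ farther than $b$ from $\overline{CD}$) chosen so that $\alpha<\pi/4$. This is case $TF2.2$, which mandates that only $b$ move, with destination $b^\ast\in Y(B)$ the unique point satisfying $\overline{a\,b^\ast}\parallel\overline{CD}$. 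As $b$ travels along $\overline{b_0\,b^\ast}$, the angle of the current configuration (with $b$ shifted but $a,c,d$ at their initial positions) is a continuous, strictly monotone function of $b$'s height that crosses $\pi/4$ strictly before $b$ reaches $b^\ast$. Choose $b'\in(b_0,b^\ast)$ so that the configuration $C^\ast$ obtained from $C_0$ by placing $b$ at $b'$ (leaving $a,c,d$ in place) is again a valid non-trapezoidal initial configuration with angle strictly greater than $\pi/4$; then $C^\ast$ falls under $TF2.1$.

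Next I extract a quantitative bound from $\mathcal{A}$ applied to $C^\ast$. Consider the execution $E_2$ of $\mathcal{A}$ starting from $C^\ast$ in which the adversary activates only $a$ (extending the schedule afterward to restore fairness). In this prefix no robot other than $a$ moves, and so long as $a$'s Computes return null destinations $a$'s own position is also unchanged; hence every Look of $a$ returns the same snapshot $C^\ast$, and $a$'s internal state evolves deterministically as $s_0,s_1,s_2,\ldots$. Because the $\FS$ state set has some finite size $K$, either $a$ produces a non-null destination within $K$ cycles or the sequence enters a loop in which $a$ never moves; but the loop possibility would leave the configuration fixed at $C^\ast$ forever, contradicting the $TF2.1$ requirement that eventually $\overline{a(t)b(t)}\parallel\overline{CD}$. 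Hence there is a smallest index $k<K$ such that, in $E_2$, the $(k{+}1)$-st Look-Compute of $a$ returns a non-null destination.

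Finally I construct an execution $E_1$ of $\mathcal{A}$ from $C_0$ that violates $TF2.2$. The adversary first activates $b$, whose atomic (under $A_{LC}$) Look-Compute yields destination $b^\ast$; using the asynchronous adversary's control over Move duration and non-uniform speed, $b$'s trajectory is arranged so that $b$ reaches $b'$ and then remains there throughout an arbitrarily long interval $I$. Inside $I$ the adversary activates $a$ exactly $k+1$ times; since $a$'s Look-Compute is instantaneous under $A_{LC}$ and each of its first $k$ Moves is null, all $k+1$ cycles fit inside $I$, and each Look returns exactly $C^\ast$ (recall that in $\FS$ snapshots record only positions, not colors). Therefore $a$'s internal state evolves through $s_0,s_1,\ldots,s_k$ exactly as in $E_2$, and on its $(k{+}1)$-st activation $a$ commits to a non-null destination and begins to Move. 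But $C_0$ falls under $TF2.2$, which requires $a(t)=a(0)$ for all $t\geq 0$, and $a$'s non-null Move is the desired contradiction.

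The principal obstacle is holding $b$ exactly at $b'$ throughout $a$'s $k+1$ activations: if the snapshots differed even slightly, a maliciously discontinuous finite-state algorithm could decouple the state evolution in $E_1$ from that in $E_2$. This is handled by the asynchronous model's standard allowance of Moves of adversarially chosen finite duration and non-uniform (possibly locally zero) speed, so $b$ can be parked at $b'$ for as long as needed before completing its Move to $b^\ast$; the atomicity of $a$'s Look-Compute under $A_{LC}$ then fits all $k+1$ activations comfortably inside this parking interval, so the argument of Lemma~\ref{lem:TFoblot} is successfully lifted from $\OB$ to $\FS$.
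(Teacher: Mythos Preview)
Your argument is the mirror image of the paper's: the paper starts on the boundary $\alpha=\pi/4$ (case TF2.1), lets $a$ begin moving, and tricks $b$ into moving; you start strictly inside TF2.2, let $b$ begin moving, and trick $a$. The core mechanism---one robot's mandated motion drags the live configuration across the $\pi/4$ threshold, whereupon the other robot, seeing what looks like a fresh initial configuration of the opposite type, is forced by its finite-state machine to move in violation of the original predicate---is identical. Your explicit ``parking'' of $b$ at $b'$ so that $a$ sees the \emph{same} snapshot $C^\ast$ on every Look is in fact a sharpening of the paper's presentation, which only ``slows down'' the moving robot and leaves implicit why the observer's finite-state evolution cannot exploit the drift between successive snapshots.

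One small gap: you assert that $b$'s first Look--Compute ``yields destination $b^\ast$'', but nothing in TF2.2 forces $b$ to reach $b^\ast$ in a single hop; an $\FS$ algorithm may have $b$ creep toward $b^\ast$ through many intermediate points on $Y(B)$. The fix is immediate: activate $b$ alone repeatedly until some Move of $b$ carries it through $b'$ (such a cycle must exist, since under any fair extension $a,c,d$ stay put by TF2.2 and $b$ must eventually reach $b^\ast$ along $Y(B)$), and only then park it there. Robot $a$ is untouched during this prelude and so still has state $s_0$ when you begin its $k{+}1$ activations, and the rest of your argument goes through unchanged.
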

\begin{proof}
By contradiction, let ${\cal A}$ be an algorithm that always allows the two  $\FS$ robots
to solve TF and form a trapezoid reaching a terminal state in finite time under the $A_{LC}$ scheduler.
Consider  the initial configuration
where  $a$ is further than $b$ from $\overline{CD}$, and   
$\alpha = \pi/4$. 
Starting from this configuration, $a$ is  required to  move within finite time along $Y(A)$; 
on the other hand,  no other robot is  allowed to move.
Consider now the execution of ${\cal A}$ in which only
 $a$ is activated, and starts moving at time $t$; observe that,
 as soon as $a$  moves,  it creates a configuration where $a$ is still  further than $b$ from $\overline{CD}$,  but  
  $\alpha' = \min \{ \angle{b(t)a(t)A'}, \angle{a(t)b(t)B'} \} < \pi/4$.
  
  Activate now $b$ at time $t'>t$
while $a$ is still moving. Should this have been an initial configuration, within  a constant number of
activations (bounded by the number of internal states),
$b$ would move, say at time $t"$. In the current execution,
slow down the movement of $a$ so that it is still moving at time $t''$.
Since  in $\FS$\ $b$ cannot access the internal state of $a$, 
 nor remember previously observed angles and distances,
 it cannot detect that the observed configurations are not   initial configurations;
 hence it will move at time $t"$, violating  $TF2$ and contradicting the assumed correctness 
 of  ${\cal A}$. 
\end{proof}

Summarizing: by definition, $\FS^{A_{M}}\  \geq \FS^{A}$; by Lemma \ref{lem:TF1}, it follows
that TF is solvable in $\FS^{A_{M}}$; and, by
Lemma \ref{lem:TF2}, it follows
that TF is not solvable in $\FS^{A}$. In other words:
\begin{theorem}
\label{thm:TF-FS}
$\FS^{A_{M}} > \FS^{A}$
\end{theorem}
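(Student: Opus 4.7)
The plan is to deduce the theorem by combining two previously established results: Lemma~\ref{lem:TF1}, which shows that TF is solvable already in the weaker $\OB$ model under the $A_M$ scheduler, and Lemma~\ref{lem:TF2}, which shows that TF is not solvable even in the enhanced $\FS$ model under the more restricted $A_{LC}$ scheduler. Together with the obvious inclusion $\FS^{A_M} \geq \FS^A$ (which follows from the general observation in the Preliminaries that restricting the adversary can only enlarge the class of solvable problems), these facts will pin TF as a separator between $\FS^{A_M}$ and $\FS^A$.

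First, I would establish the positive direction TF $\in \FS^{A_M}$. Since any $\OB$ algorithm is a degenerate $\FS$ algorithm (using a constant light with $|C|=1$), we have $\OB^{A_M} \subseteq \FS^{A_M}$. Thus Lemma~\ref{lem:TF1} directly gives TF $\in \FS^{A_M}$; no new argument is required here.

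The only step that requires care is transferring Lemma~\ref{lem:TF2}, which is stated for $A_{LC}$, to the scheduler $A$. I would argue: because the $A_{LC}$ scheduler is a restriction of $A$, any algorithm that solves TF under the stronger adversary $A$ in particular solves it under the restricted adversary $A_{LC}$. Equivalently, $\FS^A \subseteq \FS^{A_{LC}}$. Consequently, Lemma~\ref{lem:TF2}'s impossibility TF $\notin \FS^{A_{LC}}$ forces TF $\notin \FS^A$.

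Putting the pieces together, TF $\in \FS^{A_M} \setminus \FS^A$, and combined with $\FS^{A_M} \geq \FS^A$ this yields $\FS^{A_M} > \FS^A$ by definition of strict dominance. The main (and essentially only) subtlety is making sure one applies the scheduler inclusion in the right direction: $A_{LC}$ being a \emph{weaker} adversary than $A$ makes impossibility under $A_{LC}$ a \emph{stronger} impossibility statement than impossibility under $A$, so the transfer of Lemma~\ref{lem:TF2} goes through without any additional argument.
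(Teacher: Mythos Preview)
Your proposal is correct and essentially mirrors the paper's own proof: both use TF as the separating problem, invoking Lemma~\ref{lem:TF1} (lifted from $\OB^{A_M}$ to $\FS^{A_M}$) for solvability and Lemma~\ref{lem:TF2} for unsolvability, together with the trivial inclusion $\FS^{A_M} \geq \FS^{A}$. The only cosmetic difference is that you spell out the transfer from $A_{LC}$ to $A$ via the scheduler-restriction inclusion, whereas the paper states it directly (relying implicitly on $\FS^{A_{LC}} \equiv \FS^{A}$, noted elsewhere).
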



\begin{theorem}
\begin{enumerate}
\item $\FS^{A_{LC}}\  \equiv \FS^{A}$
\item $\FS^{S}$ > $\FS^{A_{M}}$
\item $\FS^{S}\   > \FS^{A_{LC}}$
\item $\FS^{A_{M}}\ > \FS^{A_{LC}}$ 
\end{enumerate}
 \end{theorem}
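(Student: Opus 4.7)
The plan is to prove the four items by assembling the separator/possibility results already established for $\FS$ together with the general $A_{LC}\equiv A$ collapse observed in Section~\ref{sec:model} for models without external visibility. I would organise the argument so that item (1) is handled first, because items (3) and (4) will then reduce to already-proved separations between \SSY\ (respectively ${A_M}$) and \ASY.

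For item (1), I would invoke the observation made in Section~\ref{sec:model}: since $\FS$ robots are silent, each robot perceives only the positions of the others during a $\LK$, and never the internal state of any other robot. The only difference between $A_{LC}$ and $A$ is that in $A_{LC}$ the scheduler is forbidden from scheduling a $\LK$ of $r$ during a $\CP$ of another robot $r'$. But $\CP$ produces no externally visible effect for $\FS$ robots (the new color is internal, and the destination only matters once a $\M$ starts), so a $\LK$ taking place during a $\CP$ returns exactly the same snapshot as a $\LK$ immediately before that $\CP$. Hence any execution admissible under $A$ is equivalent, from the point of view of every $\FS$ robot, to an execution admissible under $A_{LC}$, and vice versa, yielding $\FS^{A_{LC}}\equiv\FS^{A}$.

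For item (2), I use \MLCv\ as the separator: Lemma~\ref{lem:MLC-pos} gives $\MLCv\in\FS^{S}$, while Lemma~\ref{lem:MLC-nega} gives $\MLCv\notin\FS^{A_{M}}$. Combined with the trivial inclusion $\FS^{S}\geq\FS^{A_{M}}$ from the preliminary lemma, this yields $\FS^{S}>\FS^{A_{M}}$. For item (3), using $\FS^{S}\geq\FS^{A_{LC}}$ together with the equivalence $\FS^{A_{LC}}\equiv\FS^{A}$ from item (1), the strict separation $\FS^{S}>\FS^{A}$ proved in Theorem~\ref{th:domFSOBAMoverFSOBA} immediately transfers to $\FS^{S}>\FS^{A_{LC}}$. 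For item (4), I apply the same transfer in the opposite direction: Theorem~\ref{thm:TF-FS} states $\FS^{A_{M}}>\FS^{A}$, and combining with $\FS^{A_{LC}}\equiv\FS^{A}$ from item (1) yields $\FS^{A_{M}}>\FS^{A_{LC}}$. Alternatively, one can directly cite Lemma~\ref{lem:TF1} (TF$\in\OB^{A_{M}}\subseteq\FS^{A_{M}}$) and Lemma~\ref{lem:TF2} (TF$\notin\FS^{A_{LC}}$) to get the separator TF without going through $A$.

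I do not expect any serious obstacle: every nontrivial ingredient (the $\MLCv$ impossibility under $A_{M}$, the TF impossibility under $A_{LC}$, and the $A_{LC}$ versus $A$ indistinguishability for silent robots) is already available in the excerpt. The only subtlety worth being careful about in the write-up is item (1): one must check that no intermediate effect of a $\CP$ operation — in particular the new internal color — can leak through the $\LK$ of another $\FS$ robot, which is precisely guaranteed by the $\FS$ definition that robots can see only their own light.
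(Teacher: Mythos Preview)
Your proposal is correct and mirrors the paper's own proof essentially line for line: item (1) by the indistinguishability of $A_{LC}$ and $A$ for silent robots, item (2) via Lemmas~\ref{lem:MLC-pos} and~\ref{lem:MLC-nega}, item (3) from item (1) together with Theorem~\ref{th:domFSOBAMoverFSOBA}, and item (4) from item (1) together with Theorem~\ref{thm:TF-FS}. Your added justification for why $\CP$ is externally invisible in $\FS$ and the alternative direct route for item (4) via Lemmas~\ref{lem:TF1} and~\ref{lem:TF2} are fine elaborations, but the core argument is the same.
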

\begin{proof} {\bf 1.} holds because, by definition, $\FS$ robots cannot distinguish between $A_{LC}$ and $A$.
{\bf 2.}
follows  from follows from Lemmas~\ref{lem:MLC-pos} and~\ref{lem:MLC-nega}. 
{\bf 3.} follows from {\bf 1.} and Theorem \ref{th:domFSOBAMoverFSOBA}. {\bf 4.} follows from {\bf 1.} and
Theorem \ref{thm:TF-FS}.
\end{proof}

 \section{Relationship Between Models Under Asynchronous Schedulers}
In the previous sections, we have characterized the asynchronous landscape within each robot model. In
this section, we  determine the computational relationship between the different models under the  asynchronous schedulers $A_{LC}, A_M,A_{CM}$ and \ASY.

We do so by first determining the relationship
between  $\FC$ and the other models under the asynchronous schedulers; we then 
complete the characterization of the landscape by
establishing the still remaining relationships,
those between  $\FS$ and $\OB$.


\subsection{Relative power of \texorpdfstring{$\FC$}{FC}} 
In this section, we determine the relationship
between  $\FC$ and the other models under the asynchronous schedulers $A_{LC}, A_M,A_{CM}$ and \ASY.

We first  show  that
 $\FC^{A_{LC}}$ and $\FS^{A_{M}}$ are orthogonal. 
 To prove this result we use the existence of a problem,   {\tt Cyclic Circles}  (CYC),
 shown  in \cite{BFKPSW22} to be solvable in $\FC^{A}$ but not in $\FS^{S}$:
 
 \begin{lemma}\em{\cite{BFKPSW22}}\label{lem:CYC}
\begin{enumerate}
    \item  $\text{CYC} \not \in \FS^{S}$ 
    \item  $\text{CYC}  \in \FC^{A}$,  
     even under non-rigid-movement.
\end{enumerate}
\end{lemma}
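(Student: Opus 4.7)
The plan is to prove each of the two parts separately, drawing on the (presumed) structure of \textsc{CYC} as a task that inherently requires perpetual, cyclic coordination among the robots and does not admit a terminal configuration. Because the statement is attributed to \cite{BFKPSW22}, I would first recall the precise formulation of \textsc{CYC} from that paper; in what follows I only use the general shape suggested by its name---robots repeatedly traverse circular loci in a coordinated cyclic schedule---so that one robot at a time, or one group at a time, is ``active'' in an endlessly repeating order.

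For part~1, $\textsc{CYC}\notin\FS^{S}$, I would use a symmetry-preservation / indistinguishability argument. Start from an initial configuration that is invariant under a nontrivial rotation (e.g.\ robots at the vertices of a regular polygon, all carrying the same color and the same local frame up to the symmetry). Under \SSY, activate every robot in every round. Since $\FS$ robots are silent---each robot sees only positions, not the internal state of the others---the snapshots obtained by any two symmetric robots are congruent, so the algorithm outputs congruent destinations and identical state updates. By induction on the round, the symmetry is preserved forever. One then argues that the \textsc{CYC} predicate eventually requires an asymmetric arrangement (a single robot or a proper subset ``taking its turn''), contradicting the preserved symmetry.

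For part~2, $\textsc{CYC}\in\FC^{A}$ under non-rigid movement, I would exhibit a protocol that uses externally visible lights as a lightweight synchronization primitive. Each robot encodes in its color which ``slot'' of the cyclic schedule it currently occupies, plus a dedicated \emph{moving} color that it displays at the beginning of its Compute phase (before starting to move) and clears only after the move completes. A waiting robot advances to the next slot only when the observed colors certify that the current active robot has finished---i.e.\ no \emph{moving} color is visible and the expected ``I'm done'' color is. This handshake turns the \ASY\ scheduler into a virtual serial one for the purposes of \textsc{CYC}, and it is robust to non-rigid movement because progress is judged by colors, not by geometric completion.

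The main obstacle will be the impossibility side. Showing that no clever use of the constant internal memory of $\FS$ can break the symmetry requires a careful joint induction over \emph{positions and colors}: one must rule out not only geometric symmetry breaking but also any scheme in which different internal-state trajectories in symmetric robots cause eventual divergence. The standard device is to build, for any proposed algorithm, an execution under the semi-synchronous adversary in which symmetric robots remain in identical internal states at every round, so that the quotient dynamics cannot distinguish them. The positive side is comparatively routine once the moving-color trick and a fair round-robin over slots is in place; the delicate point there is verifying fairness when some robots may be repeatedly activated without ever being ``scheduled'' by the protocol, which is handled by letting their colors passively advertise readiness.
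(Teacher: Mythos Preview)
The paper does not prove this lemma at all: it is stated with the citation \cite{BFKPSW22} in the header and introduced by the sentence ``we use the existence of a problem, {\tt Cyclic Circles} (CYC), shown in \cite{BFKPSW22} to be solvable in $\FC^{A}$ but not in $\FS^{S}$.'' In other words, the paper treats both parts as black-box results imported from that reference, so there is no in-paper proof to compare your proposal against.

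That said, your proposal has a genuine gap independent of this: you explicitly do not know the definition of \textsc{CYC} and work only from ``the general shape suggested by its name.'' Both halves of your argument hinge on properties of the task that you are guessing at. For part~1, the symmetry-preservation skeleton is a standard and often successful device in $\FS$ impossibility proofs, and your observation that one must induct jointly on positions \emph{and} internal colors is correct; but the conclusion that \textsc{CYC} ``eventually requires an asymmetric arrangement'' is asserted, not derived, because you do not have the predicate. For part~2, the moving-color handshake you describe is again a standard $\FC$ technique, but whether it actually realizes the \textsc{CYC} specification---and in particular whether the geometric part of the task can be carried out under non-rigid movement using only the color information you propose---cannot be checked without the actual problem statement. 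To turn this into a proof you must first retrieve the precise definition of {\tt Cyclic Circles} from \cite{BFKPSW22} and then verify that (i) the predicate genuinely forces a configuration outside the orbit of your chosen symmetry, and (ii) your $\FC$ protocol satisfies every clause of that predicate.
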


We then consider the problem  {\tt Get Closer but Not too Close on Line} (GCNCL) defined as follows. 

\begin{definition}
{{\bf Get Closer but Not too Close on Line} (GCNCL):
Let $a$ and $b$ be two robots on distinct locations $a(0), b(0)$
where $r(t)$ denotes the position of  $r\in\{a,b\}$ at time $t\geq 0$.
This problem
requires the two robots to get closer, without ever increasing their distance on the
line connecting them, and eventually stop at distance at least $|a(0)-b(0)|/2$ from each other.}
\end{definition}

In other words, an algorithm solves GCNCL iff it satisfies the following predicate:
\begin{align*}
GCNCL \equiv
&\left(\forall t\geq 0: a(t),b(t)\in\overline{a(0)b(0)}\right )
\land \left(\forall t,t': 0 \le t \le t' \to d_{t} \ge d_{t'}\right)\\
&\land \left( \exists t: \frac{d_0}{2} \le d_t < d_0 \land (\forall t'\ge t: a(t)=a(t') \land b(t)=b(t') )\right),
\end{align*}
where $d_t$ is the distance between the two robots at time $t$, i.e., $d_t=|a(t)-b(t)|$.

\begin{lemma}
\label{lem:SMLS}\
\begin{enumerate}
    \item  $\text{GCNCL} \not \in \FC^{S}$.
    \item  $\text{GCNCL}  \in \FS^{A}$.
\end{enumerate}
\end{lemma}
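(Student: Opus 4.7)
For Part 2 ($\text{GCNCL}\in\FS^{A}$), my plan is to exhibit a direct finite-state algorithm. Give each robot a light from $\{W,D\}$, initially $W$. On activation, a $W$-robot observes the distance $d$ to the other robot, computes the destination at $d/4$ along the line connecting the two robots (toward the other), sets its light to $D$, and moves there; a $D$-robot does nothing. To verify correctness under $\ASY$ I will argue that each robot moves at most once, and because each robot only moves toward the other along the connecting line, (i) the positions stay inside $\overline{a(0)b(0)}$ and the distance is monotonically non-increasing, (ii) each robot's sole Look registers a distance at most $d_0$, so the combined displacement is at most $d_0/2$, leaving a final distance in $[d_0/2,d_0)$. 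Fairness guarantees that both robots are eventually activated, after which both have light $D$ and the configuration is terminal; since each Look occurs at a strictly positive distance, at least one positive move occurs, ensuring $d_t<d_0$ eventually.

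For Part 1 ($\text{GCNCL}\not\in\FC^{S}$), I plan to prove impossibility by combining variable disorientation with an asymmetric \SSY\ schedule. Given any algorithm $\mathcal{A}$ for GCNCL in $\FC^{S}$, have the adversary rescale every Look so the other robot appears at perceived distance exactly $1$. Under this manipulation, whenever a robot is activated and observes the other's color to be $c$, its real move is $F(c)\cdot d$ toward the other for some fixed $F:C\to[0,1]$, and its new light becomes $g(c)$ for some fixed $g:C\to C$; these functions depend only on $\mathcal{A}$ because an $\FC$ robot cannot see its own color, so its action is a function of what it sees alone.

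Now examine the execution under \FSY\ starting from initial light pair $(c_0,c_0)$. Symmetry is preserved forever and the lights evolve as $c_{t+1}=g(c_t)$, a trajectory that eventually enters a cycle. For $\mathcal{A}$ to solve GCNCL we must have $F\equiv 0$ on this cycle (otherwise the \FSY\ distance contracts geometrically below $d_0/2$) and at least one index $t$ with $F(c_t)>0$ (otherwise no robot ever moves, so the requirement $d_t<d_0$ is never met). Let $t^{\ast}$ be the smallest such index; note $F(c_{t^{\ast}})\in(0,1/2)$, since $F(c_{t^{\ast}})\ge 1/2$ would already force a collision in \FSY\ and thus violate GCNCL. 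The adversary then runs \FSY\ for $t^{\ast}$ rounds (no motion occurs, but the colors advance to $(c_{t^{\ast}},c_{t^{\ast}})$ with true distance still $d_0$), and afterwards activates only $q$ for many consecutive rounds. In each such round, $q$ sees $r$ at color $c_{t^{\ast}}$ and at perceived distance $1$ (because $r$ is idle and the adversary keeps rescaling), so $q$ shortens the true distance by the factor $1-F(c_{t^{\ast}})<1$; after finitely many activations the true distance falls below $d_0/2$, which monotonicity then perpetuates, contradicting the correctness of $\mathcal{A}$.

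The main obstacle, in my view, is precisely the impossibility direction: one must argue cleanly that variable disorientation reduces the robot's behavior to fixed functions $F,g$ of the observed color alone, and one must separately rule out the degenerate cases (either the $g$-orbit of $c_0$ carries $F\equiv 0$, in which case no motion ever happens and $d_t<d_0$ fails directly, or $F(c)\ge 1/2$ somewhere on the orbit, in which case \FSY\ itself witnesses the violation). Part 2 is routine once the two-color protocol is in hand.
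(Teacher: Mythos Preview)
Your proposal is correct and follows essentially the same approach as the paper: for Part~2 you give the identical two-color ``move once by $d/4$'' protocol, and for Part~1 you use the same core idea---advance both robots synchronously (with no motion) until an ``attractive'' color $c_{t^\ast}$ with $F(c_{t^\ast})>0$ appears, then freeze one robot and repeatedly activate the other so the distance shrinks geometrically below $d_0/2$.

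One small wrinkle to tighten: your side claim that $F(c_{t^\ast})\ge 1/2$ ``forces a collision in \FSY'' is not literally correct (e.g., $F=3/4$ makes the robots swap past each other and land at distance $d_0/2$ at the next round boundary, which is not yet a violation), and likewise the ``contracts geometrically'' justification for $F\equiv 0$ on the cycle fails when $F=1$. Fortunately neither claim is needed: once you have \emph{any} $t^\ast$ with $F(c_i)=0$ for $i<t^\ast$ and $F(c_{t^\ast})>0$, your asymmetric schedule already works for every $F(c_{t^\ast})\in(0,1]$ (and $F>1$ or $F<0$ is ruled out directly by the stay-on-segment and monotonicity clauses). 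Simply drop those two intermediate claims and the argument is clean.
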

\begin{proof}
{\bf 1.}  The impossibility of  $\FC^{S}$ can be obtained as follows.
Since we consider $\FC$, a robot computes its destination depending on the color of its opponent, not on its own color.
We say that a color $c$ is \emph{attractive} if a robot decides to move (i.e., not stay) 
when the color of the opponent is $c$.
The adversary can prevent the robots from solving GCNCL in the following way.
Initially, both robots have the same color. 
If that color is not attractive, the adversary keeps on simultaneously activating both robots until the color of the robots becomes attractive. During this period, no robot moves by the definition of attractive colors.
Note that an attractive color must appear eventually to solve GCNCL. 
From then on, the adversary keeps on activating only one robot, say $a$, while never activating $b$.
During this period, $b$ never changes its color, so the color of $b$ is always attractive. 
Because of variable disorientation, the adversary can guarantee that 
there is a fixed positive constant $c\le 1$ such that when $a$ is activated at time $t$, 
the resulting distance between $a$ and $b$ (after $a$ moves) is $c\cdot d_t = c \cdot |a(t)-b(t)|$.
(The robots immediately violate the specification of GCNCL if $c > 1$ or $c=0$.)
However, this implies that the distance between $a$ and $b$ converges to zero as $a$ moves repeatedly,
violating the specification of GCNCL. 

{\bf 2.} The problem is easily solvable with $\FS$ robot in \ASY. Let the robots have color $A$ initially. The first time a robot is activated, it moves closer by distance $d/4$ to the other and changes its color to $B$, where $d$ is the observed distance.
Whenever a robot is activated, if its color is $B$, it does not move.
Clearly, both robots eventually stop and their final distance is at least $d_0/2$.
\end{proof}

The orthogonality of $\FC^{A_{LC}}$ and $\FS^{A_{M}}$ (or $\FS^{A}$) then 
  follows from Lemmas~\ref{lem:CYC} and \ref{lem:SMLS}.

\begin{theorem}
\label{th:powerFCALC}\
\begin{enumerate}
     \item $\FC^{A_{LC}} \bot $ $\FS^{A_{M}}$
     \item $\FC^{A_{LC}} \bot $ $\FS^{A}$
    \item $\FC^{A_{LC}} >$ $\OB^{S}$
\end{enumerate}
\end{theorem}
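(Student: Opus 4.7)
The plan is to combine the two separator problems already at hand — \text{CYC} from Lemma~\ref{lem:CYC} and \text{GCNCL} from Lemma~\ref{lem:SMLS} — with the equivalence $\FC^{A_{LC}}\equiv\FC^{S}$ of Theorem~\ref{th:FCALC=FCS} and the standard monotonicity: restricting the scheduler enlarges the set of solvable problems, and adding a constant-size light to $\OB$ (giving $\FC$) can only enlarge it further. Nothing in the three statements requires a new construction; each reduces to a short citation chain.

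For statement~(1), I would exhibit witnesses in both directions. On one side, $\text{CYC}\in\FC^{A}\subseteq\FC^{A_{LC}}$ by Lemma~\ref{lem:CYC}(2) together with $\FC^{A_{LC}}\geq\FC^{A}$, while $\text{CYC}\notin\FS^{S}$ by Lemma~\ref{lem:CYC}(1), and hence $\text{CYC}\notin\FS^{A_{M}}$ via $\FS^{S}\geq\FS^{A_{M}}$; this gives $\text{CYC}\in\FC^{A_{LC}}\setminus\FS^{A_{M}}$. On the other side, $\text{GCNCL}\in\FS^{A}\subseteq\FS^{A_{M}}$ by Lemma~\ref{lem:SMLS}(2) and $\FS^{A_{M}}\geq\FS^{A}$, whereas $\text{GCNCL}\notin\FC^{S}\equiv\FC^{A_{LC}}$ by Lemma~\ref{lem:SMLS}(1) and Theorem~\ref{th:FCALC=FCS}; this gives $\text{GCNCL}\in\FS^{A_{M}}\setminus\FC^{A_{LC}}$. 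Statement~(2) is structurally the same argument with $A_{M}$ replaced by $A$ throughout: $\FS^{S}\geq\FS^{A}$ sends \text{CYC} out of $\FS^{A}$, and \text{GCNCL} is already in $\FS^{A}$ directly by Lemma~\ref{lem:SMLS}(2).

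For statement~(3), I would first establish $\FC^{A_{LC}}\geq\OB^{S}$ by the chain $\FC^{A_{LC}}\equiv\FC^{S}\geq\OB^{S}$: the equivalence is Theorem~\ref{th:FCALC=FCS}, and the inclusion is the trivial observation that any $\OB$ algorithm under \SSY\ is already an $\FC$ algorithm under \SSY\ that never reads or writes its light (so $\OB^{S}\subseteq\FC^{S}$). Strictness then comes from \text{CYC} once more: $\text{CYC}\in\FC^{A_{LC}}$ as above, while $\OB^{S}\leq\FS^{S}$ combined with Lemma~\ref{lem:CYC}(1) gives $\text{CYC}\notin\OB^{S}$.

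The main risk is purely bookkeeping: keeping straight that a more restricted scheduler, or a richer model, gives a \emph{larger} class of solvable problems, so that negative results under \SSY\ transfer downward to $A_{M}$ and $A$, while positive results under $A$ transfer upward to $A_{M}$ and $A_{LC}$. Once those directions are tracked carefully, each of the three items is immediate, and no new impossibility or simulation needs to be designed here — the real content has been discharged in Lemmas~\ref{lem:CYC}, \ref{lem:SMLS} and Theorem~\ref{th:FCALC=FCS}.
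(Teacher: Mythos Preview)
Your argument is correct and, for items (1) and (2), identical to the paper's: both invoke Lemmas~\ref{lem:CYC} and~\ref{lem:SMLS} exactly as you do, with the same monotonicity reasoning.

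For item (3) there is one small difference worth noting: the paper uses \textsc{Rendezvous} (RDV) as the separating problem --- solvable in $\FC^{S}$ but not in $\OB^{S}$ --- rather than CYC. Both choices work, and the surrounding structure (establish the inclusion via $\FC^{A_{LC}}\equiv\FC^{S}\geq\OB^{S}$, then exhibit a separator) is the same. Your choice of CYC is arguably tidier here since it reuses a lemma already needed for (1) and (2), whereas the paper appeals to the classical RDV impossibility from~\cite{SY}; the trade-off is that RDV is a more canonical and widely recognized separator for $\OB^{S}$.
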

 \begin{proof}
 {\bf 1.}- {\bf 2.} By Lemmas~\ref{lem:CYC} and \ref{lem:SMLS}.
 {\bf 3.} is proved by the fact that RDV can be solved by $\FC^S$ but not by $\OB^{S}$, and by the equivalence of $\FC^S$ and $\FC^{A_{LC}}$.
\end{proof}
 
 The following theorem shows  the relative power of $\FC^{A}$.
 
 \begin{theorem}
 \label{th:powerFCA}\
\begin{enumerate}
     \item $\FC^{A}(\equiv \FC^{A_{M}}) \bot $ $\FS^{A_{M}}$
     \item $\FC^{A} \bot $ $\FS^{A}$
     
     \item $\FC^{A} \bot $ $\OB^{S}$
    \item $\FC^{A} >$ $\OB^{A_{M}}$
\end{enumerate}
\end{theorem}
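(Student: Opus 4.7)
My plan is to derive all four parts from three previously-established separators---CYC (Lemma~\ref{lem:CYC}), GCNCL (Lemma~\ref{lem:SMLS}), and MLCv (Lemmas~\ref{lem:CNV-WOC1} and~\ref{lem:MLCv-imp})---together with the scheduler- and model-monotonicity relations packaged at the start of Section~\ref{sec:model} and the equivalence $\FC^{A_M} \equiv \FC^A$ from Corollary~\ref{co:FCACM=FCAM}. The only conceptual content is bookkeeping: $M_1^{K_1} \geq M_2^{K_2}$ unpacks to the set inclusion $M_1(K_1) \supseteq M_2(K_2)$, so unsolvability in the stronger setting propagates to every weaker one on the same $\geq$-chain, and solvability in the weaker setting propagates to every stronger one.

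For parts~1 and~2, I will use CYC for one direction of each orthogonality and GCNCL for the other. CYC lies in $\FC^A$ but, by Lemma~\ref{lem:CYC}, is excluded from $\FS^S$; the chain $\FS^S \supseteq \FS^{A_M} \supseteq \FS^A$ then rules it out of both $\FS^{A_M}$ and $\FS^A$. Conversely, GCNCL is solvable in $\FS^A$ and hence in the stronger $\FS^{A_M}$, while by Lemma~\ref{lem:SMLS} it is unsolvable in $\FC^S$; the chain $\FC^S \supseteq \FC^A$ then rules it out of $\FC^A$. For part~3, CYC continues to work in the ``up'' direction, since $\FS^S \supseteq \OB^S$ gives CYC $\notin \OB^S$; in the opposite direction I will use MLCv, which lies in $\OB^S$ by Lemma~\ref{lem:CNV-WOC1} and is excluded from $\FC^{A_M}$ by Lemma~\ref{lem:MLCv-imp}, the exclusion transferring to $\FC^A$ via Corollary~\ref{co:FCACM=FCAM}.

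Part~4 is the only claim asserting genuine dominance, and I will split it in two steps. For the inclusion $\FC^A \supseteq \OB^{A_M}$, I will chain $\FC^A \equiv \FC^{A_M}$ (Corollary~\ref{co:FCACM=FCAM}) with the model-monotonicity $\FC^{A_M} \supseteq \OB^{A_M}$. For strictness, CYC is again the witness: from $\OB^{A_M} \subseteq \FS^{A_M} \subseteq \FS^S$ together with CYC $\notin \FS^S$ we obtain CYC $\notin \OB^{A_M}$, while CYC $\in \FC^A$. The proof is essentially a transitive-closure argument over results already in hand; the only real pitfall is an error in the direction of $\supseteq$ when pushing (un)solvability along a chain, so I will anchor every step to a specific containment drawn either from the monotonicity lemma or from Corollary~\ref{co:FCACM=FCAM}.
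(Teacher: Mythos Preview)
Your proposal is correct and follows essentially the same route as the paper: the same three separators (CYC, GCNCL, MLCv) carry parts~1--3, and the same chain $\FC^{A}\equiv\FC^{A_M}\geq\OB^{A_M}$ gives the inclusion in part~4. The only deviation is the strictness witness in part~4: the paper uses Rendezvous (solvable in $\FC^{A}$, unsolvable in $\OB^{S}\supseteq\OB^{A_M}$), whereas you reuse CYC via $\OB^{A_M}\leq\FS^{S}$; both work and neither is materially simpler, so the difference is cosmetic.
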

 \begin{proof}
  {\bf 1.} (resp. {\bf 2.}) follows from Theorem~\ref{th:powerFCALC} {\bf 1.} (resp. {\bf 2.}) and noting that CYC can be solved in $\FC^{A}$. {\bf 3.} is proved by Lemmas~\ref{lem:CNV-WOC1} and~\ref{lem:MLCv-imp}(MLCv can be solved in $\OB^S$ but cannot be solved in $\FC^{A_M}$) and the fact that CYC cannot be solved in $\FS^S$ (and so $\OB^S$).  {\bf 4.} is proved by the equivalence of $\FC^{A_M}$ and $\FC^{A}$ and using the result of RDV.
 \end{proof}

\subsection{Completing the characterization:  \texorpdfstring{$\FS$}{FS} vs \texorpdfstring{$\OB$}{OB} }

The relationship between $\FC$ and the other models under the
asynchronous schedulers has been determined in the previous section (Theorems \ref{th:powerFCALC} and
\ref{th:powerFCA}). To complete the characterization of the relationship between the computational power of the models under the asynchronous schedulers, we need to determine the relationship between  $\FS$ and $\OB$.

 \begin{theorem}\label{th:powerFSAM}\
 \begin{enumerate}
     \item\label{FSAM-OBS} $\FS^{A_{M}} \bot$  $\OB^{S} $
      \item\label{FSAM-OBAM} $\FS^{A_{M}} > $ $\OB^{A_{M}} > \OB^{A}$
    \item\label{FSA-OBAM} $\FS^{A} \bot $ $\OB^{A_{M}}$
    \item\label{FSA-OBS} $\FS^{A} \bot $ $\OB^{S}$
    \item\label{FSA-OBA}  $\FS^{A} > $ $\OB^{A}$
\end{enumerate}
\end{theorem}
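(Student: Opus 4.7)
All five parts can be derived from three separator problems already introduced in the excerpt: $\MLCv$, TF, and GCNCL. Their relevant properties are $\MLCv \in \OB^S \setminus \FS^{A_M}$ (Lemmas~\ref{lem:CNV-WOC1} and~\ref{lem:MLC-nega}), $\text{TF} \in \OB^{A_M} \setminus \FS^{A_{LC}}$ (Lemmas~\ref{lem:TF1} and~\ref{lem:TF2}), and $\text{GCNCL} \in \FS^A \setminus \FC^S$ (Lemma~\ref{lem:SMLS}).

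The single auxiliary observation I would highlight is that, since $\OB \leq \FC$ and $\FC^A \leq \FC^S$, the chain $\OB^K \leq \FC^S$ holds for every scheduler $K$ under consideration. In particular, $\text{GCNCL} \notin \FC^S$ lifts to $\text{GCNCL} \notin \OB^K$ for all $K \in \{A, A_M, A_{LC}, A_{CM}, S\}$. Thus GCNCL serves as a uniform $\FS^A$-vs-$\OB^K$ separator on the ``$\FS$ exceeds $\OB$'' side of each of the five comparisons.

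Given these separators, the assembly is mechanical. For part~\ref{FSAM-OBS}, $\MLCv$ witnesses $\OB^S \not\subseteq \FS^{A_M}$, and GCNCL (together with $\FS^A \subseteq \FS^{A_M}$) witnesses $\FS^{A_M} \not\subseteq \OB^S$. For part~\ref{FSAM-OBAM}, $\OB^{A_M} > \OB^A$ is Theorem~\ref{thm:TFoblot}; the inclusion $\FS^{A_M} \geq \OB^{A_M}$ is immediate from $\FS \geq \OB$, and GCNCL supplies the required strictness. For part~\ref{FSA-OBAM}, TF witnesses $\OB^{A_M} \not\subseteq \FS^{A_{LC}} \equiv \FS^A$ (the equivalence being immediate since an $\FS$ robot cannot distinguish $A_{LC}$ from $A$), and GCNCL witnesses the reverse non-inclusion. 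For part~\ref{FSA-OBS}, since $\FS^A \leq \FS^{A_M}$ the impossibility $\MLCv \notin \FS^{A_M}$ propagates to $\MLCv \notin \FS^A$, giving $\OB^S \not\subseteq \FS^A$, while GCNCL gives the opposite direction. Part~\ref{FSA-OBA} is immediate: $\FS^A \geq \OB^A$ is trivial and GCNCL supplies strictness.

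The main difficulty is organizational rather than technical: one must pick, for each of the five cells, the correct pair of pre-existing separators. The only ``non-trivial'' observation is the chain $\OB^K \leq \FC^S$ that recycles GCNCL as a universal $\OB$-side witness; beyond this, no new constructions or arguments are required.
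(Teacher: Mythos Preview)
Your proof is correct. The only substantive difference from the paper's own argument is the choice of witness on the ``$\FS$ exceeds $\OB$'' side: the paper uses \emph{Rendezvous} (RDV), noting that $\text{RDV}\in\FS^{A}$ while $\text{RDV}\notin\OB^{S}$, whereas you use GCNCL together with the chain $\OB^{K}\subseteq\FC^{S}$ to deduce $\text{GCNCL}\notin\OB^{K}$ for every $K$. Both choices work equally well for all five items; your route has the minor advantage of being entirely self-contained in the lemmas stated in this paper (Lemma~\ref{lem:SMLS}), while the paper's use of RDV appeals to a classical fact stated in the introduction but not reproved here. On the ``$\OB$ exceeds $\FS$'' side (MLCv for parts~\ref{FSAM-OBS} and~\ref{FSA-OBS}, TF for part~\ref{FSA-OBAM}, and Theorem~\ref{thm:TFoblot} for part~\ref{FSAM-OBAM}) your argument matches the paper exactly.
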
 
 \begin{proof}
  Note that RDV can be solved in $\FS^A$ (and so $\FS^{A_M}$) but cannot be solved in $\OB^S$ (and so $\OB^{A_M}$ and $\OB^{A}$).
 {\bf \ref{FSAM-OBS}.} is proved by the results of RDV, 
 and MLCv, which can be solved in $\OB^{S}$ but cannot be solved in $\FS^{A_M}$ (Lemmas~\ref{lem:CNV-WOC1} and~\ref{lem:MLC-nega}). 
  {\bf \ref{FSAM-OBAM}.} is proved with the result of RDV and Theorem~\ref{thm:TFoblot}. {\bf \ref{FSA-OBAM}.} (resp.  {\bf \ref{FSA-OBS}.}) are proved with the result of RDV and TF (Lemmas~\ref{lem:TF1},~\ref{lem:TF2} and the equivalence of $\FS^{A_{LC}}$ and $\FS^A$) (resp. MLCv (Lemmas~\ref{lem:CNV-WOC1},~\ref{lem:MLC-nega} and Theorem~\ref{thm:TF-FS})). {\bf \ref{FSA-OBA}.} is proved by the result of RDV.
 \end{proof}
 

%
\section{Concluding Remarks}
\label{sec:conclusion}

In this paper, we investigated the computational relationship
between the power of the four models   $\OB$, $\FS$, $\FC$  and $\LU$,
under a range of asynchronous schedulers, from  \SSY\ to \ASY, and provided a complete characterization of such relationships.
In this process, we have  established a variety of results on the computational powers
of the robots in presence or absence of (limited) internal capabilities of memory persistence and/or communication. 
These results include  the proof of 
computational separation between  \SSY\ and \ASY\ in absence of either capability,
closing several important open questions.

This investigation has also provided valuable insights
into the elusive nature of the relationship between  asynchrony 
and the level of atomicity of  the $\LK$, $\CP$, and $\M$  operations performed in an $\mathit{LCM}$ cycle.
In fact, in this paper, the study of the asynchronous landscapes has focused on 
precisely the set of asynchronous schedulers defined by the different possible
atomic combinations of those operations as well as the $\M$  operation:
 starting from $LCM$-{\bf atomic}-\ASY,  which corresponds to \SSY, ending with \ASY, and including $LC$-{\bf atomic}-\ASY,  $CM$-{\bf atomic}-\ASY, and  $M$-{\bf atomic}-\ASY.

These results  open several new research directions. In particular, an important direction is the examination of other classes of asynchronous schedulers, to
further understand the nature of asynchrony for robots operating in $\mathit{LCM}$ cycles, identify the crucial factors that render asynchrony difficult for the robots,  and possibly discover new methods to overcome it.

%
%

\clearpage
\bibliographystyle{plain}
\bibliography{referencesorg}

\begin{thebibliography}{10}

\bibitem{AP}
N.~Agmon and D.~Peleg.
\newblock Fault-tolerant gathering algorithms for autonomous mobile robots.
\newblock {\em SIAM Journal on Computing}, 36(1):56--82, 2006.

\bibitem{AOSY}
H.~Ando, Y.~Osawa, I.~Suzuki, and M.~Yamashita.
\newblock A distributed memoryless point convergence algorithm for mobile robots with limited visivility.
\newblock {\em IEEE Transactions on Robotics and Automation}, 15(5):818--828, 1999.

\bibitem{sssbhagat17}
S.~Bhagat and K.~Mukhopadhyaya.
\newblock Optimum algorithm for mutual visibility among asynchronous robots with lights.
\newblock In {\em Proc. 19th Int. Symp. on Stabilization, Safety, and Security of Distributed Systems (SSS)}, pages 341--355, 2017.

\bibitem{BDT}
Z.~Bouzid, S.~Das, and S.~Tixeuil.
\newblock Gathering of mobile robots tolerating multiple crash faults.
\newblock In {\em the 33rd Int. Conf. on Distributed Computing Systems}, pages 334--346, 2013.

\bibitem{apdcm}
K.~Buchin, P.~Flocchini, I.~Kostitsyna, T.~Peters, N.~Santoro, and K.~Wada.
\newblock Autonomous mobile robots: Refining the computational landscape.
\newblock In {\em APDCM 2021}, pages 576--585, 2021.

\bibitem{BFKPSW22}
K.~Buchin, P.~Flocchini, I.~Kostitsyna, T.~Peters, N.~Santoro, and K.~Wada.
\newblock On the computational power of energy-constrained mobile robots: Algorithms and cross-model analysis.
\newblock In {\em Proc. 29th Int. Colloquium on Structural Information and Communication Complexity (SIROCCO)}, pages 42--61, 2022.

\bibitem{CG}
D.~Canepa and M.~Potop-Butucaru.
\newblock Stabilizing flocking via leader election in robot networks.
\newblock In {\em Proc. 10th Int. Symp. on Stabilization, Safety, and Security of Distributed Systems (SSS)}, pages 52--66, 2007.

\bibitem{CDN}
S.~Cicerone, Di~Stefano, and A.~Navarra.
\newblock Gathering of robots on meeting-points.
\newblock {\em Distributed Computing}, 31(1):1--50, 2018.

\bibitem{CFPS}
M.~Cieliebak, P.~Flocchini, G.~Prencipe, and N.~Santoro.
\newblock Distributed computing by mobile robots: Gathering.
\newblock {\em SIAM Journal on Computing}, 41(4):829--879, 2012.

\bibitem{CP}
R.~Cohen and D.~Peleg.
\newblock Convergence properties of the gravitational algorithms in asynchronous robot systems.
\newblock {\em {SIAM} J. on Computing}, 34(15):1516--1528, 2005.

\bibitem{DFPSY}
S.~Das, P.~Flocchini, G.~Prencipe, N.~Santoro, and M.~Yamashita.
\newblock Autonomous mobile robots with lights.
\newblock {\em Theoretical Computer Science}, 609:171--184, 2016.

\bibitem{LFCPSV}
G.A. {Di Luna}, P.~Flocchini, S.G. Chaudhuri, F.~Poloni, N.~Santoro, and G.~Viglietta.
\newblock Mutual visibility by luminous robots without collisions.
\newblock {\em Information and Computation}, 254(3):392--418, 2017.

\bibitem{DKKOW19}
S.~Dolev, S.~Kamei, Y.~Katayama, F.~Ooshita, and K.~Wada.
\newblock Brief announcement: Neighborhood mutual remainder and its self-stabilizing implementation of look-compute-move robots.
\newblock In {\em 33rd International Symposium on Distributed Computing}, pages 43:1--43:3, 2019.

\bibitem{FPS19}
P.~Flocchini, G.~Prencipe, and N.~Santoro (Eds).
\newblock {\em Distributed Computing by Mobile Entities}.
\newblock Springer, 2019.

\bibitem{FPS}
P.~Flocchini, G.~Prencipe, and N.~Santoro.
\newblock {\em Distributed Computing by Oblivious Mobile Robots}.
\newblock Morgan \& Claypool, 2012.

\bibitem{FPSW99}
P.~Flocchini, G.~Prencipe, N.~Santoro, and P.~Widmayer.
\newblock Hard tasks for weak robots: the role of common knowledge in pattern formation by autonomous mobile robots.
\newblock In {\em 10th Int. Symp. on Algorithms and Computation (ISAAC)}, pages 93--102, 1999.

\bibitem{FPSW05}
P.~Flocchini, G.~Prencipe, N.~Santoro, and P.~Widmayer.
\newblock Gathering of asynchronous robots with limited visibility.
\newblock {\em Theoretical Computer Science}, 337(1--3):147--169, 2005.

\bibitem{FPSW08}
P.~Flocchini, G.~Prencipe, N.~Santoro, and P.~Widmayer.
\newblock Arbitrary pattern formation by asynchronous oblivious robots.
\newblock {\em Theoretical Computer Science}, 407:412--447, 2008.

\bibitem{FSVY}
P.~Flocchini, N.~Santoro, G.~Viglietta, and M.~Yamashita.
\newblock Rendezvous with constant memory.
\newblock {\em Theoretical Computer Science}, 621:57--72, 2016.

\bibitem{FSW19}
P.~{Flocchini}, N.~{Santoro}, and K.~{Wada}.
\newblock On memory, communication, and synchronous schedulers when moving and computing.
\newblock In {\em Proc. 23rd Int. Conference on Principles of Distributed Systems (OPODIS)}, pages 25:1--25:17, 2019.

\bibitem{FYOKY}
N.~Fujinaga, Y.~Yamauchi, H.~Ono, S.~Kijima, and M.~Yamashita.
\newblock Pattern formation by oblivious asynchronous mobile robots.
\newblock {\em SIAM Journal on Computing}, 44(3):740--785, 2015.

\bibitem{GP}
V.~Gervasi and G.~Prencipe.
\newblock Coordination without communication: The case of the flocking problem.
\newblock {\em Discrete Applied Mathematics}, 144(3):324--344, 2004.

\bibitem{HDT}
A.~H\'{e}riban, X.~D\'{e}fago, and S.~Tixeuil.
\newblock Optimally gathering two robots.
\newblock In {\em Proc. 19th Int. Conference on Distributed Computing and Networking (ICDCN)}, pages 1--10, 2018.

\bibitem{ISKIDWY}
T.~Izumi, S.~Souissi, Y.~Katayama, N.~Inuzuka, X.~D\'{e}fago, K.~Wada, and M.~Yamashita.
\newblock The gathering problem for two oblivious robots with unreliable compasses.
\newblock {\em SIAM Journal on Computing}, 41(1):26--46, 2012.

\bibitem{KKNPS21}
D.~Kirkpatrick, I.~Kostitsyna, A.~Navarra, G.~Prencipe, and N.~Santoro.
\newblock Separating bounded and unbounded asynchrony for autonomous robots: Point convergence with limited visibility.
\newblock In {\em 40th Symposium on Principles of Distributed Computing {(PODC)}}. ACM, 2021.

\bibitem{OWD}
T.~Okumura, K.~Wada, and X.~D\'{e}fago.
\newblock Optimal rendezvous $\mathcal{L}$-algorithms for asynchronous mobile robots with external-lights.
\newblock In {\em Proc. 22nd Int. Conference on Principles of Distributed Systems (OPODIS)}, pages 24:1--24:16, 2018.

\bibitem{OWK}
T.~Okumura, K.~Wada, and Y.~Katayama.
\newblock Brief announcement: Optimal asynchronous rendezvous for mobile robots with lights.
\newblock In {\em Proc. 19th Int. Symp. on Stabilization, Safety, and Security of Distributed Systems (SSS)}, pages 484--488, 2017.

\bibitem{SABM18}
G.~{Sharma}, R.~Alsaedi, C.~Bush, and S.~Mukhopadyay.
\newblock The complete visibility problem for fat robots with lights.
\newblock In {\em Proc. 19th Int. Conference on Distributed Computing and Networking (ICDCN)}, pages 21:1--21:4, 2018.

\bibitem{SIW}
S.~Souissi, T.~Izumi, and K.~Wada.
\newblock Oracle-based flocking of mobile robots in crash-recovery model.
\newblock In {\em Proc. 11th Int. Symp. on Stabilization, Safety, and Security of Distributed Systems (SSS)}, pages 683--697, 2009.

\bibitem{SY}
I.~Suzuki and M.~Yamashita.
\newblock Distributed anonymous mobile robots: Formation of geometric patterns.
\newblock {\em SIAM Journal on Computing}, 28:1347--1363, 1999.

\bibitem{TWK}
S.~Terai, K.~Wada, and Y.~Katayama.
\newblock Gathering problems for autonomous mobile robots with lights.
\newblock {\em Theoretical Computer Science}, 941(4):241--261, 2023.

\bibitem{V}
G.~Viglietta.
\newblock Rendezvous of two robots with visible bits.
\newblock In {\em 10th Int. Symp. on Algorithms and Experiments for Sensor Systems, Wireless Networks and Distributed Robotics (ALGOSENSORS)}, pages 291--306, 2013.

\bibitem{YS}
M.~Yamashita and I.~Suzuki.
\newblock Characterizing geometric patterns formable by oblivious anonymous mobile robots.
\newblock {\em Theoretical Computer Science}, 411(26--28):2433--2453, 2010.

\bibitem{YUKY}
Y.~Yamauchi, T.~Uehara, S.~Kijima, and M.~Yamashita.
\newblock Plane formation by synchronous mobile robots in the three-dimensional euclidean space.
\newblock {\em J. ACM}, 64:3(16):16:1--16:43, 2017.

\end{thebibliography}

\end{document}